\title{Fine-Grained Complexity of Regular~Expression Pattern~Matching and Membership}
\author{Philipp Schepper}{
CISPA Helmholtz Center for Information Security, Saarbrücken, Germany
\and
Saarbrücken Graduate School of Computer Science, Saarland Informatics Campus, Saarbrücken, Germany}{philipp.schepper@cispa.saarland}{}{}
\authorrunning{P.~Schepper}
\keywords{Fine-Grained Complexity, Regular Expression, Pattern Matching, Dichotomy}
\definecolor{ownDarkblue}{cmyk}{1,0.5,0.0,0.72}
\definecolor{ownRed}{cmyk}{0.29,0.93,0.96,0.05}
\colorlet{own}{ownRed}
\colorlet{snd}{ownDarkblue!60}
\theoremstyle{plain}
\newtheorem{hypothesis}[theorem]{Hypothesis}
\DeclarePairedDelimiter{\bin}{\langle}{\rangle}
\let\abs\relax
\DeclarePairedDelimiter\abs{\lvert}{\rvert}
\let\originalleft\left
\let\originalright\right
\def\left#1{\mathopen{}\originalleft#1}
\def\right#1{\originalright#1\mathclose{}}
\newcommand{\FormSat}{\textsc{Formula-SAT}\xspace}
\newcommand{\FormPair}{\textsc{Formula-Pair}\xspace}
\newcommand{\SETH}{\textsc{Strong~Exponential~Time~Hypothesis}\xspace}
\newcommand{\FSH}{\textsc{Formula-SAT~Hypothesis}\xspace}
\newcommand{\FPH}{\textsc{Formula-Pair~Hypothesis}\xspace}
\newcommand{\OV}{\textsc{Orthogonal~Vectors}\xspace}
\newcommand{\BOV}{\textsc{Batch-OV}\xspace}
\newcommand{\Conc}{\ensuremath{\circ}}
\newcommand{\Or}{\ensuremath{\mid}}
\newcommand{\Star}{\ensuremath{\star}}
\newcommand{\Plus}{\ensuremath{+}}
\newcommand{\newextmathcommand}[2]{%
  \newcommand{#1}{\xspace\ensuremath{#2}\xspace}
}
\newcommand{\sqlog}[1]{\sqrt{\log{#1}}}
\newcommand{\deff}{\coloneqq}
\newcommand{\starred}[1]{\overset{\star}{#1}}
\newcommand{\QED}{\lipicsEnd}
\DeclareMathOperator{\poly}{poly}
\newcommand{\prob}[1]{\Pr\left[ #1 \right]}
\newextmathcommand{\SetB}{\{0,1\}}
\newcommand{\lang}{\mathcal L}
\newcommand{\match}{\mathcal M}
\renewcommand{\O}{\mathcal O}
\newcommand{\true}{\textsf{true}\xspace}
\def\ee{\mathrm{e}}
\begin{document}
\maketitle
\begin{abstract}
  The currently fastest algorithm for regular expression pattern matching and membership improves the classical $\O(nm)$ time algorithm by a factor of about $\log^{3/2}n$.
  Instead of focussing on general patterns we analyse homogeneous patterns of bounded depth in this work.
  For them a classification splitting the types in easy (strongly sub-quadratic) and hard (essentially quadratic time under SETH) is known.
  We take a very fine-grained look at the hard pattern types from this classification and show a dichotomy:
  few types allow super-poly-logarithmic improvements
  while the algorithms for the other pattern types can only be improved by a constant number of log-factors, assuming the \FSH.
\end{abstract}

\section{Introduction}\label{sec:intro}
Regular expressions with the operations alternative \Or, concatenation \Conc, Kleene Plus \Plus, and Kleene Star \Star\
are used in many fields of computer science.
For example to search in texts and files or to replace strings by other strings as the unix tool \texttt{sed} does.
But they are also used to analyse XML files \cite{DBLP:conf/vldb/LiM01,DBLP:conf/pods/Murata01},
for network analysis \cite{DBLP:conf/icde/JohnsonMR07,DBLP:conf/ancs/YuCDLK06},
human computer interaction \cite{DBLP:conf/chi/KinHDA12},
and in biology to search for proteins in DNA sequences \cite{10.1093/nar/20.11.2861,doi:10.1089/106652703322756140}.

The most intuitive problem for regular expressions is the \emph{membership} problem.
There we ask whether a given text $t$ can be generated by a given regular expression $p$, i.e.\ is $t\in \lang(p)$?
We also call $p$ a pattern in the following.
A similar problem is the \emph{pattern matching} problem, where we are interested whether some \emph{substring} of the given text $t$ can be matched by $p$.
To simplify notation we define the matching language of $p$ as $\match(p)\deff \Sigma^* \lang(p)\Sigma^*$.
Then we want to check whether $t \in \match(p)$.
The standard algorithm for both problems runs in time $\O(nm)$ where $n$ is the text length and $m$ the pattern size \cite{Thompson68}.

Based on the ``Four Russians'' trick Myers showed an algorithm with running time $\O(nm/\log n)$ \cite{Myers92}.
This result was improved to an $\O(nm \log\log n/\log^{3/2}n)$ time algorithm by Bille and Thorup \cite{BilleT09}.
Although for several special cases of pattern matching and membership improved sub-quadratic time algorithms have been given \cite{AhoC75,ColeH02,KnuthMP77},
it remained an open question whether there are truly sub-quadratic time algorithms for the general case.
The first conditional lower bounds were shown by Backurs and Indyk \cite{BackursI16}.
They introduced so-called homogenous patterns and classified their hardness into easy, i.e.\ strongly sub-quadratic time solvable,
and hard, requiring essentially quadratic time assuming the \SETH~(SETH).
This classification of Backurs and Indyk was completed by a dichotomy for all homogeneous pattern types
by Bringmann, Gr{\o}nlund, and Larsen \cite{BringmannGL17}.
They reduced the hardness of all hard pattern types to the hardness of few pattern types of bounded depth.
By this it was sufficient to check few cases instead of infinitely many.

To understand what a homogeneous pattern is, we observe that one can see patterns as rooted and node labeled trees
where the inner nodes correspond to the operations of the pattern.
Then a pattern is homogenous if the operations on each level of the tree are equal.
The type of the pattern is the sequence of operations from the root to the leaves.
See \cref{sec:prelim} for a formal introduction.

But as SETH rules out only polynomial improvements, super-poly-logarithmic runtime improvements are still feasible.
Such improvements are know for \OV (OV) \cite{AbboudWY15,ChanW16}, for example, although there is a known conditional lower bound based on SETH.
But for pattern matching and membership no faster algorithms are known.
By a reduction from \FormSat Abboud and Bringmann showed
that in general pattern matching and membership cannot be solved in time $\O(nm/\log^{7+\epsilon} n)$
under the \FSH \cite{AbboudB18}.

For \FormSat one is given a De~Morgan formula $F$ over $n$ inputs and size $s$,
i.e.\ the formula is a tree where each inner gate computes the AND or OR of two other gates
and each of the $s$ leaves is labeled with one of the $n$ variables or their negation.
The task is to find a satisfying assignment for $F$.
While the naive approach takes time $\O(2^n s)$ to evaluate $F$ on all possible assignments,
there are polynomial improvements for formulas of size $s=o(n^3)$ \cite{ChenKS14,KomargodskiRT13,Santhanam10}.
But despite intense research there is currently no faster algorithm known for $s=n^{3+\Omega(1)}$.
Thus it seem reasonable to assume the following hypothesis:
\begin{hypothesis}[\FSH~(FSH) \cite{AbboudB18}]\label{hypo:FSH}
  There is no algorithm that can solve \FormSat on De~Morgan formulas of size $s = n^{3+\Omega(1)}$ in $\O(2^n/n^\epsilon)$ time,
  for some $\epsilon > 0$, in the Word-RAM model.
\end{hypothesis}
Although the new lower bound of $\O(nm/\log^{7+\epsilon} n)$ is quite astonishing since before only polynomial improvements have been ruled out,
the bound is for the general case.
It remained an open question whether it also holds for homogeneous patterns of bounded depth.
Using the results by Bringmann, Gr{\o}nlund, and Larsen \cite{BringmannGL17}
relating the hardness of different pattern types to each other,
it suffices to check the pattern types in \cref{tab:intro:hard} for the corresponding problem.

\begin{table}[t]
  \caption{Hard pattern types that have to be considered.}
  \label{tab:intro:hard}
  \centering
  \begin{tabular}{l||l|l|l|l|l|r|r|l}
      Pattern matching &
      \multirow{2}{*}{\Conc\Star} &
      \multirow{2}{*}{\Conc\Or\Conc} &
      \multirow{2}{*}{\Conc\Or\Plus} &
      \multirow{2}{*}{\Conc\Plus\Conc} &
      \multirow{2}{*}{\Conc\Plus\Or} &
      \Or\Conc\Or &
      \Or\Conc\Plus & \\
      \cline{1-1}
      \cline{7-9}
      Membership & & & & & &
      \Plus\Or\Conc\Or &
      \Plus\Or\Conc\Plus &
      \Or\Plus\Or\Conc
  \end{tabular}
\end{table}

We answer this last question and give a dichotomy for these hard pattern types:
For few pattern types we give the currently fastest algorithm for pattern matching and membership.
For the remaining patterns we show improved lower bounds of the form $\Omega(nm/\log^c n)$.
Where $c$ is a ``small'' constant only depending on the type of the pattern that arises from our reductions.

\begin{theorem}\label{thm:main}
  For texts of length $n$ and patterns of size $m$ we have the following time bounds for the  stated problems:
  \begin{itemize}
    \item
    ${nm} / {2^{\Omega(\sqlog{\min(n,m)})}}$ for
    \Or\Conc\Or- and \Or\Conc\Plus-pattern matching, and
    \Plus\Or\Conc\Or- and \Plus\Or\Conc\Plus-membership

    \item
    $\Theta({nm} / {\poly \log n})$ for
    pattern matching and membership with types
    \Conc\Plus\Or, \Conc\Or\Plus, \Conc\Plus\Conc, \Conc\Or\Conc, and \Conc\Star\
    and for \Or\Plus\Or\Conc-membership,
    unless FSH is false.
  \end{itemize}
\end{theorem}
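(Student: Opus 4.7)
The statement has two parts: algorithmic upper bounds for four pattern types, and matching conditional lower bounds for six pattern types. I treat them separately.

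For the upper bounds, the plan is to reduce each of the four problems to \BOV, which admits an $N^2/2^{\Omega(\sqrt{\log N})}$ time algorithm via the polynomial method. A pattern of type \Or\Conc\Or\ is an alternation of branches, each a fixed-length concatenation of character classes; whether a given text window matches a given branch reduces to the check that no ``forbidden'' character occurs at any position, which is an orthogonality predicate on two binary vectors indexed by (position, character) pairs. Grouping text windows and pattern branches into blocks of suitable size thus yields one \BOV\ instance whose solution gives the matching answer. The type \Or\Conc\Plus\ additionally allows repeated characters at each position, which is handled by run-length-encoding the text before constructing the vectors. For the membership variants \Plus\Or\Conc\Or\ and \Plus\Or\Conc\Plus, I would run a left-to-right dynamic program over the text whose transitions are exactly \Or\Conc\Or- or \Or\Conc\Plus-matches on substrings, and batch these transitions into a single \BOV\ call of the same asymptotic size.

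For the lower bounds, the plan is an FSH-based reduction in the style of Abboud and Bringmann. I would first reduce \FormSat\ on a formula of $n_F$ variables and size $s = n_F^{3+\Omega(1)}$ to an intermediate \FormPair\ problem: given two families $\mathcal A,\mathcal B$ of $2^{n_F/2}$ partial assignments each, decide whether some $(\alpha,\beta) \in \mathcal A \times \mathcal B$ jointly satisfies $F$. Under FSH, \FormPair\ cannot be solved in $|\mathcal A| \cdot |\mathcal B| / \log^{c}(|\mathcal A| \cdot |\mathcal B|)$ time for any constant $c$. For each of the six pattern types in the second bullet, I would encode such an instance into a pattern-matching or membership instance: the text is a concatenation of ``assignment blocks'' for $\mathcal B$, and the pattern encodes the formula $F$ instantiated on each $\alpha \in \mathcal A$. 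The combinators at each level of the type dictate the encoding: AND-gates are implemented by \Conc, OR-gates by \Or, while \Plus\ and \Star\ furnish the repetitions needed to align the pattern with arbitrary text positions or to implement wild-card regions between gate gadgets. The outer combinator iterates over the assignments of $\mathcal A$.

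The main obstacle is size control in the lower bound reductions. FSH only rules out improvements by polynomial factors in $n_F$, yet we want to rule out $\log^c(nm)$-factor speedups for pattern matching, so each gate must blow up the instance by at most a polylogarithmic factor, and the pattern must stay of depth $3$ (or $4$ for the membership type \Or\Plus\Or\Conc) even though the formula has depth $\Theta(\log s)$. The standard trick is to flatten the formula into a constant-depth regular expression using disjoint alphabets for distinct gates and delimiter characters to enforce correct gate-to-gate alignment; the challenge is to verify that the restricted operator set of each type can still realise this flattening. Since the six types differ in which of $\{\Or, \Conc, \Plus, \Star\}$ is available at each level, each requires a dedicated gadget construction, although the underlying flattening idea is uniform. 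This type-by-type verification, while conceptually routine, is where most of the technical work lies.
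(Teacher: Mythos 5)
Your high-level strategy matches the paper's: reduce the four ``easy'' types to \BOV\ for the upper bounds, and reduce from \FormSat\ via \FormPair\ for the lower bounds. But the proposal passes over several technical obstacles that the paper must solve, and for at least two of them the natural approach you sketch would actually break.

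On the upper bounds, your encoding of a branch match as orthogonality of vectors indexed by ``(position, character) pairs'' gives dimension roughly $f\cdot\abs{\Sigma}$, where $f$ is the substring length under consideration. \BOV\ is only fast when the dimension is at most $2^{O(\sqlog{N})}$, and $\abs\Sigma$ can be $\Theta(\min(n,m))$, so a direct one-hot encoding blows the dimension budget. The paper solves this with a randomised Bloom-filter-style characteristic vector (\cref{lem:upper:concOr:cV}) that compresses character classes to dimension $O(f\log\abs\Sigma)$ at the cost of a small error probability. Two further issues are missing: (1) some sub-patterns have size $\gg f$ and simply cannot be packed into short OV vectors, which the paper handles by splitting into large/small sub-patterns and running the known near-linear-time \Conc\Or\ / \Conc\Plus\ algorithms on the large ones; and (2) for \Or\Conc\Plus, the set of matched substring intervals can already have size $\Theta(nm)$ (e.g.\ $t=0^n1^n$, $p=0^+1^+$), so ``run-length encoding the text'' is not enough---one needs a compressed representation $M'$ of the match set and a more careful graph construction for the membership DP. These points are not cosmetic: without them the stated running time is not achieved.

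On the lower bounds, your statement that ``\FormPair\ cannot be solved in $\abs{\mathcal A}\abs{\mathcal B}/\log^c$ time for any constant $c$'' overstates what FSH gives; the paper's FPH carries an explicit dependence on the formula size $s$ ($nm\,s^k/\log^{3k+2}n$), and the specific exponents $c_T$ in \cref{thm:lower:main} fall out of that dependence. More importantly, the phrase ``flatten the formula into a constant-depth regular expression using disjoint alphabets for distinct gates'' describes a different (and problematic) scheme: the paper insists on a constant-size alphabet, distinguishes gates by $\Theta(\log s)$-bit binary IDs inside separator gadgets, and encodes the formula \emph{recursively}, with a constant multiplicative blow-up per \emph{level} (not a polylogarithmic blow-up per gate, as you write). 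Since the number of gates is $\Theta(s)$, a polylogarithmic per-gate blow-up would be catastrophic. The whole thing is then made to work by first applying the Bonet--Buss depth reduction to bring the formula depth to $O(\log s)$, so that the $O(1)^{\text{depth}}$ blow-up is only $\poly(s)=\poly\log n$. Finally, for \Or\Plus\Or\Conc-membership the outer operator is not a concatenation, so the ``text is a concatenation of assignment blocks / pattern encodes the formula'' template does not apply directly; the paper gives a genuinely different dictionary-based encoding for this type, which your proposal does not account for.
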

This dichotomy result gives us a simple classification for the hard pattern types.
Depending on the pattern type one can decide if there is super-poly-logarithmic algorithm,
or if even the classical algorithm is optimal up to a constant number of log-factors.
See \cref{fig:intro:matching} for an overview of the results for pattern matching.
The corresponding figures for membership are shown in \cref{appendix:figures}.
\begin{figure}[t]
  \begin{tikzpicture}[ipe import]
    \node[ipe node, anchor=north west]
       at (324, 540) {
         \begin{minipage}{112bp}\kern0pt
           \color{own} $\frac{nm}{2^{\Omega(\sqlog{\min(n,m)})}}$ \\
           Thm.~\ref{thm:upper:main}
         \end{minipage}
       };
    \node[ipe node, anchor=north west]
       at (324, 636) {
         \begin{minipage}{112bp}\kern0pt
           \color{own} $\frac{nm}{2^{\Omega(\sqlog{\min(n,m)})}}$ \\
           Thm.~\ref{thm:upper:main}
         \end{minipage}
       };
    \node[ipe node, anchor=north west]
       at (324, 828) {
         \begin{minipage}{104bp}\kern0pt
           \color{own} $\Theta\left(\frac{nm}{\poly \log n}\right)$ \\
           Sec.~\ref{lower:concPlusConc}
         \end{minipage}
       };
    \node[ipe node, anchor=north west]
       at (324, 796) {
         \begin{minipage}{104bp}\kern0pt
           \color{snd} $\Theta\left(\frac{nm}{\poly \log n}\right)$ (\Conc\Star) \\
           Sec.~\ref{lower:concStar}, Lem.~\ref{lem:prelim:hardness}
         \end{minipage}
       };
    \node[ipe node, anchor=north west]
       at (324, 764) {
         \begin{minipage}{104bp}\kern0pt
           \color{own} $\Theta\left(\frac{nm}{\poly \log n}\right)$ \\
           Sec.~\ref{lower:concPlusOr}
         \end{minipage}
       };
    \draw
      (300, 784)
       -- (320, 784);
    \node[ipe node]
       at (312, 820) {\Conc};
    \node[ipe node]
       at (312, 788) {\Star};
    \node[ipe node]
       at (312, 756) {\Or};
    \node[ipe node, anchor=north west]
       at (224, 796) {
         \begin{minipage}{80bp}\kern0pt
           $\O(n \log^2 m + m)$ \\
           \cite{BackursI16}
         \end{minipage}
       };
    \node[ipe node, anchor=north west]
       at (324, 700) {
         \begin{minipage}{104bp}\kern0pt
           \color{own} $\Theta\left(\frac{nm}{\poly \log n}\right)$ \\
           Sec.~\ref{lower:concOrConc}
         \end{minipage}
       };
    \node[ipe node, anchor=north west]
       at (324, 668) {
         \begin{minipage}{104bp}\kern0pt
           \color{snd} $\Theta\left(\frac{nm}{\poly \log n}\right)$ (\Conc\Star) \\
           Sec.~\ref{lower:concStar}, Lem.~\ref{lem:prelim:hardness}
         \end{minipage}
       };
    \node[ipe node, anchor=north west]
       at (324, 732) {
         \begin{minipage}{104bp}\kern0pt
           \color{own} $\Theta\left(\frac{nm}{\poly \log n}\right)$ \\
           Sec.~\ref{lower:concOrPlus}
         \end{minipage}
       };
    \draw
      (300, 688)
       -- (320, 688);
    \node[ipe node]
       at (312, 692) {\Conc};
    \node[ipe node]
       at (312, 660) {\Star};
    \node[ipe node]
       at (312, 724) {\Plus};
    \node[ipe node, anchor=north west]
       at (224, 700) {
         \begin{minipage}{80bp}\kern0pt
           $\O(n \log^2 m + m)$ \\
           \cite{ColeH02}
         \end{minipage}
       };
    \node[ipe node, anchor=north west]
       at (224, 748) {
         \begin{minipage}{80bp}\kern0pt
           \color{own} $\Theta\left(\frac{nm}{\poly \log n}\right)$ \\
           Sec.~\ref{lower:concStar}
         \end{minipage}
       };
    \draw
      (200, 736)
       -- (220, 736);
    \node[ipe node, anchor=north west]
       at (152, 756) {
         \begin{minipage}{80bp}\kern0pt
           String\\Matching \\
           $\Theta(n+m)$ \\
           \cite{KnuthMP77}
         \end{minipage}
       };
    \node[ipe node]
       at (140, 740) {\Conc};
    \node[ipe node]
       at (212, 740) {\Star};
    \node[ipe node]
       at (212, 788) {\Plus};
    \node[ipe node]
       at (212, 692) {\Or};
    \draw
      (148, 736)
       -- (136, 736);
    \node[ipe node, anchor=north west]
       at (152, 672) {
         \begin{minipage}{100bp}\kern0pt
           Complete Subtree \\
           $\Theta(n+m)$ \\
           immediate
         \end{minipage}
       };
    \node[ipe node, anchor=north west]
       at (152, 824) {
         \begin{minipage}{80bp}\kern0pt
           Simplifies \\
           Lem.~\ref{lem:prelim:simplification}
         \end{minipage}
       };
    \node[ipe node, anchor=north west]
       at (324, 588) {
         \begin{minipage}{104bp}\kern0pt
           \color{snd} $\Theta\left(\frac{nm}{\poly \log n}\right)$ (\Conc\Star) \\
           Sec.~\ref{lower:concStar}, Lem.~\ref{lem:prelim:hardness}
         \end{minipage}
       };
    \draw
      (272, 576)
       -- (320, 576);
    \node[ipe node]
       at (312, 580) {\Star};
    \node[ipe node]
       at (312, 628) {\Plus};
    \node[ipe node]
       at (312, 532) {\Or};
    \draw
      (220, 576)
       -- (200, 576);
    \node[ipe node, anchor=north west]
       at (224, 596) {
         \begin{minipage}{96bp}\kern0pt
           Dictionary\\Matching \\
           $\Theta(n+m)$ \\
           \cite{AhoC75}
         \end{minipage}
       };
    \node[ipe node, anchor=north west]
       at (152, 580) {
         \begin{minipage}{80bp}\kern0pt
           $\Theta(n+m)$
         \end{minipage}
       };
    \node[ipe node]
       at (140, 580) {\Or};
    \node[ipe node]
       at (212, 516) {\Star};
    \node[ipe node]
       at (212, 580) {\Conc};
    \node[ipe node]
       at (212, 628) {\Plus};
    \node[ipe node, anchor=north west]
       at (224, 636) {
         \begin{minipage}{96bp}\kern0pt
           Simplifies \\
           Lem.~\ref{lem:prelim:simplification}
         \end{minipage}
       };
    \node[ipe node, anchor=north west]
       at (224, 528) {
         \begin{minipage}{96bp}\kern0pt
           Complete Subtree \\
           $\Theta(n+m)$ \\
           \cite{BringmannGL17}
         \end{minipage}
       };
    \node[ipe node, anchor=north west]
       at (440, 540) {
         \begin{minipage}{108bp}\kern0pt
           \color{snd} $\Theta\left(\frac{nm}{\poly \log n}\right)$ (\Conc\Or\Conc) \\
           Sec.~\ref{lower:concOrConc}, Lem.~\ref{lem:prelim:hardness}
         \end{minipage}
       };
    \node[ipe node, anchor=north west]
       at (440, 508) {
         \begin{minipage}{108bp}\kern0pt
           \color{snd} $\Theta\left(\frac{nm}{\poly \log n}\right)$ (\Conc\Star) \\
           Sec.~\ref{lower:concStar}, Lem.~\ref{lem:prelim:hardness}
         \end{minipage}
       };
    \node[ipe node, anchor=north west]
       at (440, 572) {
         \begin{minipage}{108bp}\kern0pt
           \color{snd} $\Theta\left(\frac{nm}{\poly \log n}\right)$ (\Conc\Or\Plus) \\
           Sec.~\ref{lower:concOrPlus}, Lem.~\ref{lem:prelim:hardness}
         \end{minipage}
       };
    \draw
      (392, 528)
       -- (436, 528);
    \node[ipe node]
       at (428, 532) {\Conc};
    \node[ipe node]
       at (428, 500) {\Star};
    \node[ipe node]
       at (428, 564) {\Plus};
    \node[ipe node, anchor=north west]
       at (440, 668) {
         \begin{minipage}{108bp}\kern0pt
           \color{snd} $\Theta\left(\frac{nm}{\poly \log n}\right)$ (\Conc\Plus\Conc) \\
           Sec.~\ref{lower:concPlusConc}, Lem.~\ref{lem:prelim:hardness}
         \end{minipage}
       };
    \node[ipe node, anchor=north west]
       at (440, 636) {
         \begin{minipage}{108bp}\kern0pt
           \color{snd} $\Theta\left(\frac{nm}{\poly \log n}\right)$ (\Conc\Star) \\
           Sec.~\ref{lower:concStar}, Lem.~\ref{lem:prelim:hardness}
         \end{minipage}
       };
    \node[ipe node, anchor=north west]
       at (440, 604) {
         \begin{minipage}{108bp}\kern0pt
           \color{snd} $\Theta\left(\frac{nm}{\poly \log n}\right)$ (\Conc\Plus\Or) \\
           Sec.~\ref{lower:concPlusOr}, Lem.~\ref{lem:prelim:hardness}
         \end{minipage}
       };
    \draw
      (392, 624)
       -- (436, 624);
    \node[ipe node]
       at (428, 660) {\Conc};
    \node[ipe node]
       at (428, 628) {\Star};
    \node[ipe node]
       at (428, 596) {\Or};
    \draw
      (148, 656)
       -- (136, 656);
    \node[ipe node]
       at (140, 820) {\Plus};
    \node[ipe node]
       at (140, 660) {\Star};
    \draw
      (148, 816)
       -- (136, 816);
    \draw
      (436, 656)
       -- (424, 656)
       -- (424, 592)
       -- (436, 592);
    \draw
      (436, 560)
       -- (424, 560)
       -- (424, 496)
       -- (436, 496);
    \draw
      (320, 624)
       -- (308, 624)
       -- (308, 528)
       -- (320, 528);
    \draw
      (320, 816)
       -- (308, 816)
       -- (308, 752)
       -- (320, 752);
    \draw
      (320, 720)
       -- (308, 720)
       -- (308, 656)
       -- (320, 656);
    \draw
      (220, 784)
       -- (208, 784)
       -- (208, 688)
       -- (220, 688);
    \draw
      (220, 624)
       -- (208, 624)
       -- (208, 512)
       -- (220, 512);
    \draw
      (136, 576)
       -- (148, 576);
  \end{tikzpicture}
  \caption{The classification of the patterns for pattern matching.
  The red bounds are shown in this paper while the blue ones follow as corollaries.}
  \label{fig:intro:matching}
\end{figure}
Further, the dichotomy shows that the type of a pattern has a larger impact on the hardness than the depth.
The alternative as outer operation of the ``easier'' patterns allows us to split the pattern into independent sub-patterns.
This is crucial for the speed-up since pattern matching for \Conc\Plus\ and \Conc\Or\ is near-linear time solvable \cite{BackursI16,ColeH02}.
Contrary almost all hard pattern types have a concatenation as outer operation which does not allow this decomposition into independent problems.
Further, the length of the matched texts can vary largely.
The pattern $(a \Or aba)(b\Or bca)(a\Or ab)$, for example, can match strings of length 3 to 8.
We exploit both properties in our reductions, especially to encode a boolean OR.

In \cref{sec:prelim} we give a formal definition of homogeneous patterns
and state the problems we start reducing from and the ones we reduce to.
We show the algorithms for the upper bounds in \cref{sec:upper}.
In \cref{sec:lowerMatch} we give the improved lower bounds for pattern matching
while the ones for membership are given in \cref{sec:lowerMemb}.

\section{Preliminaries}\label{sec:prelim}

\subparagraph*{Regular Expressions.}
Recall, that patterns over a finite alphabet $\Sigma$ are build recursively from other patterns using the operations \Or, \Conc, \Plus, and \Star.
We construct the patterns and the language of each pattern (i.e.\ the set of words matched by the pattern) as follows.
Each symbol $\sigma \in \Sigma$ is a pattern representing the language $\lang(\sigma) = \{\sigma\}$.
Let in the following $p_1$ and $p_2$ be two patterns.
For the alternative operation we define $\lang(p_1 \Or p_2) = \lang(p_1) \cup \lang(p_2)$.
For the concatenation we define $\lang(p_1 \Conc p_2) = \{w_1w_2 \mid w_1 \in \lang(p_1) \land w_2 \in \lang(p_2)\}$.
For the Kleene Plus we set $\lang(p_1^+) = \{w \mid \exists k\ge 1: \exists w_1, \dots, w_k \in \lang(p_1): w = w_1 \cdots w_k\}$.
With $\varepsilon$ as the empty word we have $\lang(p_1^\Star) = \lang(p_1^+) \cup \{\varepsilon\}$ for the Kleene Star.

Based on this construction it is easy to see patterns as rooted and node-labeled trees
where each inner node is labeled by an operation and the leaves are labeled by symbols.
We call this tree the \emph{parse tree} of a pattern in the following.
Then each node is connected to the node representing the sub-pattern $p_1$ and also for $p_2$ in the case of the binary operations \Conc\ and \Or.
We define the \emph{size} of a pattern to be the number of inner nodes plus the number of leaves in the parse tree.
We extend the definition of the alternative and the concatenation in the natural way to more than two sub-patterns.
To simplify notation we omit the symbol \Conc\ from the patterns in the following.

We call a pattern \emph{homogeneous} if for each level of the parse tree, all inner nodes are labeled with the same operation.
We define the \emph{type} of a homogeneous pattern $p$ to be the sequence of operations from the root of the parse tree of $p$ to the deepest leaf.
The \emph{depth} of a pattern is the depth of the tree, which is equal to the number of operations in the type.
For example, the pattern $[(abc \Or c) (a \Or dc)c(db \Or c \Or bd)]^+$ is of type \Plus\Conc\Or\Conc\ and has depth 4.

\subparagraph*{Relations between Pattern Types.}
Backurs and Indyk showed in \cite{BackursI16} the first quadratic time lower bound for several homogeneous patterns based on SETH.
This classification was completed by the dichotomy result of Bringmann, Gr{\o}nlund, and Larsen in \cite{BringmannGL17}.
As there are infinitely many homogeneous pattern types, they showed linear-time reductions between different pattern types.
By these reductions lower bounds also transfer to other (more complicated) pattern types
and faster algorithms also give improvements for other (equivalent) patterns.
\begin{lemma}[Lemma~1 and~8 in the full version of \cite{BringmannGL17}]\label{lem:prelim:simplification}
  For any type $T$, applying any of the following rules yields a type $T'$
  such that both are equivalent for pattern matching and membership under linear-time reductions, respectively:
  \begin{itemize}
    \item For pattern matching:
    remove prefix \Plus\ and
    replace prefix \Or\Plus\ by \Or.
    \item For membership:
    replace any substring \Plus\Or\Plus\ by \Plus\Or\ and
    replace prefix $r\Star$ by $r\Plus$ for any $r \in \{\Plus, \Or \}^*$.
    \item For both problems: replace any substring $pp$, for any $p \in \{ \Conc, \Or, \Star, \Plus \}$, by $p$.
  \end{itemize}
  We say that $T$ \emph{simplifies} if one of these rules applies.
  Applying these rules in any order will eventually lead to an unsimplifiable type.
\end{lemma}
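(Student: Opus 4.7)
The plan is to prove each of the simplification rules separately by exhibiting a linear-time, language-preserving transformation in both directions between patterns of type $T$ and patterns of type $T'$, and then to exhibit a well-founded measure that strictly decreases with every rule application.

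First I would dispatch "$pp \to p$" for any $p \in \{\Conc, \Or, \Plus, \Star\}$: an inner node labelled $p$ whose parent is also labelled $p$ can be absorbed into the parent without changing $\lang$, by the associativity and idempotence properties of these four operations; conversely, one can insert a dummy unary copy of $p$ in the reverse direction. The matching-specific rules then follow from the identity $\match(p) = \Sigma^* \lang(p) \Sigma^*$. For "remove prefix~$\Plus$" write the pattern as $q^+$; the inclusion $\match(q) \subseteq \match(q^+)$ is immediate from $\lang(q) \subseteq \lang(q^+)$, and the reverse holds because every occurrence of a word from $\lang(q^+)$ inside a text contains an occurrence of a word from $\lang(q)$. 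The rule "prefix~$\Or\Plus$ $\to$ $\Or$" then follows by distributing $\match(\cdot)$ over the outer union and applying the previous identity to each alternative.

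For the membership rule "$\Plus\Or\Plus \to \Plus\Or$" applied to a sub-pattern $(q_1^+\Or\cdots\Or q_k^+)^+$ I would verify
\[
\bigl(\lang(q_1^+)\cup\cdots\cup\lang(q_k^+)\bigr)^+ = \bigl(\lang(q_1)\cup\cdots\cup\lang(q_k)\bigr)^+,
\]
which holds because any decomposition of a word on the left into factors from some $\lang(q_{i_j}^+)$ can be refined into a decomposition into factors from $\lang(q_{i_j})$, and the reverse inclusion is immediate from $\lang(q_i) \subseteq \lang(q_i^+)$.

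The main obstacle is the rule "prefix~$r\Star \to r\Plus$" with $r \in \{\Plus, \Or\}^*$, since a $\Star$ buried under several $\Plus/\Or$-layers contributes $\varepsilon$ that must be propagated upwards through the outer operators. I would proceed by induction on $|r|$, proving that the patterns $p$ and $p'$ (differing only at the affected node) satisfy $\lang(p) = \lang(p') \cup \{\varepsilon\}$: the base case $r = \varepsilon$ is immediate from $\lang(q^\Star) = \lang(q^+) \cup \{\varepsilon\}$, the inductive step for an outer $\Or$ just distributes the union, and the step for an outer $\Plus$ uses the identity $(A \cup \{\varepsilon\})^+ = A^+ \cup \{\varepsilon\}$. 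For membership on a text $t$ this yields the desired equivalence: the case $t = \varepsilon$ is decided in $\O(1)$, and for non-empty $t$ we have $t \in \lang(p) \Leftrightarrow t \in \lang(p')$. Termination finally follows from the lexicographic measure $(\#\Star\text{ in }T,\, |T|)$: every rule except "prefix~$r\Star \to r\Plus$" strictly decreases $|T|$ without increasing the number of $\Star$-symbols, while "prefix~$r\Star \to r\Plus$" strictly decreases the number of $\Star$-symbols without changing $|T|$, so any simplification sequence is finite and must end at an unsimplifiable type.
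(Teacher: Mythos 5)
The paper does not prove this lemma but cites it verbatim from \cite{BringmannGL17} (Lemmas~1 and~8 of the full version), so there is no in-paper argument to compare against; I therefore assess your reconstruction on its own. Your proof is essentially sound. The language identities you invoke --- $\match(q^+)=\match(q)$, distributivity of $\match$ over $\Or$, the refinement identity $\bigl(\lang(q_1^+)\cup\cdots\cup\lang(q_k^+)\bigr)^+=\bigl(\lang(q_1)\cup\cdots\cup\lang(q_k)\bigr)^+$, and the inductive claim $\lang(p)=\lang(p')\cup\{\varepsilon\}$ via $(A\cup\{\varepsilon\})^+=A^+\cup\{\varepsilon\}$ --- are all correct, and the lexicographic measure $(\#\Star,\,\abs{T})$ does strictly decrease under every rule, giving termination. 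Two points deserve a bit more care. First, the backward direction of the $pp\to p$ rule for $p\in\{\Or,\Conc\}$: ``inserting a dummy unary copy'' presupposes that unary alternatives and unary concatenations are admissible patterns (the paper does allow unary alternatives, e.g.\ in Section~\ref{lower:concOrPlus}); if one insists on binary operations, a duplication trick is needed, and you should state which convention you use. Second, for $r\Star\to r\Plus$ note that when $r$ contains $\Or$ there are several $\Star$-nodes at depth $\abs{r}$ (one per branch) that are simultaneously replaced; your induction handles this correctly, but it is worth saying explicitly. Moreover the $t=\varepsilon$ case must be decided by a separate $\O(\abs{p})$ check in \emph{both} directions of the membership reduction, not only in the $T\to T'$ direction, since $\varepsilon\in\lang(p)$ holds unconditionally while $\varepsilon\in\lang(p')$ need not.
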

\begin{lemma}[Lemma~6 and~9 in the full version of \cite{BringmannGL17}]\label{lem:prelim:hardness}
  For types $T$ and $T'$, there is a linear-time reduction from $T$-pattern~matching/membership to $T'$-pattern~matching/membership if one of the following sufficient conditions holds:
  \begin{itemize}
    \item $T$ is a prefix of $T'$,
    \item we may obtain $T'$ from $T$ by replacing a \Star\ by \Plus\Star,
    \item we may obtain $T'$ from $T$ by inserting a \Or\ at any position,
    \item only for membership: $T$ starts with \Conc\ and we may obtain $T'$ from $T$ by prepending a \Plus\ to $T$.
  \end{itemize}
\end{lemma}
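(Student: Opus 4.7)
The plan is to establish each of the four conditions separately by exhibiting an explicit linear-time reduction that produces a pattern of type $T'$ (and, where needed, a padded text) preserving the decision. For the second bullet, I would exploit the identity $\lang(q^\Star) = \lang((q^\Star)^+)$: at every \Star-node on the designated level of the parse tree insert a single-child \Plus-node directly above it, leaving the text unchanged; the new level is uniformly labelled \Plus\ and the sequence of operations in the type becomes $\ldots\Plus\Star\ldots$. For the third bullet, I would insert an entire level of \Or-nodes at the required depth, each having two identical children equal to the original subtree rooted at that position; the identity $\lang(c) = \lang(c \Or c)$ preserves the language while producing a homogeneous \Or-layer.

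For the fourth bullet, assume $T$ starts with \Conc\ and write the $T$-pattern as $p = p_1 \cdots p_k$. Pick a fresh symbol $\$$, set $t' \deff \$\,t\,\$$, and $p' \deff (\$\,p_1 \cdots p_k\,\$)^+$. Any decomposition of $t'$ into factors from $\lang(\$\,p\,\$)$ contributes exactly two $\$$'s per factor, while $t'$ contains exactly two, so a single factor is used; hence $t' \in \lang(p') \iff t \in \lang(p)$. The parse tree of $p'$ consists of a new \Plus-root with a single \Conc-child whose children are the two fresh $\$$-leaves together with $p_1, \ldots, p_k$, yielding type $\Plus\cdot T$ in linear time, and the construction relies on $T$ starting with \Conc\ so that the $\$$-leaves can be absorbed into the existing \Conc-level without breaking homogeneity.

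The first bullet is the most delicate. Writing $T' = T \cdot s$ with a nonempty suffix $s$, the plan is to pick a fresh symbol $\#$, replace each text character $c$ by a padded image $\hat c = c\#^L$ for a sufficiently large $L = L(s)$, and replace each leaf $\sigma$ of $p$ by a subpattern $q_\sigma$ of type $s$ whose language is exactly $\{\hat\sigma\}$. The subpattern $q_\sigma$ is constructed top-down through $s$: at a \Conc-level one distributes the required literal children, at an \Or-level one duplicates the realized alternative, and at a \Plus- or \Star-level one wraps the intended subword so that any putative second iteration is ruled out by the $\#$-padding of the text. The main obstacle is precisely these Kleene levels, which would in principle admit spurious multi-iteration matches of the wrong length; the abundant $\#$-padding is designed to preclude them, and the same padding locks the matched region to the padding grid of $t'$, which is what makes the reduction work for pattern matching as well as membership.
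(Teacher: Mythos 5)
The paper does not give a proof of this lemma --- it is explicitly imported as Lemma~6 and~9 from the full version of Bringmann, Gr{\o}nlund, and Larsen~\cite{BringmannGL17}, so there is no in-paper argument to compare yours against. What follows therefore assesses your sketch on its own terms.

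Your arguments for the second, third, and fourth bullets are sound. Inserting a unary~\Plus\ above each~\Star\ uses $\lang(q^\Star)=\lang((q^\Star)^+)$ correctly and uniformly across a level, so both the language and homogeneity are preserved. Inserting a layer of~\Or-nodes with two identical children preserves the language and produces a homogeneous~\Or-level. Your gadget for the membership-only fourth bullet is also correct: with $t'=\$\,t\,\$$ and $p'=(\$\,p_1\cdots p_k\,\$)^+$, the fresh symbol~$\$$ forces exactly one repetition, and the two new $\$$-leaves sit at the top~\Conc-level, which is exactly why the hypothesis that~$T$ starts with~\Conc\ is needed to preserve homogeneity.

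The first bullet is where your sketch breaks down, and it is the technically hard part of the cited lemma. You state the goal as replacing each pattern leaf~$\sigma$ by a sub-pattern~$q_\sigma$ of type~$s$ with $\lang(q_\sigma)=\{\hat\sigma\}$, $\hat\sigma=\sigma\#^L$. This is impossible once~$s$ contains a~\Plus\ or~\Star: a homogeneous pattern whose type contains a Kleene operation has a language closed under repetition of a non-empty block and cannot be a non-trivial singleton, and this is not merely a verification gap that padding can paper over. Concretely, if $s=\Plus$, then any sub-pattern of type~$s$ has the shape $\alpha^+$ for a \emph{single} symbol~$\alpha$, so it cannot even contain the string $\hat\sigma=\sigma\#^L$ in its language for any $L\ge 1$, independently of how the text is padded. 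A second problem is homogeneity: since the type is determined by the \emph{deepest} leaf, a homogeneous~$T$-pattern may have leaves at shallower depths; replacing every leaf by a depth-$|s|$ sub-pattern turns a shallow leaf into an inner node labelled $s_1$ sitting at a level whose other inner nodes already carry a different operation from~$T$, breaking homogeneity. For example $p=(a\Or b)\,c\,(d\Or e)$ of type~\Conc\Or\ with $s=\Plus$ yields an~\Or-node and a~\Plus-node at the same level. Any correct construction for the prefix case must instead interleave fresh padding gadgets into the~\Conc- and~\Or-levels of the pattern (not only at the leaves), handle shallow leaves and deep leaves differently, and distinguish cases on the operations appearing in~$T$ and~$s$; the sketch as written does not recover this.
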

Together with the already known sub-quadratic time algorithms for various pattern types \cite{AhoC75,BackursI16,BringmannGL17,ColeH02,KnuthMP77},
it suffices to check the remaining cases in \cref{tab:intro:hard} to get a fine-grained dichotomy for the hard pattern types
(i.e.\ the ones requiring essentially quadratic time under SETH).

\subparagraph*{Hypothesis.}
As mentioned in the introduction,
we follow the ideas of Abboud and Bringmann in \cite{AbboudB18} and show reductions from \FormSat to pattern matching to prove lower bounds.
Likewise as in their result, we also start from the intermediate problem \FormPair:
Given a \emph{monotone} De~Morgan formula $F$ with size $s$,
that is a De~Morgan formula where each leaf is labeled with a variable, i.e.\ no negation allowed, and each variable is used only once.
Further, one is given two sets $A,B$ of half-assignments to $s/2$ variables of $F$ with $\abs{A}=n$ and $\abs{B}=m$.
The task is to find a pair $a \in A, b \in B$ such that $F(a,b)=\true$.

There is an intuitive reduction from \FormSat to \FormPair as shown in \cite{AbboudB18}.
Thus, FSH implies the following hypothesis, which we prove in \cref{appendix:FSHtoFPH}:
\begin{hypothesis}[\FPH~(FPH)]\label{hypo:FPH}
  For all $k\ge 1$,
  there is no algorithm that can solve \FormPair for a monotone De~Morgan formula $F$ of size $s$
  and sets $A,B \subseteq \SetB^{s/2}$ of size $n$ and $m$, respectively,
  in time $\O({n m s^k} / {\log^{3k+2}n})$
  in the Word-RAM model.
\end{hypothesis}

\subparagraph*{\BOV.}
For the upper bounds we transform texts and patterns into bit-vectors such that they are orthogonal if and only if the text is matched by the pattern.
This gives us a reduction from pattern matching to \OV (OV) (\cite{ChanW16,Williams14}).
But to improve the runtime we process many texts simultaneously using the following lemmas.
\begin{lemma}[\BOV (cf.\ \cite{ChanW16})]\label{lem:batchOV}
  Let $A, B \subseteq \SetB^d$ with $\abs{A}=\abs{B}=n$ and $d\le2^{c^{-1} \sqlog n}$ for some constant $c>0$.
  We can decide for all vectors $a \in A$ whether there is a vector $b \in B$ such that $\langle a,b \rangle=0$ in time $n^2/2^{\epsilon c \sqlog{n}}$ for sufficiently small $\epsilon>0$.
\end{lemma}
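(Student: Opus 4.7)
The plan is to follow the Chan--Williams algorithm for Orthogonal~Vectors, lifted from existential to per-$a$ (batched) output. I would partition $A$ and $B$ into groups of size $s$ with $\log s$ of order $\sqlog n$; there are $(n/s)^2$ group pairs, and the total time will be the per-pair time multiplied by this number. For each pair $(A',B')$ the algorithm must decide, for every $a\in A'$, whether some $b\in B'$ is orthogonal to $a$, and it is the per-pair work that governs the savings.

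Inside a single group pair I would use the probabilistic polynomial method: construct a polynomial of the form
\[
  P(a,b) \deff \prod_{t=1}^{\Delta}\Bigl(1 - \sum_{i=1}^{d} r_{t,i}\,a_i b_i\Bigr) \pmod 2,
\]
with independent uniform $r_{t,i}\in\{0,1\}$. By standard analysis $P(a,b)=1$ whenever $\langle a,b\rangle=0$ and $P(a,b)=1$ with probability at most $2^{-\Delta}$ otherwise. Expanding $P$ in $(a,b)$ gives at most $D=\O(d^{\Delta})$ distinct monomials, each of the form $\phi_S(a)\,\phi_S(b)$ with $\phi_S(x)\deff\prod_{i\in S}x_i$ and $|S|\le\Delta$. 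Hence the count $C_a \deff \sum_{b\in B'}P(a,b)$ factors as $C_a=\sum_S c_S\,\phi_S(a)\,\beta_S$ with $\beta_S\deff\sum_{b\in B'}\phi_S(b)$: after precomputing the $\beta_S$ in $\O(sD)$ time, each $C_a$ is obtained in $\O(D)$ time. Equivalently one can view this as an $(s\times D)\cdot(D\times s)$ matrix product (over the two-element field) handled by (rectangular) fast matrix multiplication. Each apparent witness ($C_a>0$) is then confirmed by a direct pass over $B'$, which eliminates the one-sided false positives.

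The total cost is $\O(n^2 D/s)$ plus verification. From $\log d \le c^{-1}\sqlog n$ we obtain $\log D \le \Delta\, c^{-1}\sqlog n$; picking $\Delta$ and $\log s$ as suitable multiples of $\sqlog n$, with constants depending on $c$ and $\epsilon$ so that $\log s - \log D \ge \epsilon c\sqlog n$, yields $D/s \le 2^{-\epsilon c\sqlog n}$ and hence the claimed bound $n^2/2^{\epsilon c\sqlog{n}}$. The expected verification cost is kept within the same budget by the same choice of $\Delta$.

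The main obstacle will be this simultaneous tuning of $\Delta$ and $s$. The degree $\Delta$ governs both the monomial count $D=d^{\Delta}$ (which must stay well below $s$) and the error probability $2^{-\Delta}$ (which must be small enough for verification to be cheap), and both have to be balanced against the dimension bound $d\le 2^{c^{-1}\sqlog n}$. Arranging all three constraints to fit simultaneously under $n^2/2^{\epsilon c\sqlog n}$ is the technical heart of the Chan--Williams analysis and is exactly what forces the constant $\epsilon$ in the statement to be ``sufficiently small''.
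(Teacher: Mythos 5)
The paper does not actually prove this lemma; it is imported from Chan--Williams with a ``cf.''\ citation, so there is no in-paper argument to compare against. Your high-level plan (group $A$ and $B$, use a probabilistic polynomial \`a la Razborov--Smolensky, collapse to a bilinear form, verify apparent witnesses) is indeed the right family of ideas, but the final accounting as you wrote it cannot be made to work, and the fast rectangular matrix multiplication you mention ``equivalently'' in passing is not an equivalent restatement --- it is the ingredient that makes the whole thing possible.

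Concretely, you charge $\O(n^2 D/s)$ for the polynomial evaluation and then ask for parameters with $\log s - \log D \ge \epsilon c\sqlog n$, i.e.\ $D \ll s$. But the same parameters must also make the verification cheap: every non-orthogonal pair is a false positive with probability $2^{-\Delta}$, so to keep the expected verification cost (each verification costing $\Theta(sd)$) below $n^2/2^{\epsilon c\sqlog n}$ you are forced to take $\Delta \ge \log s + \log d + \epsilon c\sqlog n > \log s$. With $\Delta > \log s$ and $d \ge \Delta$ (which holds here since $d$ is exponential in $\sqlog n$ while $\Delta$ is polynomial in $\sqlog n$), the number of surviving monomials already satisfies $D = \binom{d}{\le\Delta} \ge 2^{\Delta} > s$, so $D/s > 1$ and the supposed savings vanish. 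Plugging the two constraints into each other gives $\Delta(1-\log d) \ge \log d + 2\epsilon c\sqlog n$, which is unsatisfiable for $d\ge 4$. There is no choice of $\Delta$ and $s$ as ``suitable multiples of $\sqlog n$'' that reconciles them, so the step ``picking $\Delta$ and $\log s$ \dots\ yields $D/s \le 2^{-\epsilon c\sqlog n}$'' is not achievable.

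The actual resolution in Chan--Williams (and Abboud--Williams--Yu before them) is that the per-block work is not $sD$: one forms the $n\times D$ matrix of $a$-monomials and the $D\times(n/s)$ matrix of group sums $\beta_S^{(j)}$, and multiplies them using Coppersmith's rectangular fast matrix multiplication. Splitting the left factor into $s$ blocks of $(n/s)\times D$, each block product costs $(n/s)^{2+o(1)}$ provided $D \le (n/s)^{\delta}$ for a small universal constant $\delta$, for a total of $n^2/s \cdot (n/s)^{o(1)}$. The constraint on $D$ is therefore $D \le (n/s)^{\delta}$, not $D\ll s$: with $\log s = \alpha\sqlog n$, $\Delta = \Theta(\log s)$, and $\log d \le c^{-1}\sqlog n$, one gets $\log D = \O(\alpha c^{-1}\log n)$, which is below $\delta\log(n/s)$ once $\alpha$ is a small enough constant multiple of $c$. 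That is where the ``sufficiently small $\epsilon$'' comes from, and it is precisely what your $\O(n^2 D/s)$ accounting cannot see.

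Two smaller points. First, computing $C_a=\sum_{b}P(a,b)$ ``over the two-element field'' would give false \emph{negatives} whenever two or more $b$'s contribute a $1$; the count must be taken over the integers (or modulo a prime larger than $s$), which is fine since the coefficients of the expanded polynomial remain $\poly(n)$-bounded for the parameters above. Second, the reference [ChanW16] is specifically the deterministic derandomization of this strategy; if the algorithm is allowed to be randomized --- as the paper's usage of the lemma permits --- then your randomized Razborov--Smolensky version is acceptable once the rectangular-MM step and the integer counting are put in correctly.
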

We generalise this balanced case to the unbalanced case which we use later:
\begin{lemma}[Unbalanced \BOV]\label{lem:batchOVgeneral}
  Let $A, B \subseteq \SetB^d$ with $\abs{A}=n$ and $\abs{B}=m$ and $d\le2^{c^{-1} \sqlog{\min(n,m)}}$ for some constant $c>0$.
  We can decide for all vectors $a \in A$ whether there is a vector $b \in B$ such that $\langle a,b \rangle=0$ in time $nm/2^{\epsilon c \sqlog{\min(n,m)}}$ for sufficiently small $\epsilon>0$.
\end{lemma}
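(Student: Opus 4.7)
The plan is to reduce the unbalanced case to repeated invocations of the balanced Batch-OV algorithm of \cref{lem:batchOV} via a standard chunking argument.

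Without loss of generality, assume $n \le m$ (the symmetric case is handled identically by swapping the roles of $A$ and $B$; note that orthogonality is symmetric, so producing the answer per element of the originally smaller set still tells us, after transposition, the answer for each element of the other side). Let $k \deff \lceil m/n \rceil$ and partition $B$ arbitrarily into blocks $B_1, \dots, B_k$, each of size at most $n$. Pad each $B_i$ with dummy vectors (for instance, the all-ones vector, which is orthogonal to no nonzero vector and can be flagged or ignored) so that $|B_i| = n$ for every $i$.

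For every block $B_i$, invoke \cref{lem:batchOV} on the pair $(A, B_i)$: since $|A| = |B_i| = n$ and the dimension bound $d \le 2^{c^{-1}\sqlog{\min(n,m)}} = 2^{c^{-1}\sqlog{n}}$ is exactly the one required by the balanced lemma, this takes time $n^2/2^{\epsilon c \sqlog{n}}$ for sufficiently small $\epsilon > 0$. For each $a \in A$ we take the logical OR over all $i$ of the returned bits, which gives the desired answer. Summing over the $k \le 2m/n$ blocks, the total running time is
\[
  \frac{2m}{n} \cdot \frac{n^2}{2^{\epsilon c \sqlog{n}}}
  \;=\; \frac{2nm}{2^{\epsilon c \sqlog{\min(n,m)}}},
\]
which, after adjusting $\epsilon$ by a constant factor, matches the claimed bound.

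There is no real obstacle here; the only subtlety is to confirm that the dimension hypothesis of the balanced lemma remains satisfied after chunking, which holds because we chose the chunk size to equal $\min(n,m)$ so that $d \le 2^{c^{-1}\sqlog{\min(n,m)}}$ is exactly the balanced-case hypothesis for size $\min(n,m)$. The padding step is also harmless: dummy vectors can be chosen so that they are never orthogonal to anything in $A$, or alternatively one can mark their indices and discard their contributions before the final OR.
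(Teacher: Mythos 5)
Your proof is correct and follows essentially the same approach as the paper's: partition the larger set into $\lceil m/n \rceil$ blocks of size (at most) $\min(n,m)$, run balanced \BOV (\cref{lem:batchOV}) on each block against the smaller set, and combine. The padding and symmetry-transposition details you spell out are left implicit in the paper's one-line argument, but the decomposition is identical.
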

\begin{proof}
  If $n \le m$,
  partition $B$ into $\lceil m/n\rceil$ sets of size $n$ and run the algorithm from \cref{lem:batchOV} on every instance in time
  $
    \lceil {m}/{n} \rceil {n^2}/{2^{\epsilon c \sqlog{n}}}
    \approx {nm}/{2^{\epsilon c \sqlog{n}}}
  $.
  Analogously for $n > m$.
\end{proof}

\section{Upper Bounds}\label{sec:upper}
For patterns $p$ of type \Or\Conc\Or\ and \Or\Conc\Plus\
let $p=(p_1 \Or p_2 \Or \dots \Or p_k)$ be the pattern of size $m$.
Likewise for the patterns with a Kleene Plus as additional outer operation.
Let further $t=t_1 \cdots t_n$ be the text of length $n$.
The main idea of the fast algorithm is to compute a set of matched substrings:
$ M=\{ (i,j) \mid \exists \ell \in [k]: t_i\cdots t_j \in \lang(p_\ell) \} \subseteq [n]\times[n] $.
From $M$ we construct a graph
where the nodes correspond to different prefixes that can be matched.
The tuples in $M$ represent edges between these nodes.
Then it remains to check whether the node corresponding to $t$ is reachable.
\begin{theorem}[Upper Bounds]\label{thm:upper:main}
  We can solve in time $nm/2^{\Omega \sqlog{\min(n,m)}}$:
  \begin{enumerate}
    \item\label{thm:upper:main:concOr}
    \Or\Conc\Or-pattern matching and
    \Plus\Or\Conc\Or-membership.
    \item\label{thm:upper:main:concPlus}
    \Or\Conc\Plus-pattern matching and
    \Plus\Or\Conc\Plus-membership.
  \end{enumerate}
\end{theorem}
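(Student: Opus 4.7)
My plan is to reduce each of the four problems to the unbalanced batch-OV primitive of \cref{lem:batchOVgeneral} via a chunked character-class encoding, and to finish the membership variants by reachability in a small auxiliary graph. For \Or\Conc\Or-pattern matching, write $p = p_1 \Or \cdots \Or p_k$ with each $p_\ell = A_{\ell,1}\cdots A_{\ell,d_\ell}$ a fixed-length concatenation of character classes $A_{\ell,r}\subseteq\Sigma$; the problem reduces to deciding whether $M = \{(i,\ell) : t_i\cdots t_{i+d_\ell-1}\in\lang(p_\ell)\}$ is non-empty. For \Plus\Or\Conc\Or-membership, I will capture the outer Kleene Plus by reachability in the directed graph $G_M$ on vertex set $[0,n]$ with an arc $i-1 \to i+d_\ell-1$ for each $(i,\ell)\in M$: we accept iff $n$ is reachable from $0$. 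For the two variants with Kleene Plus at the third level (\Or\Conc\Plus\ and \Plus\Or\Conc\Plus), I first run-length compress $t$ and each $p_\ell$: each factor $\sigma_{\ell,r}^+$ becomes a single run of $\sigma_{\ell,r}$, and an occurrence of $p_\ell$ corresponds to a contiguous block of runs in the compressed text whose characters equal $\sigma_{\ell,1},\dots,\sigma_{\ell,d_\ell}$, with only simple non-emptiness conditions on the boundary runs, putting the problem back into the \Or\Conc\Or\ mould on the compressed alphabet.

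The core computational step is a \BOV-based evaluation of $M$. I encode each class $A_{\ell,r}$ by its non-membership indicator $\vec a_{\ell,r}\in\{0,1\}^{|\Sigma|}$ and each text position $t_i$ by its character indicator $\vec u_i\in\{0,1\}^{|\Sigma|}$, so that $\langle \vec a_{\ell,r},\vec u_i\rangle = 0$ iff $t_i\in A_{\ell,r}$. Choosing a block length $L = \Theta(2^{c^{-1}\sqlog{\min(n,m)}}/|\Sigma|)$, I partition each $p_\ell$ into $\lceil d_\ell/L\rceil$ consecutive chunks. For each chunk-offset $q$ I form one unbalanced \BOV instance whose pattern-side vectors concatenate the $L$ per-position indicators of the $q$-th chunks (with a short unary offset-tag so that different offsets cannot cross-match) and whose text-side vectors concatenate the $L$ indicators of the corresponding text windows. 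The dimension stays within the budget of \cref{lem:batchOVgeneral}, and summing over offsets the total \BOV cost is $nm/2^{\Omega(\sqlog{\min(n,m)})}$.

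From the per-chunk outputs one assembles $M$. For pattern matching it suffices to check whether any chunk-level output corresponds to a match consistent across all $\lceil d_\ell/L\rceil$ chunks of some $p_\ell$. For membership, I BFS on $G_M$ starting at $0$: at each reached node $i-1$ I collect the out-arcs $\{i+d_\ell-1 : p_\ell \text{ matches at } i\}$ from the per-chunk \BOV output and test whether $n$ becomes reached. The combining and BFS cost is dominated by the \BOV work.

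The hard part will be that \cref{lem:batchOVgeneral} certifies only \emph{existence} of an orthogonal partner per vector, whereas a full $p_\ell$-match is a \emph{conjunction} over its chunks; the delicate step is to arrange the chunk-tags and the two sides of the \BOV calls so that per-chunk existence answers recombine into the conjunction without losing the $2^{\Omega(\sqlog{\min(n,m)})}$ factor. A secondary subtlety is the Kleene Plus case, where boundary runs containing the start or end of an occurrence must be separated from the interior runs and handled so that the fixed-length structure used by the \BOV reduction is preserved.
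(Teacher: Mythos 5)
Your overall plan (encode as Batch-OV, finish membership by reachability, handle Kleene Plus via run-length compression) matches the paper's spirit, but the central computational step has a gap that the paper avoids by a different decomposition, and you have correctly sensed where the gap is.

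The issue you flag in your last paragraph is in fact fatal as stated. \cref{lem:batchOVgeneral} certifies, for each left vector, only the \emph{existence} of an orthogonal partner on the right. If you cut each $p_\ell$ into $\lceil d_\ell/L\rceil$ chunks and solve one Batch-OV instance per chunk offset $q$ (or one combined instance with offset-tags preventing cross-offset matches), you learn, per text window and per offset, whether \emph{some} pattern's $q$-th chunk matches there -- you do not learn which $\ell$, and you cannot recover whether a single $(i,\ell)$ makes all $\lceil d_\ell/L\rceil$ chunks of $p_\ell$ match simultaneously. The existential quantifiers inside each Batch-OV call do not commute with the conjunction over chunks, and materializing the full orthogonality matrix would cost $\Theta(nm)$, forfeiting the speed-up. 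Packing all chunks of a pattern into one vector so orthogonality equals the conjunction would restore correctness but blow the dimension up to $\Theta(d_\ell|\Sigma|)$, far beyond the $2^{O(\sqrt{\log\min(n,m)})}$ budget. The paper never chunks a pattern. Instead it splits the sub-patterns by size: sub-patterns of size at most $f\in 2^{\Omega(\sqrt{\log\min(n,m)})}$ fit entirely in a single vector and go to one Batch-OV call; the at most $m/f$ sub-patterns of size $>f$ are handled sequentially by the Cole--Hariharan near-linear algorithm for \Conc\Or-pattern matching (and the Backurs--Indyk analogue for \Conc\Plus), at total cost $O\bigl((m/f)\,n\log^2\min(n,m)+m\bigr)\subseteq nm/2^{\Omega(\sqrt{\log\min(n,m)})}$. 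That two-regime split is the missing idea.

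Two secondary problems. First, your per-position one-hot and non-membership indicators live in $\SetB^{|\Sigma|}$, and $|\Sigma|$ can be $\Theta(\min(n,m))$, so even a single character's encoding exceeds the dimension bound; the paper uses a randomized Bloom-filter-style characteristic vector $\chi:\mathcal P(\Sigma)\to\SetB^{O(f\log|\Sigma|)}$ with one-sided error to bring a whole small sub-pattern into budget. Second, for the Kleene Plus variants, run-length compressing both sides does not reduce the problem to \Conc\Or\ on a ``compressed alphabet'': a pattern run $\sigma^+$ matches text runs of any length $\ge 1$ while exact-length runs do not, so the matching relation on (character, length) pairs is asymmetric and must be encoded explicitly; moreover the set of matches can have $\Theta(n^2)$ entries even for tiny patterns (for $t=0^n1^n$, $p=0^+1^+$), so the paper works with a compressed representation $M'$ and a three-layered reachability graph rather than with $G_M$ directly.
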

To compute $M$ we split the patterns into large and small ones.
For the large patterns we compute the corresponding values of $M$ sequentially
while for the small patterns we reduce to unbalanced \BOV and use the fast algorithm for this problem shown in \cref{lem:batchOVgeneral}.

\subsection{Patterns of Type \texorpdfstring{\Plus\Or\Conc\Or}{+|o|} and \texorpdfstring{\Or\Conc\Or}{|o|}}
As mentioned in the beginning of this section,
we compute the set $M$ of matched substring by partitioning the sub-patterns into large and small ones.
\begin{lemma}\label{lem:upper:concOr:main}
  Given a text $t$ of length $n$ and patterns $\{p_i\}_i$ of type \Conc\Or\ such that $\sum_i \abs{p_i} = m$.
  We can compute $M$ in time $nm / 2^{\Omega(\sqlog{\min(n,m)})}$.
\end{lemma}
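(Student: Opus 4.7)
The plan is to fix a threshold $\tau \deff 2^{\gamma\sqlog{\min(n,m)}}$ for a sufficiently small constant $\gamma>0$, and to classify each pattern $p_i$ as \emph{short} if $\abs{p_i}\le\tau$ and \emph{long} otherwise. Every \Conc\Or-pattern $p_\ell$ is a rigid character-class string $S_1^\ell\cdots S_{k_\ell}^\ell$ with $S_r^\ell\subseteq\Sigma$, so $t_i\cdots t_j\in\lang(p_\ell)$ iff $j-i+1=k_\ell$ and $t_{i+r-1}\in S_r^\ell$ for every $r$. This is precisely the condition that can be encoded as an orthogonality of $0/1$-vectors, putting us in position to exploit \cref{lem:batchOVgeneral}.

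\medskip
\noindent\textbf{Short patterns.} For each length $k\le\tau$, I would build the natural encodings: for every text position $i$, put $u_i[r,\sigma]=[t_{i+r-1}=\sigma]$ in $\SetB^{k\abs{\Sigma}}$; for each short pattern $p_\ell$ of length $k$, put $v_\ell[r,\sigma]=[\sigma\notin S_r^\ell]$. Then $\langle u_i,v_\ell\rangle=0$ iff $p_\ell$ matches at $i$. Since $k\abs{\Sigma}\le\tau\cdot O(1)$ fits the dimension hypothesis of \cref{lem:batchOVgeneral}, the lemma returns, for every $i$, whether some length-$k$ pattern matches at $i$, giving exactly the short-pattern contributions $(i,i+k-1)$ to $M$.

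\medskip
\noindent\textbf{Long patterns.} There are at most $m/\tau$ long patterns. For each such $p_\ell$, I would split its character-class sequence into $\lceil k_\ell/\tau\rceil$ consecutive length-$\tau$ blocks; $p_\ell$ matches at position $i$ iff every block matches at the appropriate shift. A batched application of \cref{lem:batchOVgeneral} over all blocks (across all long patterns) versus all $n$ text positions computes the per-block match bits; a bit-parallel pass then combines them into per-pattern matches via shifts and ANDs, contributing $\widetilde O(nm/\tau)$ in total, which is within the budget.

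\medskip
\noindent\textbf{Main obstacle.} The delicate part is the combined short-pattern analysis: summing the per-length running times $\sum_{k\le\tau} n M_k/2^{\Omega(\sqlog{\min(n,M_k)})}$ can approach $nm$ if many lengths have only a handful of patterns each. My plan to overcome this is to further partition the short lengths by a secondary threshold on $M_k$, invoking \cref{lem:batchOVgeneral} on the ``heavy'' lengths where the speedup is strong, and falling back to a direct bit-parallel scan on the ``light'' lengths, which are few and collectively small because $\sum_k k\,M_k\le m$. Balancing the choice of $\tau$ against this secondary threshold so that both regimes stay within $nm/2^{\Omega(\sqlog{\min(n,m)})}$ is the main technical hurdle; once done, taking the union with the long-pattern contributions finishes the computation of $M$ within the claimed time bound.
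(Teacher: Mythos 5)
The central gap is in your short-pattern encoding. You set $u_i[r,\sigma]=[t_{i+r-1}=\sigma]$ and $v_\ell[r,\sigma]=[\sigma\notin S_r^\ell]$, a one-hot encoding of the alphabet, giving vectors of dimension $k\abs{\Sigma}$, and you assert $k\abs{\Sigma}\le\tau\cdot O(1)$. That bound is false unless $\abs\Sigma=O(1)$, but the lemma places no restriction on the alphabet size; in general $\abs\Sigma$ can be as large as $\Theta(\min(n,m))$, so $k\abs\Sigma$ wildly exceeds the $d\le 2^{c^{-1}\sqlog{\min(n,m)}}$ dimension hypothesis of \cref{lem:batchOVgeneral}. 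The paper resolves exactly this obstruction with the randomised characteristic vector $\chi:\mathcal P(\Sigma)\to\SetB^d$, $d\in\O(f\log\abs\Sigma)$, of \cref{lem:upper:concOr:cV} (a Bloom-filter-style compression of $\Sigma$), which is why \cref{lem:upper:concOr:small} has an inherent error probability. Your proposal has no substitute for this step and therefore does not establish the lemma for general alphabets.

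Two further, less critical remarks. First, your ``main obstacle'' paragraph worries about summing the batch-OV costs over each pattern length $k\le\tau$ separately, but this issue does not arise in the paper: all lengths are handled in a single call to \cref{lem:batchOVgeneral} by padding every text substring and every pattern vector in $T_j$, $P_j$ up to the common length $f$ with repetitions of $\chi(a)$ for a fresh symbol $a$, and (in \cref{lem:upper:concOr:main}) by duplicating the small patterns so that there are exactly $m$ of them. Your proposed secondary-threshold fallback is solving a problem you created by splitting unnecessarily per length. Second, for long patterns the paper simply runs Cole and Hariharan's $\O(n\log^2\hat m+\hat m)$-time algorithm on each of the at most $m/f$ large patterns, which is immediately within budget; your block decomposition plus bit-parallel recombination is considerably heavier machinery, would also inherit the same alphabet-dimension problem in its per-block batch-OV calls, and would additionally need to control accumulated false positives across blocks if combined with a randomised symbol encoding. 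The paper's route avoids all of this.
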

\begin{lemma}[Large Sub-Patterns]\label{lem:upper:concOr:large}
  Given a text $t$ of length $n$
  and patterns $p_1,\dots,p_\ell$ of type \Conc\Or\ such that $\sum_{i=1}^\ell \abs{p_i} \le m$.
  We can compute $M$ in time $\O(\ell n \log^2 \min(n,m) + m)$.
\end{lemma}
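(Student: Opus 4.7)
The plan is to process the $\ell$ patterns independently. A pattern $p_i$ of type \Conc\Or\ matches precisely the strings of a single fixed length $L_i$ whose $j$-th character lies in the set $S_{i,j}\subseteq\Sigma$ read off from the $j$-th \Or-subtree, so every match $t_{i'}\cdots t_{i'+L_i-1}\in\lang(p_i)$ contributes exactly one pair $(i',i'+L_i-1)$ to $M$. Computing $M$ thus reduces to running a single-pattern matching routine for each $p_i$ separately and unioning the resulting position pairs. For each $p_i$ I would invoke the $\O(n\log^2 m_i + m_i)$-time algorithm of Backurs--Indyk for \Conc\Or-pattern matching (the routine cited as $\O(n\log^2 m+m)$ in \cref{fig:intro:matching}) and insert its output into $M$ in constant time per occurrence.

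Summing the $\ell$ invocations and using $\sum_i m_i\le m$ gives $\O(\ell n\log^2 m + m)$, which already proves the lemma up to the difference between $\log^2 m$ and $\log^2\min(n,m)$. To sharpen the logarithmic factor I would first discard every $p_i$ with $L_i>n$, since any such pattern cannot match a substring of $t$. On the surviving patterns $L_i\le n$ holds, and the convolution-based routine underlying Backurs--Indyk operates on the character-class vector of length $L_i$ (built from the parse tree of $p_i$ in time $\O(m_i)$), so its log-overhead can be expressed as $\log^2 L_i$ instead of $\log^2 m_i$. Each surviving pattern then costs $\O(n\log^2 L_i + m_i)\le \O(n\log^2\min(n,m) + m_i)$, and summing yields the claimed bound.

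The main obstacle is precisely this refinement from $\log^2 m$ to $\log^2\min(n,m)$; everything else is a black-box application of the known single-pattern routine together with straightforward bookkeeping. If the cited single-pattern algorithm is not already parametrised by the matched length, I would fall back on a short direct argument performing the FFT-based character-class convolution on vectors of length $n$ and $L_i\le n$, which yields the required $\log^2\min(n,m)$ factor automatically.
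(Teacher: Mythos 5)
Your overall approach is the same as the paper's: invoke the known $\O(n\log^2\hat m+\hat m)$-time single-pattern algorithm for \Conc\Or-pattern matching once per $p_i$, discard patterns that are too large to contribute, and union the outputs. (Minor slip: the paper credits this single-pattern routine to Cole and Hariharan \cite{ColeH02}, not to Backurs--Indyk, which is the reference for \Conc\Plus.)

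The divergence, and the only place you have a gap, is in how the $\log^2 m$ is sharpened to $\log^2\min(n,m)$. You propose to re-parametrise the black-box algorithm so that its logarithmic overhead is $\log^2 L_i$ (the number of \Or-subtrees) rather than $\log^2 m_i$, arguing it ``operates on the character-class vector of length $L_i$.'' That is a stronger claim than what \cite{ColeH02} is cited for, and you would have to open up and re-analyse the algorithm (or your FFT fallback, which naively reintroduces an $|\Sigma|$ factor from the Fischer--Paterson reduction) to justify it. The paper avoids this entirely with a purely size-based observation: discard $p_i$ with $|p_i|>|\Sigma| n$ (such a pattern forces more than $n$ symbols and can never match a substring of $t$), after which $|p_i|\le\min(|\Sigma| n,m)\le\min(n^2,m^2)$, so $\log^2|p_i|\le 4\log^2\min(n,m)$ and the black box may be used verbatim with parameter $\hat m=|p_i|$. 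Your filter $L_i>n$ is essentially equivalent to the paper's, but you should bound $m_i$ itself after filtering rather than asserting an $L_i$-parametrised running time for the cited algorithm.
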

\begin{proof}
  From a result by Cole and Hariharan \cite{ColeH02} we know that
  there is a $\O(n \log^2 \hat m + \hat m)$ time algorithm for \Conc\Or-pattern matching with patterns of size $\hat m$.
  We run this algorithm sequentially for every pattern.
  We can ignore all $p_i$ with $\abs{p_i}>\abs{\Sigma}n$ since they match more than $n$ symbols.
  We get $\abs{p_i} \le \min(\abs{\Sigma} n, m) \le \min(n^2, m)\le \min(n^2,m^2)$.
  Since $\log\min(n^2,m^2) = 2\log\min(n,m)$,
  each iteration takes time $\O(n \log^2 \min(n,m) + \abs{p_i})$
  and the claim follows.
\end{proof}
\begin{lemma}[Small Sub-Patterns]\label{lem:upper:concOr:small}
  Given a text $t$ of length $n$ and patterns $p_1, \dots, p_m$ of type \Conc\Or.
  There is a $f \in 2^{\Omega(\sqlog{\min(n,m)})}$ such that the following holds:
  If $\abs{p_i} \le f$ for all $i\in[m]$,
  then we can compute $M$ in time $nm / 2^{\Omega(\sqlog{\min(n,m)})}$
  with small error probability.
\end{lemma}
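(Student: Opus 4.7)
The plan is to reduce the entire problem to a single application of unbalanced \BOV~(\cref{lem:batchOVgeneral}). For a pattern $p_\ell$ of type \Conc\Or\ write $p_\ell = C_1^\ell \cdots C_{k_\ell}^\ell$, where $C_j^\ell \subseteq \Sigma$ is the set of characters allowed at position $j$, so $k_\ell \le \abs{p_\ell} \le f$ and $\sum_j \abs{C_j^\ell} \le f$. Then $t_i \cdots t_{i+k-1} \in \lang(p_\ell)$ iff $k = k_\ell$ and $t_{i+j-1} \in C_j^\ell$ for every $j \in [k_\ell]$. A naive OV encoding would use a coordinate for every $(j, \sigma) \in [f] \times \Sigma$, which is too costly because the effective alphabet can have size up to $n + m$.

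To cut the dimension I would hash the alphabet. Fix $H \deff 2f$ and $T \deff \Theta(\log(nm))$, and sample $T$ independent pairwise-independent hash functions $h_r \colon \Sigma \to [H]$. For each $i \in [n]$ and $k \in [f]$ with $i + k - 1 \le n$, let the text vector $u_{i,k}$ have coordinates $(r, j, b) \in [T] \times [f] \times [H]$ together with length-tag coordinates $(\mathrm{len}, k')$ for $k' \in [f]$, defined by $u_{i,k}[(r,j,b)] = 1$ iff $j \le k$ and $h_r(t_{i+j-1}) = b$, and $u_{i,k}[(\mathrm{len}, k')] = 1$ iff $k' = k$. For each pattern set $v_\ell[(r,j,b)] = 1$ iff $j \le k_\ell$ and $b \notin h_r(C_j^\ell)$, and $v_\ell[(\mathrm{len}, k')] = 1$ iff $k' \ne k_\ell$. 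Expanding the inner product shows $\langle u_{i,k}, v_\ell \rangle = 0$ iff $k = k_\ell$ and $h_r(t_{i+j-1}) \in h_r(C_j^\ell)$ for every $r \in [T]$ and $j \in [k_\ell]$, so true matches give orthogonality deterministically and there are no false negatives.

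For the error analysis, any true non-match has some position $j^\ast \le k_\ell$ with $t_{i+j^\ast-1} \notin C_{j^\ast}^\ell$. By pairwise independence and a union bound each $h_r$ collides this character into $h_r(C_{j^\ast}^\ell)$ with probability at most $\abs{C_{j^\ast}^\ell}/H \le f/H = 1/2$, so the per-triple false-positive probability is at most $2^{-T}$. Picking $T = \Theta(\log(nm))$ with a sufficiently large constant drives the expected number of false positives over all $nfm$ triples below $1/\poly(nm)$, yielding the promised small error. Feeding $A = \{u_{i,k}\}$ (of size at most $nf$) and $B = \{v_\ell\}$ (of size $m$) into \cref{lem:batchOVgeneral} then reports, for each $(i, k)$, whether some pattern of length $k$ matches at position $i$, from which $M$ is read off in linear additional time.

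The main obstacle is balancing parameters so that the dimension $D = O(T f H + f) = O(f^2 \log(nm))$ stays below the \BOV threshold $2^{c^{-1}\sqlog{\min(nf, m)}}$ while the factor-$f$ blow-up in $\abs{A}$ is still absorbed by the \BOV speed-up $2^{\epsilon c \sqlog{\min(nf, m)}}$. Choosing $f \deff 2^{\alpha \sqlog{\min(n,m)}}$ with $\alpha > 0$ small enough that $2\alpha < c^{-1}$ (so $D$ fits) and $\alpha < \epsilon c$ (so savings dominate), and using $\min(nf, m) \ge \min(n, m)$, the total running time becomes $nfm / 2^{\epsilon c \sqlog{\min(n,m)}} = nm / 2^{(\epsilon c - \alpha)\sqlog{\min(n,m)}} = nm/2^{\Omega(\sqlog{\min(n,m)})}$, as desired.
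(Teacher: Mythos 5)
Your proposal is correct and achieves the lemma, but via a genuinely different randomised encoding than the one the paper uses. The paper builds a single Bloom-filter-style sketch: each symbol $\sigma$ is assigned a random sparse vector $\chi(\sigma)\in\SetB^{O(f\log|\Sigma|)}$ (each bit set independently with probability $1/f$), a set $S$ is encoded as the bitwise OR $\chi(S)=\bigvee_{s\in S}\chi(s)$, and the subset test $\chi(\sigma)\subseteq\chi(S)$ (which holds deterministically if $\sigma\in S$ and fails with probability $1-1/\poly|\Sigma|$ otherwise) is turned into an orthogonality test by complementing the pattern side. You instead repeat a small-range hash test: $T=\Theta(\log(nm))$ pairwise-independent hash functions into $[2f]$, one one-hot block per $(r,j)$, giving one-sided error $2^{-T}$ per triple. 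Both schemes share the same one-sided structure (true matches always give orthogonality), and both reduce the whole task to one call of the unbalanced \BOV\ algorithm after padding to a common length. The trade-offs are minor but real: the paper's per-position dimension is $O(f\log|\Sigma|)$ with $|\Sigma|$ normalised to $\Theta(\min(n,m))$, whereas yours is $O(Tf)=O(f\log(nm))$, so your dimension and hence your parameter window are a bit more sensitive in the very lopsided regime $\log\log\max(n,m)\gg\sqlog{\min(n,m)}$; on the other hand your explicit length-tag coordinates are cleaner than the paper's padding trick, and your error bound of $1/\poly(nm)$ is stated more explicitly than the paper's $1/\poly|\Sigma|$ per pair. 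Neither proof spells out the union bound over all $\approx nfm$ triples carefully, so this is not a gap specific to you. One small point worth making explicit in your write-up: after normalising the alphabet (only symbols occurring in both text and patterns matter, plus one dummy), the number of relevant symbol/set pairs drops, which is what justifies the paper's $|\Sigma|=\Theta(\min(n,m))$ convention and would also tighten your choice of $T$.
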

We postpone the proof of this lemma and first combine the results for small and large patterns to proof the main theorem.
\begin{proof}[Proof of \cref{lem:upper:concOr:main}]
  Choose $f\in 2^{\Omega(\sqlog{\min(n,m)})}$ as in \cref{lem:upper:concOr:small}
  and split the patterns into large patterns of size $>f$ and small patterns of size $\le f$.

  For the at most $m/f$ large patterns
  compute $M_>$ by \cref{lem:upper:concOr:large} in time $\O(m/f \cdot n\log^2\min(n,m) + m)\in nm/2^{\Omega(\sqlog{\min(n,m)})}$.
  Duplicate the $\ell$ small patterns $m/\ell$ times
  and compute $M_\le$ for the $m$ small patterns by \cref{lem:upper:concOr:small} in the claimed running time.
\end{proof}
\begin{proof}[Proof of \cref{thm:upper:main}~\cref{thm:upper:main:concOr}]
  Construct $M$ by \cref{lem:upper:concOr:main}.
  Check for \Or\Conc\Or-pattern matching whether $M=\emptyset$ since any matched substring is sufficient.

  For \Plus\Or\Conc\Or-membership we construct a graph $G$ with nodes $v_0, \dots, v_n$
  where we put an edge from $v_{i-1}$ to $v_j$ if $(i,j) \in M$.
  Then $v_n$ is reachable from $v_0$ iff there is a decomposition of $t$ into substrings which can be matched by the $p_i$s.
  This reachability check can be performed in time $\O(n+\abs{M})$ by a depth-first search starting from $v_0$.
\end{proof}
For the proof of \cref{lem:upper:concOr:small} we proceed as follows.
For the construction of $M$ for small sub-patterns we define some threshold $f$
and check for every substring of $t$ of length at most $f$ whether there is a pattern that matches this substring.
This check is reduced to \BOV by encoding the substrings and patterns as bit-vectors.

For small alphabets with $\abs{\Sigma} < f$ this encoding is rather simple since we can use a one-hot encoding of the alphabet.
But for larger alphabets this does not work as the dimension of the vectors would increase too much
and the fast algorithm for \BOV could not be used anymore.
Therefore, we define a randomised encoding $\chi$ to ensure that the final bit-vectors are not too large.
For simplicity we can assume $\abs{\Sigma} = \Theta(\min(n,m))$ by padding $\Sigma$ with fresh symbols.
The construction in the following lemma is based on the idea of Bloom-Filters \cite{Bloom70}.
\begin{lemma}[Randomised Characteristic Vector]\label{lem:upper:concOr:cV}
  For a finite universe $\Sigma$ and a threshold $f \le 2^{\O(\sqlog{\abs{\Sigma}})}$
  there is a randomised $\chi: \mathcal{P}(\Sigma) \to \SetB^{d}$ with $d\in \O(f \log\abs{\Sigma})$
  such that for all $\sigma \in \Sigma$ and $S \subseteq \Sigma$ with $\abs{S} \le f$ the following holds:
  \begin{itemize}
    \item
    If $\sigma \in S$, then $\chi(\sigma)\deff \chi(\{\sigma\}) \subseteq \chi(S)$, i.e.\ $\forall i \in [d]: \chi(\{\sigma\})[i]=1 \implies \chi(S)[i]=1$.
    \item
    If $\chi(\sigma) \subseteq \chi(S)$, then $\sigma \in S$ with high probability,
    i.e.\ $\ge 1-1/\poly(\abs{\Sigma})$.
  \end{itemize}
\end{lemma}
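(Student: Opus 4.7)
The plan is to implement the standard Bloom-filter construction. I will pick $k = c \log \abs{\Sigma}$ independent, uniformly random hash functions $h_1, \dots, h_k \colon \Sigma \to [d]$, where $c$ is a sufficiently large constant and $d \deff 2kf$, so that $d \in \O(f \log \abs\Sigma)$ matches the required dimension. Define $\chi(S) \in \SetB^d$ by $\chi(S)[j] = 1$ if and only if there exist $x \in S$ and $i \in [k]$ with $h_i(x) = j$, and set $\chi(\sigma) \deff \chi(\{\sigma\})$.

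First I will verify monotonicity: if $\sigma \in S$, then every $1$-bit of $\chi(\sigma)$ sits at some position $h_i(\sigma)$, and by construction this position is also a $1$-bit of $\chi(S)$. Hence $\chi(\sigma) \subseteq \chi(S)$ holds deterministically whenever $\sigma \in S$, which is the first (error-free) property.

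Next I will bound the false-positive probability. Fix $\sigma \in \Sigma$ and $S \subseteq \Sigma$ with $\abs S \le f$ and $\sigma \notin S$. Condition on the hashes $\{h_i(x) : x \in S, i \in [k]\}$; this determines the set $T \subseteq [d]$ of $1$-positions of $\chi(S)$, and $\abs T \le kf = d/2$. The event $\chi(\sigma) \subseteq \chi(S)$ is equivalent to $h_i(\sigma) \in T$ for every $i \in [k]$, and since the $k$ hashes of $\sigma$ remain uniform and independent in $[d]$ after this conditioning, it occurs with probability at most $(\abs T / d)^k \le (1/2)^k = \abs\Sigma^{-c}$. Picking $c$ large enough yields the claimed error bound $1/\poly(\abs\Sigma)$; the contrapositive is precisely ``$\chi(\sigma) \subseteq \chi(S) \Rightarrow \sigma \in S$ w.h.p.''.

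The main obstacle is really just parameter balancing: to achieve polynomially small false positives we need $k = \Omega(\log \abs\Sigma)$, which forces the dimension up to $\Omega(kf) = \Omega(f \log \abs\Sigma)$, exactly the budget the lemma permits. The side condition $f \le 2^{\O(\sqlog{\abs\Sigma})}$ plays no role in the construction itself; it is only used downstream to guarantee that the resulting $d$ is small enough for the Batch-OV algorithm of \cref{lem:batchOVgeneral} to apply. A minor technicality is that fully random hash functions are assumed; for the Word-RAM model one can replace them with a suitable family of $k$-wise independent hashes without affecting the analysis, since the bound above only uses independence of the $k$ hashes of $\sigma$ from the hashes of $S$.
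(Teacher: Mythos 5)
Your proposal is correct and, like the paper, is explicitly Bloom-filter-inspired — but the concrete randomized construction and the resulting analysis are different. The paper assigns to each symbol $\sigma$ an independent random bit-vector in which each of the $d = \Theta(f\log\abs{\Sigma})$ coordinates is set to $1$ independently with probability $1/f$; the false-positive analysis then works coordinate-by-coordinate, bounding $\prob{\chi(\sigma)[i]=1 \land \chi(S)[i]=0}$ from below by $\Omega(1/f)$ and taking the product over all $d$ coordinates. You instead use the textbook Bloom filter: $k=\Theta(\log\abs{\Sigma})$ hash functions into a table of size $d=2kf$, and you bound the probability that all $k$ hashes of $\sigma$ land in the at most $kf = d/2$ occupied positions by $(1/2)^k$. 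Both routes achieve the same dimension $d\in\O(f\log\abs{\Sigma})$ and the same $1/\poly(\abs{\Sigma})$ error, and both make the first (containment) property hold deterministically. The paper's per-coordinate randomization is slightly simpler to state because it needs no hash functions and no conditioning argument; your version has the advantage of using far less randomness ($k\log d$ bits per symbol rather than $d$ bits) and connects directly to standard Bloom-filter intuition, but, as you note, one must then argue that sufficiently independent hash families suffice in the Word-RAM model. Either construction plugs in correctly to the downstream Batch-OV reduction, and you correctly observe that the constraint $f\le 2^{\O(\sqlog{\abs{\Sigma}})}$ is only used there, not inside this lemma.
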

\begin{proof}
  We define $\chi$ element-wise and
  set for $S \subseteq \Sigma$:
  $\chi(S)[i] \deff \bigvee_{s\in S}\chi(s)[i]$, i.e.\ the bitwise OR over $\chi(s)$ for $s \in S$.
  Hence, the first claim already holds by definition.
  For each $\sigma\in \Sigma$ we define $\chi(\sigma)$ independently by setting $\chi(\sigma)[i]=1$ with probability $1/f$ for all $i \in [d]$.
  Let $S\subseteq \Sigma$ with $\abs{S} \le f$ and $\sigma \in \Sigma\setminus S$.
  For all $i \in [d]$:
  \begin{align*}
    \prob{\chi(\sigma)[i] \nsubseteq \chi(S)[i]}
    &= \prob{\chi(\sigma)[i]=1 \land \chi(S)[i]=0}\\
    &= \tfrac 1 f {\left(1- \tfrac 1 f\right)}^{\abs{S}}
    \ge \tfrac 1 f {\left(1-\tfrac 1 f\right)}^f
    \ge \tfrac {\ee^{-2}} f \\
    \prob{\chi(\sigma) \subseteq \chi(S)}
    &= \prod_{i=1}^d \prob{\chi(\sigma)[i] \subseteq \chi(S)[i]}
    = \prod_{i=1}^d (1-\prob{\chi(\sigma)[i] \nsubseteq \chi(S)[i]}) \\
    & \le \prod_{i=1}^d \left(1- \tfrac{\ee^{-2}} {f}\right)
    = {\left(1- \tfrac{\ee^{-2}} {f}\right)}^d
  \shortintertext{Setting $d = f c \ln \abs{\Sigma}$ for some arbitrary $c>\ee^2$, we get:}
    &= \left(1- \tfrac{\ee^{-2}} {f}\right)^{f\cdot c \ln\abs{\Sigma}}
    \le \ee^{-1/\ee^2 \cdot c \ln\abs{\Sigma}}
    = {\abs \Sigma}^{- c/\ee^2}
    = 1/ \poly {\abs \Sigma}
    \qedhere
  \end{align*}
\end{proof}

\begin{proof}[Proof of \cref{lem:upper:concOr:small}]
  Define $f = 2^{\sqrt{\epsilon}/3 \cdot \sqlog{\min(n,m)}}$ with $\epsilon$ as in \cref{lem:batchOVgeneral}
  and let $a$ be some fresh symbol we add to $\Sigma$.
  Let $\chi: \mathcal{P}(\Sigma) \to \SetB^{f^2}$ be as in \cref{lem:upper:concOr:cV}.
  For simplicity one can think of $\chi$ as the one-hot encoding of alphabet $\Sigma$.

  We define
  $T_j \deff \{t_i \cdots t_{i+j-1} \mid 1 \le i \le n-j+1\}$ and
  $P_j \deff \{p_i \mid \lang(p_i) \subseteq \Sigma^j \}$
  for all $j \in [f]$.
  Then replace all symbols and sub-patterns of type \Or\ by bit-vectors by applying $\chi$.
  Finally, pad every vector in $T_j$ and $P_j$ by $f-j$ repetitions of $\chi(a)$
  and flip all values of $P_j$ bit-wise such that 1s become 0s and vice versa.
  Let $T$ be the set of all $\le nf$ modified texts and $P$ be the set of all $m$ transformed patterns.

  We observe that a text-vector in $T$ is orthogonal to a pattern-vector in $P$ iff the original text was matched by the original pattern.
  Since $f \cdot f^2\le 2^{\sqrt\epsilon \sqlog{\min(n,m)}} \le 2^{\sqrt\epsilon \sqlog{\min(nf,m)}}$,
  we can apply \cref{lem:batchOVgeneral} for $T$ and $P$:
  \[
    \frac {nfm} {2^{(\epsilon/\sqrt\epsilon) \sqlog{\min(nf,m)}}}
    \le \frac {nm} {2^{(\sqrt\epsilon - \sqrt\epsilon/3) \cdot \sqlog{\min(n,m)}}}
    \in \frac {nm} {2^{\Omega(\sqlog{\min(n,m)})}}
    \qedhere
  \]
\end{proof}

\subsection{Patterns of Type \texorpdfstring{\Plus\Or\Conc\Plus}{+|o+} and \texorpdfstring{\Or\Conc\Plus}{|o+}}
First observe that even for small patterns $M$ can be too large to be computed explicitly.
For $t=0^n1^n$ and $p=0^+1^+$ we have $M=[1,n]\times[n+1,2n]$
and thus cannot write down $M$ explicitly in time $o(nm)$.

To get around this problem we first define the \emph{run-length encoding} $r(u)$ of a text $u$ as in \cite{BackursI16}:
We have $r(\varepsilon) = \varepsilon$.
For a non-empty string starting with $\sigma$, let $\ell$ be the largest integer such that the first $\ell$ symbols of $u$ are $\sigma$.
Append the tuple $(\sigma, \ell)$ to the run-length encoding and recurse on $u$ after removing the first $\ell$ symbols.
We use the same approach for patterns of type \Conc\Plus.
But if there occurs a $\sigma^+$ during these $\ell$ positions,
we add $(\sigma, \ge \ell)$ to the encoding, otherwise $(\sigma, =\ell)$.
For example, $r(aaa^+b^+bc) = (a,{\ge3})(b,{\ge 2})(c{=1})$.

The idea is to compute a subset of $M$ which only contains those $(i,j)$
such that there is no distinct $(i',j')$ in the subset with $i'\le i$ and $j'\ge j$ and both substrings of $t$ are matched by the same pattern $p_\ell$.
We augment each tuple with two boolean flags, indicating whether the first and last run of the pattern $p_\ell$ contains a Kleene Plus.
From this set $M'\subseteq \SetB\times[n]\times[n]\times\SetB$ we can fully recover $M$.
For our above example we get $M'=\{(1, n,n+1,1)\}$.

\begin{lemma}\label{lem:upper:concPlus:main}
  Given a text $t$ of length $n$ and patterns $\{p_i\}_i$ of type \Conc\Plus\ such that $\sum_i \abs{p_i} = m$.
  We can compute $M'$ in time $nm / 2^{\Omega(\sqlog{\min(n,m)})}$.
\end{lemma}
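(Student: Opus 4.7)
The plan is to follow the partition-by-size strategy from the proof of \cref{lem:upper:concOr:main}. Fix a threshold $f = 2^{\Theta(\sqlog{\min(n,m)})}$, call a sub-pattern \emph{large} if its size exceeds $f$ and \emph{small} otherwise, and process the two classes separately; combining them then yields the claimed running time exactly as in the \Conc\Or\ case.

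For the large sub-patterns I would establish an analog of \cref{lem:upper:concOr:large} by running a known near-linear $\O(n\log^2 \hat m + \hat m)$-time $\Conc\Plus$-pattern matching algorithm (as referenced in \cref{sec:intro}) sequentially on each large $p_\ell$, after capping $\abs{p_\ell}\le \min(\abs{\Sigma}n,m)$ as in the proof of \cref{lem:upper:concOr:large}. From the positions reported by the algorithm I would extract, for every $p_\ell$ and every text run in which a match starts, the dominant $(i,j)$ together with the two flags indicating whether the first/last run of $p_\ell$ is a Kleene Plus. The total cost is $\O\left((m/f)\cdot n\log^2 \min(n,m) + m\right) \subseteq nm/2^{\Omega(\sqlog{\min(n,m)})}$.

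For the small sub-patterns I would prove an analog of \cref{lem:upper:concOr:small} by reducing to Unbalanced~\BOV. The key observation is that a $\Conc\Plus$-pattern $p_\ell$ of size $\le f$ has run-length encoding of length $k_\ell \le f$, and a substring of $t$ matches it iff it decomposes into exactly $k_\ell$ runs whose symbols agree with those of $p_\ell$ and whose lengths satisfy the per-run constraint ($=\ell$ or $\ge \ell$). Hence for every starting text run $r_s$ and every $p_\ell$ a single compatibility check between the next $k_\ell$ text runs and $p_\ell$'s encoding decides the match; once it passes, the dominant $(i,j)$ and the two flags are immediate from whether $p_\ell$'s first/last run is a Kleene Plus together with the boundaries of the text runs $r_s$ and $r_s+k_\ell-1$. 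To batch these $\O(nm)$ checks I would encode, for each starting text run $r_s$, a text-vector $v_{r_s}$ concatenating the randomised characteristic vector $\chi$ from \cref{lem:upper:concOr:cV} of the symbol of each of the next $f$ text runs, together with a unary encoding of its length capped at $f+1$; for each $p_\ell$ I would build a pattern-vector $u_\ell$ that, per run slot, uses the flipped $\chi$-encoding of the allowed symbol and the flipped characteristic vector of the allowed length set (singleton $\{\ell\}$ for $=\ell$, interval $\{\ell,\dots,f+1\}$ for $\ge \ell$), padding shorter patterns with a fresh symbol and a trivial length constraint exactly as in \cref{lem:upper:concOr:small}. Orthogonality of $v_{r_s}$ and $u_\ell$ then certifies the match with high probability, and the overall dimension $\O(f^2\log\abs{\Sigma}) \le 2^{\O(\sqlog{\min(n,m)})}$ lets \cref{lem:batchOVgeneral} process all pairs in time $nm/2^{\Omega(\sqlog{\min(n,m)})}$.

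The main obstacle is handling run lengths, which can be as large as $n$ while the \BOV vectors must stay of sub-polynomial dimension. Capping each length at $f+1$ is safe because a pattern of size $\le f$ never demands a length threshold beyond $f$: any text run longer than $f$ satisfies every $\ge$-constraint, and $=$-constraints arise only at intermediate runs, which are fully consumed and therefore cannot exceed length $f$ if they are to match at all. The remaining bookkeeping, namely deduplicating dominant tuples across different starting runs to recover $M'$ and combining small and large patterns through the threshold $f$, is directly analogous to the preceding subsection.
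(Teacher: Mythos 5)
Your high-level strategy matches the paper's (split sub-patterns at threshold $f$, reuse the near-linear \Conc\Plus\ algorithm for the large ones, reduce the small ones to Unbalanced \BOV, and combine exactly as in \cref{lem:upper:concOr:main}). However, your reduction to \BOV for the small patterns has a genuine gap in how the first and last run of each pattern are treated.

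The claim that ``\,$=$-constraints arise only at intermediate runs'' is false: a \Conc\Plus\ pattern such as $a\,a\,b^+\,a$ has run-length encoding $(a,{=2})(b,{\ge1})(a,{=1})$, with hard $=$-constraints on \emph{both} the first and the last run. Your encoding builds one text vector per starting text run $r_s$ and then tests the \emph{full} text-run length against the pattern's per-run constraint, using a singleton set $\{r\}$ for $=r$. But the match's first (resp.\ last) run is only a \emph{suffix} (resp.\ \emph{prefix}) of a text run, so the correct check against the full text-run length at the first and last slot is ``length $\ge r$'' \emph{regardless} of whether the pattern's constraint there is $=r$ or $\ge r$. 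As written, your vectors would reject, e.g., the pattern $ab^+a$ matching inside the text $aabba$ because the first text run has length $2 \ne 1$. Fixing this requires distinguishing the first/last slots from the intermediate slots in the pattern vector, and your capping argument must then be rephrased: capping at $f+1$ is safe because at intermediate slots the text run is fully consumed, and at the first/last slots the effective constraint is always a $\ge$-test. A related problem: the \BOV oracle only tells you \emph{whether some} pattern matches a given text vector, not which one, so you cannot recover the canonical $(i,j)$ offsets (which depend on $r_1$, $r_k$, and the number of runs $k_\ell$) from a text-run index alone; you would need to further partition the patterns by these parameters, costing an extra polynomial-in-$f$ factor that must be accounted for in the running-time bound.

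For comparison, the paper sidesteps both issues by enumerating all $\le nf^4$ substrings of $t$ of length at most $f^4$, encoding each substring's own run-length encoding (so the partial first/last runs carry the correct, exact lengths and $(i,j)$ is implicit), and handling runs of length $\ge f^3$ separately by exhaustive search so that all remaining lengths fit in a unary encoding of size $\O(f^3)$ per run. It also uses a deterministic binary encoding $\bin{c}\,\overline{\bin c}$ for symbols rather than the randomised Bloom-filter $\chi$ of \cref{lem:upper:concOr:cV}, since each run has only a single symbol and no set membership is needed --- this keeps \cref{lem:upper:concPlus:small} deterministic, whereas your use of $\chi$ would reintroduce an unnecessary error probability.
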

\begin{lemma}[Large Sub-Patterns]\label{lem:upper:concPlus:large}
  Given a text $t$ of length $n$
  and patterns $p_1,\dots,p_\ell$ of type \Conc\Plus\ such that $\sum_{i=1}^\ell \abs{p_i} \le m$.
  We can compute $M'$ in time $\O(\ell n \log^2 \min(n,m) + m)$.
\end{lemma}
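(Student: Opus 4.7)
The plan is to mirror the proof of \cref{lem:upper:concOr:large}, substituting the \Conc\Or algorithm of Cole--Hariharan by the $\O(n\log^2 \hat m+\hat m)$ time algorithm of Backurs and Indyk~\cite{BackursI16} for \Conc\Plus pattern matching, and then amplifying it to output the compact set $M'$ of maximal matches rather than a single Boolean. First I would prune: every Kleene Plus in a \Conc\Plus pattern consumes at least one symbol of the matched substring, so any $p_i$ with $\abs{p_i}>cn$ for a suitable constant $c$ cannot match any substring of $t$ and may be discarded. After this, $\abs{p_i}\le\min(\O(n),m)$, so $\log^2 \abs{p_i}=\O(\log^2\min(n,m))$ for every remaining pattern.

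Next I would invoke the Backurs--Indyk routine once per remaining pattern. On its own this only decides whether some substring of $t$ is matched, but its internal table is built over the run-length encoding $(\sigma_1,\rho_1)\cdots(\sigma_k,\rho_k)$ of $p_i$ and computes, for every position $i$ in $t$, the collection of indices $j$ with $t_i\cdots t_j\in\lang(p_i)$. All matches stemming from a fixed alignment of the \emph{rigid} middle runs (those $\rho_h$ of the form $=\ell_h$) of $p_i$ differ only in how far the \emph{flexible} first and last runs are extended to the left and right, so they form a contiguous rectangle of pairs $(i,j)$; the single tuple that survives the definition of $M'$ is the one that extends this rectangle maximally. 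A left-to-right sweep over the text collapses each such rectangle into its maximal representative and reads off the two boolean flags directly from the outermost runs of $p_i$, at an additional cost of $\O(n)$ per pattern. Summing $\O(n\log^2\min(n,m)+\abs{p_i})$ over the $\ell$ remaining patterns yields the claimed $\O(\ell n\log^2\min(n,m)+m)$ bound; the $\O(\ell n)$ size of the output is absorbed into the leading term.

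The main obstacle is to verify that the Backurs--Indyk algorithm really exposes, within its stated running time, enough information to enumerate these maximal rectangles rather than merely asserting existence, without silently hiding an extra logarithmic factor in the enumeration step. Since their construction is explicitly organised around the run-length encoding of the pattern---with the flexible prefix and suffix runs treated separately from the rigid middle---the necessary information appears to be available as a byproduct of their dynamic programming, but a careful reading of their data structure is required to make this rigorous and to ensure that ties between rectangles arising from different alignments of the rigid middle are handled correctly.
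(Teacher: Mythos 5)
The pruning step and the decision to invoke the Backurs--Indyk $\O(n\log^2\hat m+\hat m)$ routine for \Conc\Plus\ pattern matching are both on the mark, but the core of the argument goes in a harder direction than the paper and, as you yourself concede, leaves a genuine gap. You propose to run Backurs--Indyk on the \emph{original} pattern and then reconstruct the maximal rectangles by inspecting the algorithm's internal dynamic-programming tables. The trouble is twofold: first, the original pattern can have $\Theta(n^2)$ matching substrings (your own example $t=0^n1^n$, $p=0^+1^+$ shows this), so the routine cannot simply output them; and second, whether the internal state of the Backurs--Indyk algorithm exposes enough structure to enumerate the rectangle representatives within the stated time is exactly what you say you cannot verify. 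An argument that hinges on ``a careful reading of their data structure'' that you have not done is not a proof.

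The paper sidesteps all of this with a much simpler preprocessing step that your proposal is missing: \emph{modify} each $p_i$ by stripping the Kleene Plus from its first and last run, turning them into rigid runs $(\sigma,{=}\ell)$. With the outermost runs rigid, the set of substrings matched by the modified pattern is precisely the set of canonical representatives that $M'$ is supposed to record, so the Backurs--Indyk routine can be used completely as a black box: for each $(i,j)$ it reports, you emit $(f,i,j,e)$ where the flags $f,e$ record whether the first/last run of the \emph{original} $p_i$ carried a Kleene Plus. No peeking into the algorithm's internals, no sweep, no rectangle bookkeeping. You should also be careful about which representative $M'$ stores: the paper's example $M'=\{(1,n,n+1,1)\}$ shows it is the \emph{shortest} interval in each class that is kept (the flags say in which directions it extends), whereas you speak of collapsing rectangles to their ``maximal representative,'' which has the semantics backwards. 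Replacing your internals-based extraction with the pattern-modification trick would close the gap and recover the paper's argument.
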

\begin{proof}
  We modify all patterns such that their first and last run is of the form $(\sigma,{=\ell})$, i.e.\ we remove every Kleene Plus from these two runs.
  There is a $\O(n \log^2 \hat m + \hat m)$ time algorithm for \Conc\Plus-pattern matching with patterns of size $\hat m$ shown in \cite{BackursI16}.
  We run this algorithm sequentially for each altered pattern.
  For every tuple $(i,j)$ the algorithm outputs, we add $(f,i,j,e)$ to $M'$
  where $f$ and $e$ are set to 1 iff the first and last run of the pattern contain a Kleene Plus, respectively.

  We can ignore all $p_i$ with $\abs{p_i}>\abs{\Sigma}n$ because they match more than $n$ symbols.
  Since $\abs{p_i} \le \min(\abs{\Sigma} n, m) \le \min(n^2, m) \le \min(n^2,m^2) = 2\log\min(n,m)$,
  each iteration takes time $\O(n \log^2 \min(n,m) + \abs{p_i})$
  and the claim follows.
\end{proof}

\begin{lemma}[Small Sub-Patterns]\label{lem:upper:concPlus:small}
  For a text $t$ of length $n$ and patterns $p_1, \dots, p_m$ of type \Conc\Plus,
  there is a $f \in 2^{\Omega(\sqlog{\min(n,m)})}$ such that the following holds:
  If $\abs{p_i} \le f$ for all $i\in[m]$,
  then we can compute $M'$ in time $nm / 2^{\Omega(\sqlog{\min(n,m)})}$.
\end{lemma}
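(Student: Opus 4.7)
The plan is to follow the strategy of \cref{lem:upper:concOr:small} but on the run-length encoding $r(\cdot)$ introduced just above, so that the variable-length matches caused by Kleene Pluses do not force us to enumerate all text substrings. First I would compute $r(t) = (\tau_1, k_1)\cdots (\tau_N, k_N)$ with $N \le n$, and for every pattern $p_i$ the encoding $r(p_i) = (\sigma^{(i)}_1, X^{(i)}_1)\cdots (\sigma^{(i)}_{j_i}, X^{(i)}_{j_i})$, where each $X^{(i)}_s$ is of type $=\ell_s^{(i)}$ or $\ge\ell_s^{(i)}$ and $j_i \le \abs{p_i} \le f$. The key observation is that any occurrence of $p_i$ in $t$ spans exactly $j_i$ consecutive text runs, and its extremal tuple in $M'$ is fully determined by the starting text-run index together with the flags of the two outer runs of $p_i$; so it suffices to decide, for every pair (starting run $r$, pattern $p_i$), whether the $j_i$ text runs beginning at $r$ satisfy the character constraint $\tau_{r+s-1}=\sigma_s^{(i)}$ in every slot and the length constraint of $X_s^{(i)}$, relaxed to the one-sided $\ge\ell_s^{(i)}$ constraint in the first and last slots because those runs are only partially consumed.

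The main technical step is to compile this check into a single unbalanced \BOV instance. Each vector is the concatenation of $f$ per-slot blocks. The character block uses a deterministic $\O(\log\abs{\Sigma})$-dimensional encoding of a single symbol under which orthogonality corresponds to equality; equivalently one may reuse $\chi$ from \cref{lem:upper:concOr:cV} applied to singletons, at the cost of a polynomially small error probability. The length block caps the run length at $f{+}1$ and uses a one-hot / threshold gadget: the text side places a single~$1$ at the capped length, and the pattern side marks all forbidden lengths, namely every length different from $\ell$ when $X_s^{(i)}=({=}\ell)$ in an interior slot, every length below $\ell$ when $X_s^{(i)}=({\ge}\ell)$, and every length below $\ell_s^{(i)}$ in the first or last slot regardless of type; padding is added with always-satisfied blocks whenever $j_i<f$. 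The total dimension is $d=\O(f^2\log\abs{\Sigma})$, which under the choice $f=2^{\sqrt\epsilon/3\cdot \sqlog{\min(n,m)}}$ from \cref{lem:upper:concOr:small} lies inside the \BOV regime $d\le 2^{\O(\sqlog{\min(n,m)})}$.

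Finally, to handle the unknown $j_i$ on the text side, I would build one text vector per pair (starting run $r\in[N]$, candidate number of runs $j\in[f]$), giving at most $nf$ text vectors against $m$ pattern vectors, and invoke \cref{lem:batchOVgeneral}. The resulting running time $nfm/2^{\Omega(\sqlog{\min(nf,m)})}=nm/2^{\Omega(\sqlog{\min(n,m)})}$ absorbs the factor $f$ into the exponent exactly as in the \Conc\Or\ case. Each orthogonal pair $(r,p_i)$ is translated into an entry of $M'$ by reading the positions of the matched text runs from $r(t)$ and extending $i_{\text{lo}}$ (respectively $i_{\text{hi}}$) to the outer end of the enclosing text run precisely when the corresponding outer constraint of $p_i$ is of $\ge$-type, so that the resulting tuple is extremal.

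The main obstacle I expect is the joint handling of variable match lengths and mixed $=$/$\ge$ constraints without blowing up $d$; the per-slot one-hot/threshold encoding together with the length-enumeration on the text side seems to be the cleanest way to achieve this, and once it is in place the runtime analysis mirrors the one for \cref{lem:upper:concOr:small} almost verbatim.
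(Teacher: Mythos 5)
You follow the paper's overall recipe: work on the run-length encoding, compile per-run matching into orthogonality of bit-vectors, and hand the collection to \cref{lem:batchOVgeneral}. Indexing text vectors by (starting run, number of runs) rather than by all substrings $t_i\cdots t_{i+j'-1}$ is actually a clean simplification, and the $f{+}1$ cap on run lengths is sound because every length constraint $\ell$ of a pattern of size at most $f$ satisfies $\ell\le f$; this even lets you dispense with the paper's separate exhaustive handling of runs of length $\ge f^3$. The per-slot symbol/length encoding and the dimension bookkeeping are essentially the paper's $\bin{c}\overline{\bin c}$ plus unary-threshold idea.

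The gap is in the final step. \cref{lem:batchOVgeneral} only reports, for each text vector, \emph{whether} some pattern vector is orthogonal to it; it does not return the witnessing $p_i$, and the two outer slots are deliberately relaxed to $\ge$-constraints. So after the BatchOV call you know $(r,j)$ and (after partitioning $P$ by the two flags, as the paper does) the flag pair $(f,e)$, but not $\ell^{(i)}_1$ or $\ell^{(i)}_{j_i}$. Yet the boundary positions to record in $M'$ are $i_{\text{lo}}=b_r-\ell^{(i)}_1+1$ and $i_{\text{hi}}=a_{r+j-1}+\ell^{(i)}_{j_i}-1$: the graph of \cref{thm:upper:main}~\cref{thm:upper:main:concPlus} can only \emph{extend} a recorded match outward via the run edges $(v_{k-1}^1,v_k^1)$, and the node $v_{a_r-1}^1$ at a run boundary is reachable only through $v_{a_r-1}^0$, so recording the outer ends $a_r,b_{r+j-1}$ (your proposal for the $\ge$-type case) makes the graph miss every strictly shorter legal start or end inside those runs, and for the $=$-type case you simply cannot compute $i_{\text{lo}},i_{\text{hi}}$ at all. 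The paper sidesteps this by having each text vector correspond to one exact substring $(i,j)$, accepting $\le nf^4$ text vectors. To salvage your variant you would additionally need to enumerate the at most $f^2$ threshold pairs $(\ell_1,\ell_{j})\in[f]^2$ and partition $P$ by them as well, running BatchOV once per combination (the extra $f^2$ factor is absorbable after shrinking $f$ a bit); as written, the translation back to $M'$ does not go through.
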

We postpone the proof of this lemma and first show the final upper bound
as the proof of \cref{lem:upper:concOr:main} also works for \cref{lem:upper:concPlus:main}.
\begin{proof}[Proof of \cref{thm:upper:main}~\cref{thm:upper:main:concPlus}]
  Use \cref{lem:upper:concPlus:main} to construct $M'$
  and check for \Or\Conc\Plus-pattern matching whether $M'=\emptyset$.

  For \Plus\Or\Conc\Plus-membership we define a graph $G=(V,E)$.
  Instead of having nodes $v_0,\dots,v_n$ as for \Plus\Or\Conc\Or-membership we have for each node $v_i$ three versions,
  $V \deff \{ v_i^0, v_i^1, v_i^2 \mid 0\le i \le n \}$.
  The versions correspond to the different ways a suffix or prefix of a run can be matched.
  For node $v_i^0$ we need that all symbols are explicitly matched by a pattern.
  For $v_i^1$ we need that the suffix of the run containing $t_i$ has to be matched by a pattern starting with $t_{i}^+$.
  For $v_i^2$ we say that the prefix has to be matched by a pattern ending with $t_{i-1}^+$.
  Hence, we add edges for the runs simulating the $\sigma^+$ of a pattern:
  For each run $(\sigma,\ell)$ from position $i$ to $j$ in $t$ with $\ell >1$ we add the edges
  $(v_{k-1}^1, v_{k}^1)$ and $(v_{k}^2, v_{k+1}^2)$ to the graph for $i \le k < j$.
  Further, we add edges $(v_i^2, v_i^0)$ and $(v_i^0, v_i^1)$ to change between the states for all $0\le i \le n$.
  While this construction solely depends on the text,
  we add for each $(f, i, j, e)\in M'$ the edge $(v_{i-1}^f, v_j^{2e})$ to the graph.
  We claim that there is a path from $v_0^0$ to $v_n^0$ if and only if $t \in \lang((p_1 \mid \dots \mid p_k)^+)$.
  We prove this claim in \cref{appendix:correctness}.
  See \cref{fig:upper:concPlus:example} for an example of the construction.

  The time for the construction is linear in the output size.
  The graph has $\Theta(n)$ nodes and $\abs{M'} + \O(n)$ edges.
  As the DFS runs in linear time, the overall runtime follows.
\end{proof}
\begin{figure}
  \centering
  \includegraphics{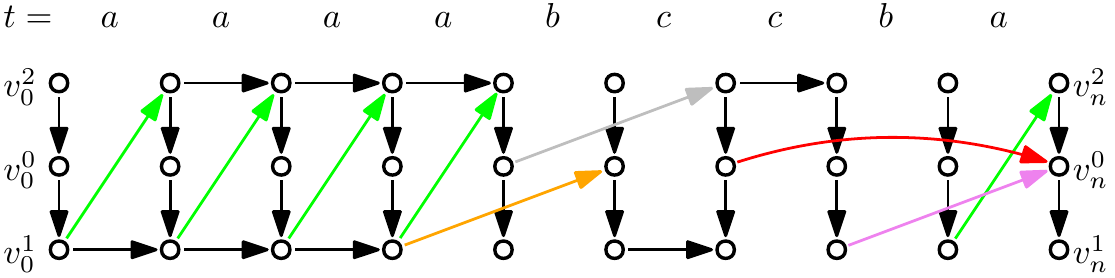}
  \caption{Graph for the pattern
  $({\color{green} a^+} \mid {\color{orange} a^+\!b} \mid {\color{gray}bc^+} \mid {\color{red}cba} \mid {\color{magenta} b^+\!a})^+$ and text $aaaabccba$.}
  \label{fig:upper:concPlus:example}
\end{figure}
It remains to show how the set $M'$ is constructed for small patterns.
\begin{proof}[Proof of \cref{lem:upper:concPlus:small}]
  Set $f \deff 2^{\sqrt\epsilon/5 \sqlog{\min(n,m)}}$ with $\epsilon$ as in \cref{lem:batchOVgeneral}
  and consider all $\le n/f^3$ many long runs of length $\ge f^3$ in $t$.
  Check for each long run by an exhaustive search whether there is a $p_i$ such that the following holds:
  The run in the text is matched by one of the $\le\abs{p_i}$ runs in $p_i$ and the remaining runs of $p_i$ can match the contiguous parts of the text.
  This check can be performed in the following time for all large runs:
  \[
    \frac {n}{f^3} \sum_{i=1}^m \abs{p_i}^2
    \le \frac {n}{f^3} \sum_{i=1}^m f^2
    \le \frac {n m}{f}
  \]
  Since a pattern can have at most $f$ runs and each run matches now at most $f^3$ symbols,
  it remains to check substrings of $t$ of length at most $f^4$.
  Hence, define $T = \{ t_i \cdots t_{i+j-1} \mid \forall j \in [f^4], i \in [n-j+1] \}$
  and ignore all substrings with more than $f$ runs or runs longer than $f^3$.
  Convert these substrings and the patterns into bit-vectors
  by replacing the runs
  by the following bit-vectors of length $2\log \abs{\Sigma} + 2f^3$:
  \begin{align*}
    (c,r) & \mapsto \bin{c} \overline{\bin c} 0^r 1^{f^3-r}1^r 0^{f^3-r}
    &
    (c,{=r}) & \mapsto \overline{\bin c} \bin{c} 1^r 0^{f^3-r}0^r 1^{f^3-r} \\
    &&
    (c,{\ge r}) &\mapsto \overline{\bin c} \bin{c} 1^r 0^{f^3-r}0^{f^3}
  \end{align*}
  $\bin{c}$ denotes the unique binary representation of symbol $c$ and
  $\overline{\bin c}$ its bit-wise negation.
  One can easily see that two such vectors are orthogonal if and only if the runs match each other.
  Thus, a text and a pattern vector resulting from this transformation are orthogonal iff the text is matched by the pattern.
  By padding the vectors with 1s we normalise their length
  but still preserve orthogonality between text and pattern vectors with the same number of runs.
  Let $T'$ and $P'$ be the resulting sets with $\le nf^4$ and $m$ elements, respectively.

  From $\log \abs\Sigma \le \log\min(n,m)\le f$ we get
  $f(2\log \abs\Sigma + 2 f^3) \le f^5 \le 2^{\sqrt\epsilon \sqlog{\min(nf^4,m)}}$
  and hence can apply \cref{lem:batchOVgeneral} for $T'$ and $P'$.
  Actually we have to partition $P'$ depending on whether a pattern has a Kleene Plus in its first and last run.
  Thus, we need four iterations but we can always duplicate patterns such that there are $m$ patterns in each group.
  \[
    \frac{n f^4 m}{2^{\epsilon/\sqrt\epsilon \sqlog{\min(nf^4,m)}}}
    \le
    \frac{n m}{2^{(\sqrt\epsilon-4/5\sqrt\epsilon) \sqlog{\min(n,m)}}}
    \in
    \frac{n m}{2^{\Omega(\sqlog{\min(n,m)})}}
    \qedhere
  \]
\end{proof}

\section{Lower Bounds for Pattern Matching}\label{sec:lowerMatch}
Abboud and Bringmann showed in \cite{AbboudB18} a lower bound for pattern matching (and membership) in general of $\O(nm/\log^{7+\epsilon} n)$,
unless FSH is false.
We use this result and the corresponding reduction as a basis to show similar lower bounds for the remaining hard pattern types.
But we also do not start our reductions directly from \FormSat but from \FormPair as defined in \cref{sec:prelim}
and use the corresponding \FPH from \cref{hypo:FPH}.
\begin{theorem}\label{thm:lower:main}
  There are constants
  $c_\text{\Conc\Star}=76, c_\text{\Conc\Plus\Conc}=c_\text{\Conc\Or\Plus}=72, c_\text{\Conc\Or\Conc}=81$, and $c_\text{\Conc\Plus\Or}=27$
  such that
  pattern matching with patterns of type $T \in \{\text{\Conc\Star}, \text{\Conc\Plus\Conc},\text{\Conc\Or\Conc}, \text{\Conc\Plus\Or}, \text{\Conc\Or\Plus}\}$
  cannot be solved in time $\O({nm}/{\log^{c_T}n})$ even for constant sized alphabets,
  unless FPH is false.
\end{theorem}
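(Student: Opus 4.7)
The plan is to prove Theorem~\ref{thm:lower:main} by establishing, for each pattern type $T$ in the list, a linear-time reduction from \FormPair (Hypothesis~\ref{hypo:FPH}) to $T$-pattern matching, following and refining the Abboud--Bringmann template~\cite{AbboudB18}. Given a \FormPair instance with a monotone De~Morgan formula $F$ of size $s$ and half-assignment sets $A, B$ of sizes $n, m$, I will construct a text $t$ of length $N = \O(n \cdot s^{\alpha_T})$ and a homogeneous pattern $p$ of type $T$ of size $M = \O(m \cdot s^{\beta_T})$ (up to polylogarithmic factors) such that some substring of $t$ is matched by $p$ iff there is a pair $(a,b)\in A\times B$ with $F(a,b)=\true$. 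A standard binary encoding of symbols then reduces the alphabet to constant size at the cost of an additional $\log\abs{\Sigma}$ factor.

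The per-type constructions share a common toolkit: I first balance $F$ to depth $\O(\log s)$ and then decompose it into gadgets respecting the structure of~$T$. The building blocks are a \emph{variable gadget} that matches precisely when a specific bit of the combined assignment $(a,b)$ equals~$1$; an \emph{AND-gadget} built from concatenation; and an \emph{OR-gadget} built either from the \Or\ operator (if permitted by the type) or from the freedom of choosing where the matched substring begins in~$t$. For the ``\Conc-outer'' types \Conc\Or\Conc\ and \Conc\Plus\Conc, the formula is decomposed into three-level pieces, the outer concatenation stringing them in DFS order while the middle \Or\ selects, or the middle \Plus\ repeats, the correct branch at each piece. For \Conc\Or\Plus\ and \Conc\Plus\Or, where the bottom level is restricted to single-symbol alternatives or repetitions, I unroll the formula evaluation into a truth-table-style sequence matched by the middle \Or\ or \Plus. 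For \Conc\Star, the star enables only a single top-level choice per block, so the entire formula must be pushed into the text as rows indexed by $a\in A$, and the pattern iterates through rows to find a satisfying one.

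Correctness follows from local gadget invariants together with inter-gadget padding that rules out ``crosstalk'' between gadgets, so that a match exists iff a consistent global assignment satisfies $F$. With $NM = nm\cdot \O(s^{\alpha_T + \beta_T})$, a hypothetical $\O(NM/\log^{c_T} N) = \O(nm\cdot s^{\alpha_T+\beta_T}/\log^{c_T} n)$ algorithm for $T$-pattern matching would solve \FormPair in a running time ruled out by FPH as soon as $c_T \ge 3(\alpha_T+\beta_T)+2$. The announced constants ($76$, $72$, $81$, $27$, $72$) are the smallest integers produced by the concrete reductions in the subsequent sections.

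The main obstacle will be designing the gadgets under the severe structural constraints of each type. For \Conc\Star\ in particular, the outer Kleene Star forbids nesting any non-repeatable subpattern, so we cannot simply copy the formula tree into the pattern; it must instead be encoded into the text and recognised in pieces by carefully aligned star iterations. For \Conc\Plus\Or\ the pattern can essentially only cycle through a fixed set of single-symbol choices, so the evaluation again has to happen in the text, which is why the alphabet construction is especially delicate there. Balancing these structural constraints against the constant-alphabet requirement, which costs further log-factors in the blow-up, is what drives the precise values of $c_T$ reported in the theorem.
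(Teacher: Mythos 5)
Your high-level framework is essentially the paper's: reduce from \FormPair, balance $F$ via Bonet--Buss to depth $\O(\log s)$ at the cost of squaring the size, build a text of length $\O(n\cdot s^{\alpha_T}\log s)$ and a pattern of size $\O(m\cdot s^{\beta_T}\log s)$, and read off $c_T$ from the requirement $c_T\geq 3(\alpha_T+\beta_T)+2$ dictated by FPH. That arithmetic is correct and matches how the paper derives the constants (e.g.\ for \Conc\Plus\Conc, $\alpha_T=\beta_T=6\ln 5+2\approx 11.66$ gives $3(\alpha_T+\beta_T)+2\approx 71.95$, whence $c_T=72$). However, the technical content lives in Lemma~\ref{lem:lower:reduction}, and there your proposal both glosses over the hard parts and, for some types, sketches a route that diverges from the paper's and looks doubtful.

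Concretely: the paper uses a \emph{single} recursive gadget scheme for all five types. For each gate $g$ it defines four strings/patterns $t_g$, $u_g$, $p_g$, $q_g$ by structural recursion, with the INPUT and AND encodings shared across types and only the OR-gate encoding customised per type (using $(\cdot)^+$, alternatives, or starred padding as available). Your description for \Conc\Or\Conc\ and \Conc\Plus\Conc\ is roughly compatible with this, but for \Conc\Or\Plus\ you propose to ``unroll the formula evaluation into a truth-table-style sequence'' and for \Conc\Star\ to ``push the entire formula into the text as rows indexed by $a\in A$''. Taken literally, a truth-table over even a few levels explodes exponentially and cannot yield the claimed $\poly(s)$ blow-up; and the paper's \Conc\Star\ reduction does not abandon the recursive pattern gadget --- it observes $\sigma^+=\sigma\sigma^\Star$ and introduces ``starred'' variants $\starred{u}$ of the universal strings so the \Conc\Plus\Conc\ gadgets carry over with a constant-factor size increase. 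You should recognise that the recursion is what keeps $\alpha_T,\beta_T$ bounded and that a \emph{fresh} global encoding per type is both unnecessary and risky.

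Second, the claim that ``correctness follows from local gadget invariants together with inter-gadget padding that rules out crosstalk'' skips the two delicate steps. One is the OR-gate correctness argument, which in the paper is a careful counting argument (how many separators $G$ are present, or how many symbol changes a string can have), not a generic padding claim; get this wrong and the pattern can spuriously match. The other is the outer choice-of-$(a,b)$ gadget: the text repeats $3n$ blocks so the pattern can ``slide'' to pick a starting block, and the pattern's $m$ groups are wrapped with anchors so that exactly one group can spill over into the next text block when a satisfying pair exists. Without an explicit alignment/anchoring mechanism and its counting argument (cf.\ the paper's Lemmas~\ref{lem:lower:cpc:outerOR:completeness}--\ref{lem:lower:cpc:outerOR:correctness} and Theorem~\ref{thm:lower:outerOR}), the ``iff'' in your reduction does not follow. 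Finally, the paper works over a constant alphabet \emph{directly} by binary-encoding gate IDs inside the gadgets; postponing alphabet reduction to a final step as you suggest is possible but then you must verify that the final re-encoding stays compatible with each type's structural constraints (for instance, \Conc\Plus\Or\ only allows single-symbol alternatives at the bottom), and you must account for the extra $\log$-factor in the constant derivation.
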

We show the lower bounds by a reduction from \FormPair to pattern matching:
\begin{lemma}\label{lem:lower:reduction}
  Given a \FormPair instance with a formula of size $s$, depth $d$, and sets $A$ and $B$ with $n$ and $m\le n$ assignments.
  (If $m > n$, swap $A$ and $B$.)
  We can reduce this to pattern matching with a text $t$ and a pattern $p$ of type
  $T \in \{\text{\Conc\Star}, \text{\Conc\Plus\Conc},\text{\Conc\Or\Conc}, \text{\Conc\Plus\Or}, \text{\Conc\Or\Plus}\}$
  over a constant sized alphabet in time linear in the output size.

  $\abs{t} \in \O(n 5^d s \log s)$ except for \Conc\Plus\Or, there we have $\abs{t}\in \O(n 2^d s \log s)$
  Further, $\abs{p} \in \O(m b_T^d s \log s)$ with
  $b_\text{\Conc\Star}=6, b_\text{\Conc\Plus\Conc}=b_\text{\Conc\Or\Plus}=5, b_\text{\Conc\Or\Conc}=8$, and $b_\text{\Conc\Plus\Or}=1$.
\end{lemma}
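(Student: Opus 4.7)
The plan is to mirror the \FormPair-to-pattern-matching reduction of Abboud and Bringmann~\cite{AbboudB18}, while constraining every sub-gadget so that the resulting pattern has one of the five listed types. At the top level I would write the text as $t = T(a_1) \# T(a_2) \# \cdots \# T(a_n)$, where $T(a_i)$ is a text-gadget for assignment $a_i\in A$ and $\#$ is a fresh delimiter. The pattern $p$ is to be designed so that every substring it matches lies inside a single block $T(a_i)$, and matching inside that block is possible iff there exists $b\in B$ with $F(a_i,b)=\true$.

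I would build the text-gadgets $T_g(a)$ and pattern-gadgets $P_g$ by induction on the formula $F$. At a leaf gate reading variable $x_i$, the gadget encodes a single bit-comparison between the halves of $a$ and $b$ at position $i$, with an $\O(\log s)$-bit index encoding responsible for the $\log s$ factor in both size bounds. For an AND gate $g = g_1\land g_2$, the children would be concatenated in both text and pattern, separated by a fresh delimiter that forces both children to match on their respective sub-block. For an OR gate $g = g_1\lor g_2$, the pattern-gadget is an alternative over $P_{g_1}$ and $P_{g_2}$ (or a simulated alternative, see below), while the text-gadget places $T_{g_1}(a)$ and $T_{g_2}(a)$ in sequence, padded so that at most one alternative can align with a correct substring of the text.

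The five pattern types differ only in which operators are available at which level, so I would push the alternatives and repetitions into the correct layers of the parse tree. When the type already contains an \Or\ at one of the outer layers (\Conc\Or\Conc, \Conc\Or\Plus, \Conc\Plus\Or), OR-gates and the enumeration over $b\in B$ are expressed directly with that \Or, and it is the latter enumeration that contributes the factor $m$ to the pattern bound. When no \Or\ is available at the OR-gate level (notably \Conc\Star\ and \Conc\Plus\Conc), I would simulate disjunction on the text side by laying out several candidate alignments in series, separated by padding, and letting the outer Kleene operator in the pattern skip the unchosen ones; this raises the per-gate blow-up and accounts for the larger constants $b_\text{\Conc\Or\Conc}=8$ and $b_\text{\Conc\Star}=6$. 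Throughout, the alphabet is kept of constant size by reusing a small fixed set of delimiter symbols across all recursion levels.

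The size bookkeeping is then routine: every gate contributes at most a factor $5$ (or $2$ for \Conc\Plus\Or, whose text-side OR-gadget is cheaper) to the text-gadget and at most a factor $b_T$ to the pattern-gadget, yielding the claimed $\abs{t}\in \O(n \cdot 5^d s\log s)$ (respectively $\O(n\cdot 2^d s\log s)$) and $\abs{p}\in \O(m \cdot b_T^d s\log s)$. The hard part will be designing, for each restricted type, an OR-gate gadget that is both sound and complete under the Kleene operators of the parent, while still fitting the fixed three-layer shape dictated by the type. I expect the \Conc\Plus\Conc\ case to be the most delicate, because a single \Plus\ must simulate both disjunction and repetition using nothing but concatenations of symbols at the innermost layer and with no \Or\ available at either inner level.
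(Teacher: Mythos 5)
You have the right high-level skeleton — encode the two sides of \FormPair into text and pattern, build gate gadgets recursively, reuse a constant alphabet with delimiters, and attribute the $\log s$ factor to gate-ID encoding — but your handling of the formula's OR gates would not produce patterns of the claimed bounded-depth homogeneous type, and this is the crux of the whole lemma.

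You propose that for types containing \Or\ (\Conc\Or\Conc, \Conc\Or\Plus, \Conc\Plus\Or), the OR gates of the formula are encoded ``directly with that \Or.'' But the formula has depth $d = \Theta(\log s)$, so nesting of AND and OR gates goes $\Theta(\log s)$ deep. If each OR gate becomes a pattern alternative and each AND gate a concatenation, the resulting pattern alternates $\Conc$ and $\Or$ to depth $\Theta(\log s)$, which is \emph{not} a homogeneous pattern of the bounded depth-3 type $\Conc\Or\Conc$ (say). The single \Or\ layer permitted by the type must sit at a fixed level of the parse tree and can only wrap innermost concatenations of symbols; it cannot be used to express the formula's recursive disjunctions. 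In the paper the \Or\ is used \emph{only} at the leaf level, e.g.\ to write $(0\Or 1\Or 2)^+$ for $\Conc\Plus\Or$, or to replace $\alpha^+$ by $(\alpha\Or\alpha\alpha\Or\alpha\alpha\alpha)$ for $\Conc\Or\Conc$; the formula's OR gates are \emph{never} encoded with a pattern alternative. The same issue arises for the enumeration over $B$: the paper realises it with a long outer \emph{concatenation} of pattern groups (with sliding allowed by $+$/$\star$ gadgets), not with an alternative, again precisely because an alternative at that position would violate the type.

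The idea you are missing is the pair of \emph{universal} gadgets $u_g$ (assignment-independent text) and $q_g$ (assignment-independent pattern) defined alongside $t_g$ and $p_g$, with the invariants $u_g \in \lang(q_g)\cap\lang(p_g(b))$ and $t_g(a)\in\lang(q_g)$ for all $a,b$. These let the OR-gate pattern include \emph{both} sub-patterns concatenated, interleaved with universal material, for example (for $\Conc\Plus\Conc$)
\[
  p_g = (u_1 GG u_2 G)^+\, (q_1 GG p_2)\, G\, (p_1 GG q_2)\, (G u_1 GG u_2)^+,
\]
against a text $t_g$ whose padding admits exactly two ways to align, in each of which precisely one of $p_1,p_2$ must swallow the actual sub-text $t_i$ while the other consumes a universal block $u_j$ (always possible by the invariants). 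This is what simulates disjunction without ever using a pattern-level \Or, and it is what drives the per-gate blow-up constants $b_T$ in your size bookkeeping. Your sketch of ``laying out candidate alignments and letting the Kleene operator skip the unchosen ones'' gestures at the right mechanism but only for $\Conc\Star$ and $\Conc\Plus\Conc$; in fact it is required for all five types, and making it sound requires the $u_g,q_g$ machinery (and, for soundness of the reverse direction, a counting invariant such as the number-of-symbol-changes argument for $\Conc\Star$ and $\Conc\Or\Plus$). Without these ingredients the correctness claims do not go through.
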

\begin{proof}[Proof of \cref{thm:lower:main}]
  We show the result only for patterns of type \Conc\Plus\Conc, the proof for the other types is analogous.

  Let $F$ be a formula of size $s$ with two sets of $n$ half-assignments each, and $d$ be the depth of $F$.
  Applying the depth-reduction technique of Bonet and Buss \cite{BonetB94}
  gives us an equivalent formula $F'$ with size $s' \le s^2$ and depth $d' \le 6 \ln s$.
  By \cref{lem:lower:reduction}
  we get a pattern matching instance with a text $t$ and pattern $p$.
  Both of size
  $\O(n 5^{d'}s'\log s') = \O(n5^{6 \ln s}s^2 \log s) = \O(ns^{6 \ln 5+2}\log s)$.
  Now assume there is an algorithm for pattern matching with the stated running time
  and run it on $t$ and $p$:
  \[ \O\left(\frac{ns^{6 \ln 5+2}\log s \cdot ns^{6 \ln 5+2}\log s}{\log^{72} (ns^{6 \ln 2+2}\log s)} \right)
  \subseteq
  \O\left(\frac{n^2s^{12 \ln 5+4}\log^2 s}{\log^{72} n} \right)
  \subseteq
  \O\left(\frac{n^2s^{23.314}}{\log^{72} n} \right)
  \]
  But this contradicts FPH which was assumed to be true.
\end{proof}

\subsection{Proof of \texorpdfstring{\cref{lem:lower:reduction}}{Lemma \ref{lem:lower:reduction}} for \texorpdfstring{\Conc\Plus\Conc}{o+o}}
\label{lower:concPlusConc}
As the details of the reductions heavily depend on the pattern types, we give each reduction in a separate section.
But we use the reduction for \Conc\Plus\Conc\ as a basis for the other proofs.
For all reductions we first encode the evaluation of a formula on two half-assignments,
then the encoding for finding such a pair.
We define the actual text $t_g$ and the actual pattern $p_g$.
The universal text $u_g$ and universal pattern $q_g$ are needed for technical purposes and do not depend on the assignments.

\subsubsection{Encoding the Formula}
A formula of size $s$ (i.e.\ $s$ leaves) has $s-1$ inner gates and thus $2s-1$ gates in total.
We assign every gate $g$ a unique integer in $[2s-1]$, its ID, and write $\bin{g}$ for the binary encoding of the ID of gate $g$.
We can always see $\bin{g}$ as a sequence of $\lfloor{ \log(2s-1)}\rfloor +1\le\lfloor{\log s}\rfloor+2=\Theta(\log s)$ bits padded with zeros if necessary.
For a fixed gate $g$ we define a separator gadget $G\deff 2\bin{g} 2$ with $2$ as a new symbol.

\begin{description}
  \item [INPUT Gate]
  The text and the pattern depend on the variable that is read:

    For $F_g(a,b)=a_i$
    define $t_g \deff 0a_i1 $ as the text and $p_g \deff 0^+ 1 1^+ $ as the pattern.

    For $F_g(a,b)=b_i$
    define $t_g \deff 0 1 1 $ as the text and $p_g \deff 0^+b_i1^+ $ as the pattern.

  Define $u_g \deff 0011$ as the universal text and $q_g \deff 0^+1^+$ as the universal pattern.

  \item[AND Gates]
  We define:
  $t_g \deff t_1 G t_2 $, $p_g \deff p_1 G p_2 $, $u_g \deff u_1 G u_2 $, and $ q_g \deff q_1 G q_2$.

  \item[OR Gates]
  The texts and the patterns for gate $g$ are defined as follows
  where the parentheses are just for grouping and are not part of the text or pattern:
  \begin{align*}
    t_g\deff && (u_1 GG u_2) G (u_1 GG u_2) G (t_1 G&G t_2) G (u_1 GG u_2) G (u_1 GG u_2) \\
    u_g\deff && (u_1 GG u_2) G (u_1 GG u_2) G (u_1 G&G u_2) G (u_1 GG u_2) G (u_1 GG u_2) \\
    q_g\deff && (u_1 GG u_2) G (u_1 GG u_2) G (q_1 G&G q_2) G (u_1 GG u_2) G (u_1 GG u_2) \\
    p_g\deff && (u_1 GG u_2 G)^+ (q_1 GG p_2) \,\,\,&\!\!\!G (p_1 GG q_2) (G u_1 GG u_2)^+
  \end{align*}
\end{description}

\begin{lemma}[Correctness of the Construction]\label{lem:lower:cpc:form:correct}
  For all assignments $a,b$ and gates $g$:
  \begin{itemize}
    \item $F_g(a, b) = \true \iff t_g(a) \in \lang(p_g(b))$
    \item $t_g(a) \in \lang(q_g)$
    \item $u_g \in \lang(q_g) \cap \lang(p_g(b))$
  \end{itemize}
\end{lemma}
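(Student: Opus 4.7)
The plan is to establish all three claims simultaneously by structural induction on the gate $g$, the inductive hypothesis providing the three statements for every strict descendant of $g$. The INPUT base case reduces to a small direct check: for $F_g(a,b)=a_i$ the text $0 a_i 1$ lies in $\lang(0^+ 1 \, 1^+)$ exactly when $a_i=1$, since otherwise the trailing $1^+$ has no symbol to consume; the form $F_g(a,b)=b_i$ is handled symmetrically, and the universal claims $u_g=0011\in\lang(q_g)$ and $u_g\in\lang(p_g(b))$ follow by explicitly exhibiting decompositions of $0011$ that work for both bit values of the relevant variable.

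The key technical device carrying the induction is the uniqueness of the separator $G=2\bin{g}2$. Since $2$ is a fresh symbol and every gate carries its own same-length binary ID, an auxiliary induction shows that the exact string $G$ never occurs as a substring of any sub-text $t_{g'}(a)$ or $u_{g'}$, nor of any word matched by $p_{g'}(b)$ or $q_{g'}$, for a strict descendant $g'$ of $g$. With this uniqueness in hand the AND case is immediate: the single occurrence of $G$ in $t_g=t_1 G t_2$ can only be matched by the single occurrence of $G$ in $p_g=p_1 G p_2$, so any valid match factors into independent matches on the two subtrees and all three claims transfer directly from the inductive hypotheses on $g_1,g_2$.

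The main obstacle is the OR case. Setting $\Phi\deff u_1 GG u_2$, the text $t_g$ splits into five blocks $\Phi,\Phi,(t_1 GG t_2),\Phi,\Phi$ separated by four single $G$s, while any pattern match chooses repetition counts $k,j\ge 1$ for the two Kleene Plus factors and thereby produces $k+2+j$ blocks separated by $k+j+1$ single $G$s. Separator uniqueness forces block-by-block alignment, pinning $k+j=3$ and hence $(k,j)\in\{(2,1),(1,2)\}$. In the $(2,1)$ split, the only non-trivial requirements are $t_1 GG t_2 \in\lang(q_1 GG p_2)$ and $\Phi\in\lang(p_1 GG q_2)$, which by the inductive claims (2) and (3) reduce to the single condition $F_2(a,b)=\true$; the $(1,2)$ split symmetrically captures $F_1(a,b)=\true$, yielding $t_g\in\lang(p_g)\iff F_1\lor F_2$. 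The universal claims (2) and (3) for the OR gate follow from the same block-by-block analysis applied to $q_g$ and $u_g$, since in each position the required sub-claim is exactly one of the inductive universal statements.
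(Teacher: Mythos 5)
Your proof is correct and matches the paper's argument in all essentials: a structural induction in which the base case is a direct check, the AND case rests on the separator $G$ occurring uniquely, and the OR case is pinned down by a counting argument that forces the two Kleene-Plus repetition counts to sum to three. The paper phrases the counting in terms of the total number of $G$-occurrences ($14$ in the text, $11+3\cdot(\text{extra repetitions})$ in the pattern) rather than your block/separator bookkeeping, but the two tallies are equivalent and lead to the same case split $(k,j)\in\{(2,1),(1,2)\}$.
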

\begin{proof}
  The proofs of the second and third claim follow inductively from the encoding of the gates and especially because of the encoding of the INPUT gate.
  For the first claim we do a structural induction on the output gate of the formula.

  \begin{description}
    \item[INPUT Gate ``$\Rightarrow$'']
    Follows directly from the definition.

    \item[INPUT Gate ``$\Leftarrow$'']
    If the gate is not satisfied,
    then there are not enough 0s or 1s in the text than the pattern has to match.

    \item[AND Gate ``$\Rightarrow$'']
    Follows directly from the definition.

    \item[AND Gate ``$\Leftarrow$'']
    By the uniqueness of the binary encoding, the $G$ in the middle of the text and the pattern have to match.
    Since the whole text is matched, we get $t_1 \in \lang(p_1)$ and $t_2\in \lang(p_2)$
    and $F_g(a,b)$ is satisfied by the induction hypothesis.

    \item[OR Gate ``$\Rightarrow$'']
    $F_g(a,b) = F_{g_1}(a,b) \lor F_{g_2}(a,b) = \true$.
    Assume w.l.o.g.\ that $F_{g_1}(a,b)=\true$, the other case is symmetric.
    Repeat $(u_1 GG u_2 G)^+$ only once to transform $q_1 GG p_2$ into the second $u_1 GG u_2$ by our third claim of the lemma.
    Now $p_1 GG q_2$ matches $t_1 GG t_2$ by the second claim and the assumption $t_1 \in \lang(p_1)$.
    Finally, we match $G u_1 GG u_2 G u_1 GG u_2$ by two repetitions of $(G u_1 GG u_2)^+$.

    \item[OR Gate ``$\Leftarrow$'']
    By the uniqueness of the binary encoding there are exactly 14 $G$s in the text and the pattern can match 11 $G$s when taking both repetitions once.
    Since each additional repetition increases the number by 3, exactly one repetition is taken twice.

    If the first repetition is taken once, the following $q_1 GG p_2$ has to match the second $u_1 GG u_2$ in the text.
    But then $p_1$ is transformed into $t_1$ showing that $F_g$ is satisfied by the inductive hypothesis.
    The case for the second repetition is symmetric.
    \qedhere
  \end{description}
\end{proof}

\subparagraph*{Length of the Text and the Pattern.}
All texts and patterns for a specific gate only depend on the texts and patterns for the two sub-gates.
Thus, we can compute the texts and patterns in a bottom-up manner
and the encoding can be done in time linear in the size of the output.
It remains to analyse the length of the texts and the size of the patterns:
\begin{lemma}\label{lem:lower:cpc:form:size}
  $\abs{u_r},\abs{t_r},\abs{p_r},\abs{q_r} \in \O(5^d s \log s)$.
\end{lemma}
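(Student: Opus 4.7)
The plan is to prove the four bounds simultaneously by structural induction on the gate $g$, establishing $S(g) \deff \max(|u_g|, |t_g|, |p_g|, |q_g|) \in \O(5^{d_g} s_g \log s)$, where $d_g$ and $s_g$ denote respectively the depth and the number of leaves of the subformula rooted at $g$. Applying this at the root ($d_r = d$, $s_r = s$) immediately yields the lemma. I would first observe that every separator $G = 2\bin{g}2$ has length $|G| = \Theta(\log s)$ because gate IDs live in $[2s-1]$, so every constant number of separators introduced by a single gate gadget contributes an additive $\O(\log s)$.

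The base case of an INPUT gate is immediate since all four strings have constant length. At an AND gate each of the four strings has the form $X_1 G X_2$ for the corresponding strings $X_i$ of the two children, giving the benign recurrence $S(g) \le S(g_1) + S(g_2) + \O(\log s)$; the factor $5^{d_g} = 5 \cdot 5^{d_g - 1}$ leaves ample multiplicative slack against a plain sum-of-children bound.

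The OR case is the driving one. A direct inspection shows that each of $u_g, t_g, q_g$ is a concatenation of five blocks of the form $X_1 GG X_2$ with single $G$s in between, contributing at most $14$ separators, so $|u_g|, |t_g|, |q_g| \le 5(S(g_1) + S(g_2)) + 14 |G|$. Inspecting $p_g = (u_1 GG u_2 G)^+ (q_1 GG p_2) G (p_1 GG q_2) (G u_1 GG u_2)^+$ similarly, using that each Kleene-$\Plus$ contributes a single unit to the pattern size, yields $|p_g| \le 4(S(g_1) + S(g_2)) + \O(\log s)$. In every case $S(g) \le 5(S(g_1) + S(g_2)) + \O(\log s)$, and plugging in the induction hypothesis gives $S(g) \le C \cdot 5^{d_g} s_g \log s + \O(\log s)$.

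The main obstacle is that this OR recurrence is essentially tight: the multiplicative factor $5$ matches the base of the exponential exactly, so the stray additive $\O(\log s)$ has no room to hide inside a naive invariant. I would close the induction by strengthening the hypothesis to $S(g) \le C \cdot 5^{d_g} s_g \log s - D \log s$ for constants $C$ large enough to validate the INPUT base case and $D$ large enough (relative to the constant hidden in the $\O(\log s)$ overhead) that the $-10 D \log s$ deduction inherited from the two children at an OR gate dominates both the new separator overhead and the invariant's own $-D \log s$ term. Equivalently, and more conceptually, an amortised accounting works: each of the $s$ leaves contributes $\O(1)$, each of the $\O(s)$ internal gates emits $\O(\log s)$ worth of separators, and every such piece is copied at most $5^d$ times on its way up to the root, summing to $\O(5^d s \log s)$.
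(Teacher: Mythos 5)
Your proof is correct and follows essentially the same approach as the paper's: establish the recurrence $|u_g| \le 5|u_1| + 5|u_2| + \O(\log s)$ and close it by amortised accounting over the $\O(s)$ gates, each contributing $\O(\log s)$ separator symbols that get amplified at most $5^d$ times. The paper glosses over the point you flag---that the naive inductive invariant $S(g) \le C\,5^{d_g} s_g \log s$ does not close because the base of the exponential matches the branching factor exactly---but your two fixes (subtracting a $D\log s$ slack term, or the amortised sum over gates) are precisely the right way to make its terse one-line induction rigorous.
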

\begin{proof}
  $p_g$ is obviously smaller than $u_g$.
  Since the sizes of $u_g$, $t_g$, and $q_g$ are asymptotically equal,
  it suffices to analyse the length of $u_g$:
  $ \abs{u_g} \le 5 \abs{u_1} + 5 \abs{u_2} + \O(\log s) $.
  Inductively over the $d(F_g)$ levels of $F_g$, i.e.\ the depth of $F_g$, this yields
  $\abs{u_g} \le \O(5^{d(F_g)} s \log s)$.
  The factor of $s \log s$ is due to the $\O(s)$ inner gates each introducing $\O(\log s)$ additional symbols.
\end{proof}

\subsubsection{Final Reduction}
In the first part of the reduction we have seen how to evaluate a formula on one specific pair of half-assignments.
It remains to design a text and a pattern such that such a pair of half-assignments can be chosen.
For this let $A=\{a^{(1)}, \dots, a^{(n)}\}$ be the first set
and $B=\{b^{(1)}, \dots, b^{(m)}\}$ be the second set of half-assignments.
Inspired by the reduction in Section~3.4 in the full version of \cite{BackursI16}
we define the final text and pattern as follows:
\begin{align*}
  t \deff& & &
  \bigodot_{i=1}^{3n} \left(
  33 u_r 3 u_r 3 u_r 3 t(a^{(i)}) 3 u_r 3 u_r 3 u_r 3 u_r
  \right) \\
  p \deff && 3 u_r 3 u_r 3 u_r 3 u_r
  &\bigodot_{j=1}^{m} \left(
  3^+ (u_r 3)^+ u_r 3^+ q_r 3 p(b^{(j)}) 3 (u_r 3)^+ q_r
  \right)
  3 u_r 3 u_r 3 u_r 3 u_r
\end{align*}
Where we set $a^{(j)}=a^{(j \mod n)}$ for $j \in [n+1, 3n]$.
We call the concatenations in $t$ and $p$ for each $i$ and $j$
the $i$th text group and the $j$th pattern group, respectively.

\begin{lemma}\label{lem:lower:cpc:outerOR:completeness}
  If there are $a^{(k)}$ and $b^{(l)}$ such that $F(a^{(k)}, b^{(l)})=\true$, then $t \in \match(p)$.
\end{lemma}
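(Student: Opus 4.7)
The plan is to explicitly exhibit a substring of $t$ and a matching parse of $p$ on this substring. The pattern consists of a literal prefix, $m$ flexible groups $G_j = 3^+ (u_r 3)^+ u_r 3^+ q_r 3 p(b^{(j)}) 3 (u_r 3)^+ q_r$, and a literal suffix. The key idea is to shift the alignment so that $G_l$ lines up with the text group whose $a$-value is $a^{(k)}$, invoking the satisfying assignment at precisely that position; every other $G_j$ is matched against its text group using only the universal facts $u_r \in \lang(q_r) \cap \lang(p_r(b))$ and $t_r(a) \in \lang(q_r)$ from \cref{lem:lower:cpc:form:correct}.

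Concretely, I would first pick a shift $\sigma$ with $\sigma + l \equiv k \pmod n$; this is always possible since $t$ contains $3n$ text groups and $a^{(j)} = a^{(j \bmod n)}$ for $j > n$ while $m \le n$. The literal pattern-prefix $3 u_r 3 u_r 3 u_r 3 u_r$ is then matched against the literally identical string appearing at the end of text group $\sigma$ (i.e.\ the portion of the text group that follows $t(a^{(\sigma)})$). For each $j \neq l$, I would match $G_j$ against text group $\sigma + j$ via the symmetric parse: the leading $3^+$ absorbs the two $3$'s at the group start, $(u_r 3)^+$ takes two iterations, then $u_r$ and $3^+$ match the next $u_r$ and $3$, the middle $q_r$ matches $t(a^{(\sigma+j)})$ (using $t_r(a) \in \lang(q_r)$), $p(b^{(j)})$ matches the next $u_r$ (using $u_r \in \lang(p_r(b^{(j)}))$), and the trailing $(u_r 3)^+ q_r$ absorbs the final $u_r 3$'s.

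For $j = l$, I would use a shifted parse of $G_l$ in which the first $(u_r 3)^+$ takes only one iteration so that the middle $q_r$ now matches a $u_r$ and $p(b^{(l)})$ aligns with $t(a^{(k)})$ itself; the containment $t_r(a^{(k)}) \in \lang(p_r(b^{(l)}))$ is provided by $F(a^{(k)}, b^{(l)}) = \true$ via \cref{lem:lower:cpc:form:correct}. To keep $G_l$ spanning the same number of symbols as every other $G_j$, the second $(u_r 3)^+$ compensates by taking three iterations instead of two. Finally, the literal suffix matches the tail of the appropriate subsequent text group.

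The main obstacle will be keeping the two kinds of parses (for $j \neq l$ and for $j = l$) aligned in total length so that the $G_j$s tile the chosen substring seamlessly and the literal prefix and suffix anchor correctly at text-group boundaries. The symmetric structure of $G_j$ around $p(b^{(j)})$, together with the redistributable repetition budgets of the two $(u_r 3)^+$ operators, gives exactly the slack needed to keep every $G_j$ spanning one full text group; checking the symbol-by-symbol match at each step, especially that the interior $3^+$ correctly absorbs the $33$ at each text-group boundary, is the careful bookkeeping part of the argument.
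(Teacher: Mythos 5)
Your internal re-alignment of $G_l$ is sound as far as it goes: with the first $(u_r 3)^+$ taken once and the second taken three times, the matched string is $33\,u_r\,3\,u_r\,3\,u_r\,3\,t(a^{(k)})\,3\,u_r\,3\,u_r\,3\,u_r\,3\,u_r$, which is exactly text group $\sigma+l$, the same span as the $(2,2)$-parse you use for $j\neq l$. The problem is precisely that this keeps the net alignment \emph{flat}: the prefix $3\,u_r\,3\,u_r\,3\,u_r\,3\,u_r$ covers the tail of text group $\sigma$, each $G_j$ covers exactly one text group $\sigma+j$, so after $G_m$ the next text is the start of text group $\sigma+m+1$, namely $33\,u_r\,3\,u_r\,\dots$. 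The literal pattern suffix $3\,u_r\,3\,u_r\,3\,u_r\,3\,u_r$ now has to match that string: its first $3$ matches, but its first $u_r$ (which begins with $0$) is compared against the second $3$ of the leading $33$, and the match fails. Nothing in the pattern can absorb the head $33\,u_r\,3\,u_r\,3\,u_r\,3\,t(a^{(\sigma+m+1)})$ of that group to reach the tail you point to: the suffix is a literal, and $q_r$ / $(u_r 3)^+$ inside $G_m$ cannot match past the double $3$ because $q_r$ never produces a $3$ and $(u_r 3)^+$ only produces isolated $3$s.

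The paper deliberately breaks the one-group-per-$G_j$ tiling to create the required offset. Pattern groups $G_k$ and $G_{k+1}$ together are matched against text group $n+k$ plus only the first half $33\,u_r\,3\,u_r\,3\,u_r\,3\,t(a^{(n+k+1)})$ of text group $n+k+1$, which costs four $u_r$-slots of alignment. The remaining $G_{k+2},\dots,G_m$ then use first-repetition count three (instead of two) and match the \emph{re-grouped} text block $3\,u_r\,3\,u_r\,3\,u_r\,3\,u_r\,33\,u_r\,3\,u_r\,3\,u_r\,3\,t(a^{(i)})$, so that after $G_m$ the remaining text begins exactly with $3\,u_r\,3\,u_r\,3\,u_r\,3\,u_r$, which the literal suffix matches. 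In short: the ``satisfying'' groups in the paper both re-align internally \emph{and} shorten the total span, producing the shift the suffix needs; your $G_l$ re-aligns internally but keeps its group boundary fixed, so the shift never happens. The obstacle you flag at the end of your write-up (anchoring prefix and suffix at group boundaries) is real, but the symmetric-length trick you propose is exactly what prevents it from being overcome.
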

\begin{proof}
  Assume w.l.o.g.\ $a^{(k)}$ and $b^{(k)}$ satisfy $F$.
  Otherwise we have to shift the indices for the text and the pattern accordingly in the proof.
  We match the prefix of $p$ to the suffix of the $n$th text group.
  Then we match the $n+i$th text group by the $i$th pattern group for $i=1, \dots, k-1$:
  Both $(u_r 3)^+$ are repeated twice.
  Then the remaining parts are matched in a straightforward way by transforming the $q_r$s into $t(a^{(i)})$ and $u_r$, and $p(b^{(i)})$ into $u_r$.

  Then, we match the $k$th and $k+1$th pattern group to the $n+k$th text group and a part of the $n+k+1$th text group:
  \begin{center}
    \includegraphics{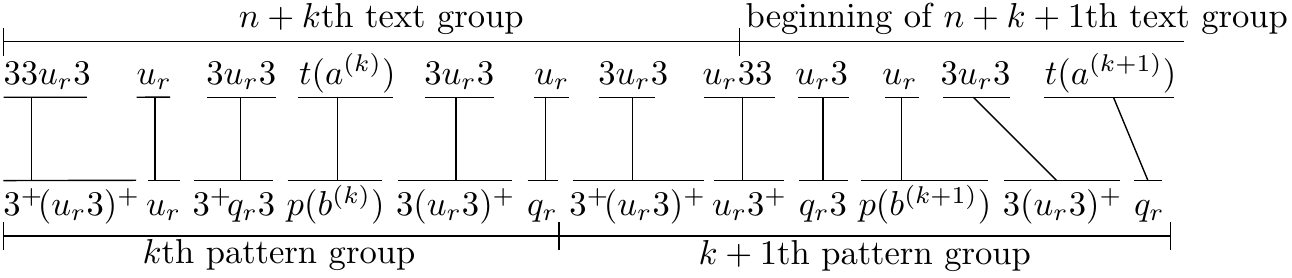}
  \end{center}
  For the last step we shift the groups in the remaining text $t'$ such that it becomes easier to prove which part of the text the remaining pattern matches:
  \begin{align*}
    t' = & 3u_r 3u_r 3u_r 3u_r
    \bigodot_{i=n+k+2}^{3n} \left(33u_r 3u_r 3u_r 3t(a^{(i)}) 3u_r 3u_r 3u_r 3u_r \right) \\
    = & \bigodot_{i=n+k+2}^{3n} \left(3u_r 3u_r 3u_r 3u_r 33u_r 3u_r 3u_r 3t(a^{(i)}) \right)
    3u_r 3u_r 3u_r 3u_r
  \end{align*}
  For each of the remaining pattern groups the first repetition is taken three times.
  With this the $n+i$th group of $t'$ and the $i$th pattern group are matched in a straightforward way for $i=k+2, \dots, m$.
  The suffix of the pattern is matched to the start of the $n+m+1$th text group in the obvious way.
\end{proof}

\begin{lemma}\label{lem:lower:cpc:outerOR:correctness}
  If $t \in \match(p)$, then there are $a^{(k)}$ and $b^{(l)}$ such that $F(a^{(k)}, b^{(l)})=\true$.
\end{lemma}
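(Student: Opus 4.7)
The goal is to show that every matching of a substring of $t$ by $p$ forces at least one pattern group $j$ to have $p(b^{(j)})$ match some $t(a^{(i_j)})$ factor; from this and the first claim of \cref{lem:lower:cpc:form:correct}, the witnessing pair is $(a^{(i_j)}, b^{(j)})$. I would carry out this argument in three stages: localize the match using the fixed prefix and suffix of $p$, enumerate the admissible local alignments of a single pattern group, and rule out the bad case in which no pattern group ``sees'' a $t$-block.

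\textbf{Anchoring the match.} The prefix $3u_r\,3u_r\,3u_r\,3u_r$ of $p$ contains no Kleene operators, so it must appear verbatim as a substring of $t$. Within a text group the only place where four $u_r$s appear in a row separated by single $3$s is the tail $B_5\,3\,B_6\,3\,B_7\,3\,B_8$: the first three blocks $B_1,B_2,B_3$ are followed by the $t(a^{(i)})$-block $B_4$, and such a stretch cannot cross a group boundary because of the intervening ``$33$'' marker. Hence the prefix is pinned to the tail of some text group $i_0$, and symmetrically the suffix to the tail of some text group $i_1>i_0$, leaving the $m$ pattern groups to cover the stretch in between.

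\textbf{Counting blocks and runs.} Every text group contributes $8$ blocks (seven copies of $u_r$ and one $t(a^{(i)})$) and $8$ maximal runs of $3$s (one ``$33$'' and seven singletons), while every pattern group contributes $k_j+k'_j+4$ block-matchers and $k_j+k'_j+4$ run-matchers, where $k_j,k'_j\ge 1$ are the multiplicities of its two $(u_r3)^+$ repetitions. Moreover, the only pattern runs that can absorb the length-$2$ text runs ``$33$'' are the $3^+$ markers. Combining these constraints with the inductive fact $|u_r|>|t_r|$ (inherited from the INPUT-level inequality $|u_g|=4>3=|t_g|$, which rules out the literal $u_r$ of a pattern group matching a $t$-block), I would enumerate the admissible local alignments of one pattern group and show they fall, up to trivial variations, into two classes: in class (A) the alignment is in sync with a text group and $p(b^{(j)})$ lands on the corresponding $t$-block, whereas in class (B) the alignment is shifted so that $p(b^{(j)})$ lands on a $u_r$-block and the $t$-block is absorbed by one of the two $q_r$'s.

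\textbf{Forcing at least one (A).} Finally I would rule out an all-(B) sequence of alignments. Each (B)-alignment shifts the window by a small nonzero amount within a text group, whereas the fixed anchors from the first step force the cumulative shift over the $m$ pattern groups to land exactly at the suffix position. A parity/arithmetic argument on the residue of the position modulo the text-group length shows that a purely (B) sequence cannot realise the required cumulative shift, so some pattern group must use alignment (A). For that $j$, $t(a^{(i_j)})\in\lang(p(b^{(j)}))$, and \cref{lem:lower:cpc:form:correct} then gives $F(a^{(i_j)}, b^{(j)})=\true$, which is the desired satisfying pair. The main obstacle is making this shift-counting watertight: the local enumeration may produce further alignment classes once the pattern group is allowed to straddle ``$33$'' markers, and each class contributes differently to the cumulative shift. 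The cleanest formalisation is probably to maintain, as pattern groups are processed left to right, a potential function that is invariant under every (B)-type alignment but strictly changes under (A), and then show that the total change forced by the anchors cannot be zero.
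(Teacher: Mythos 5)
Your anchoring step is the same as the paper's: both you and the paper observe that the literal prefix and suffix $3u_r3u_r3u_r3u_r$ of $p$ can only sit on the tails of text groups, and that this pins the positions of the $m$ pattern groups. From there, however, the paper takes a much more local route than you propose, and your global ``shift-counting'' plan has two concrete problems.

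First, your classification of alignments is inverted. You describe class~(A), where $p(b^{(j)})$ lands on the corresponding $t$-block, as ``in sync with a text group,'' and class~(B), where $p(b^{(j)})$ lands on a $u_r$-block, as ``shifted by a small nonzero amount.'' In the actual construction it is the other way around: the zero-shift, one-group-per-group alignment is the one where $q_r$ absorbs $t(a^{(i)})$ and $p(b^{(j)})$ matches the following $u_r$ (both copies of $(u_r3)^+$ taken exactly twice), whereas the alignment that makes $p(b^{(j)})$ land on a $t$-block necessarily takes the first $(u_r3)^+$ only once and therefore \emph{is} shifted. Consequently your claim that ``each (B)-alignment shifts the window by a small nonzero amount,'' which the whole parity argument rests on, is false for the most natural (B) case, and the potential you sketch would not detect it. You would then have to enumerate several further alignment subclasses (zero-shift (B), the various nonzero-shift (B)'s, alignments that straddle a ``$33$'' boundary) and track their contributions; you acknowledge this gap yourself, and it is where the proposal would actually break.

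The paper sidesteps all of this. Having anchored the prefix and suffix, it notes that $m$ exactly-aligned pattern groups would leave the suffix starting at block~$1$ of a text group (facing a ``$33$''), which cannot happen; hence some pattern group is misaligned, and one takes the \emph{first} such pair $(k,l)$. Minimality means everything to its left matched precisely, so the current group's $3^+$ sits on a ``$33$'' and its first $(u_r3)^+$ can be taken at most twice (three or more would force the literal $u_r$ onto the $t$-block, which has a different length). If it is taken once, the subsequent literal $u_r$, $3^+$, $q_r$ consume $u_r$, $3$, $u_r$ and then $p(b^{(l)})$ is forced onto $t(a^{(k)})$, and \cref{lem:lower:cpc:form:correct} finishes the proof. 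If it is taken twice, then $q_r$ absorbs $t(a^{(k)})$ and $p(b^{(l)})$ matches the next $u_r$, after which the second $(u_r3)^+$ must be taken exactly twice (otherwise the next group's ``$33$'' cannot be matched), so the group is in fact exactly aligned, contradicting minimality. This local case analysis avoids any global bookkeeping and is both shorter and easier to make airtight than the shift-potential approach you sketch.
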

\begin{proof}
  By the design of the pattern and the text,
  there must be a $j\le n$ such that the prefix of the pattern is matched to the suffix of the $j-1$th text group.
  Likewise the suffix of the pattern has to match the same sequence in some other text group
  because nowhere else the four $3u_r$ could be matched.
  Thus, not all text groups and pattern groups match each other precisely
  and there is a text group $k$ and a pattern group $l$
  such that the pattern group does not match the whole text group or it matches more than this group.
  Choose the first of these groups, i.e.\ the pair with smallest $k$ and $l$.

  Since all prior groups have been matched precisely, the first repetition can be taken at most twice.
  Otherwise the following $u_r$ could not be transformed into a part of the text.
  Now assume it is repeated exactly once.
  Then the following $u_r$ matches the second $u_r$ of the text group.
  Since 3 is a fresh symbol, $q_r$ has to match the third $u_r$.
  But then $p(b^{(k)})$ has to be transformed into $t(a^{(l)})$
  and \cref{lem:lower:cpc:form:correct} gives us a satisfying assignments.

  It remains to check the case when $(u_r3)^+$ is repeated twice.
  Then $q_r$ is transformed into $t(a^{(l)})$ and $p(b^{(k)})$ is transformed into the fourth $u_r$.
  The second repetition has to be taken exactly twice in this case.
  Because otherwise the $33$ from the beginning of the next text group could not be matched.
  But if the pattern $(u_r 3)^+$ is repeated twice, this pattern group is completely matched to a text group, contradicting our assumption.
\end{proof}

\begin{lemma}\label{cor:lower:cpc:outerOR:size}
  The final text has length $\O(n 5^d s \log s)$ and the pattern has size $\O(m 5^d s \log s)$.
\end{lemma}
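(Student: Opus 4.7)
The plan is to carry out a direct size computation from the explicit definitions of $t$ and $p$ displayed just above the statement, using the per-gate bound of $\O(5^d s \log s)$ from \cref{lem:lower:cpc:form:size} as a black box.

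First I would observe that the text $t$ is the concatenation of exactly $3n$ text groups, and that each text group contains a bounded number (seven, by inspection) of copies of $u_r$, one copy of $t(a^{(i)})$ (which has the same asymptotic length as $t_r$), and $\O(1)$ occurrences of the separator symbol $3$. Since \cref{lem:lower:cpc:form:size} gives $\abs{u_r},\abs{t_r} \in \O(5^d s \log s)$, each text group has size $\O(5^d s \log s)$, and summing over $3n$ groups yields $\abs{t} \in \O(n 5^d s \log s)$, as required.

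For the pattern $p$ I would argue analogously. The prefix and suffix of $p$ each consist of four repetitions of the block $3 u_r$, contributing only $\O(5^d s \log s)$ in total. The main body is the concatenation of $m$ pattern groups, and each pattern group contains $\O(1)$ copies of $u_r$ and $q_r$, one copy of $p(b^{(j)})$ (asymptotically $p_r$), and $\O(1)$ additional characters for the separators and the Kleene Plus operators. Using \cref{lem:lower:cpc:form:size} to bound $\abs{u_r},\abs{q_r},\abs{p_r}$ each by $\O(5^d s \log s)$, every pattern group has size $\O(5^d s \log s)$, so summing over the $m$ groups and adding the constant-many $u_r$-blocks of the prefix and suffix gives $\abs{p} \in \O(m 5^d s \log s)$.

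The computation is entirely routine, so there is no real obstacle; the only care needed is to verify that the Kleene Plus operators $3^+$ and $(u_r 3)^+$ contribute only their raw symbol count to the \emph{size} of the pattern (they do, by the definition of size as number of tree nodes) and that no pattern group depends on $n$ and no text group depends on $m$. Since neither dependency exists, the two stated bounds follow immediately from the per-group estimate multiplied by the number of groups.
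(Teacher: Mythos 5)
Your proof is correct, and it simply carries out the routine size count that the paper leaves implicit (no explicit proof is given in the paper for this lemma). Counting $\O(1)$ copies of $u_r$, $t_r$, $q_r$, $p_r$ per group, invoking \cref{lem:lower:cpc:form:size}, and multiplying by the number of groups ($3n$ for the text, $m$ for the pattern) is exactly the intended argument.
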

By this we conclude the proof of \cref{lem:lower:reduction} for this pattern type.
\QED

\subsection{Proof of \texorpdfstring{\cref{lem:lower:reduction}}{Lemma \ref{lem:lower:reduction}} for \texorpdfstring{\Conc\Or\Conc}{o|o}}\label{lower:concOrConc}
When taking a closer look at the reduction for \Conc\Plus\Conc\ one can see that all $\alpha^+$ where only repeated constantly often,
especially at most three times.
Thus, we can replace every $\alpha^+$ by $(\alpha \mid \alpha\alpha \mid \alpha\alpha\alpha)$.
This modification changes the size of the patterns $p_g$
which also dominates the size change for the outer OR.
\begin{lemma}\label{lem:lower:coc:form:size}
  $\abs{u_r},\abs{t_r},\abs{q_r} \in \O(5^d s \log s)$ and $\abs{p_r} \in \O(8^d s \log s)$.
\end{lemma}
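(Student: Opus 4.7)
The plan is to adapt the proof of \cref{lem:lower:cpc:form:size} by tracking the effect of replacing every Kleene Plus $\alpha^+$ with the three-fold disjunction $(\alpha \mid \alpha\alpha \mid \alpha\alpha\alpha)$. The induction structure over gates carries over unchanged; what changes is only the coefficient in front of the $\abs{u_i}$-terms in the OR-gate recurrence for $\abs{p_g}$.

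First I observe that $u_g$ and $t_g$ are defined without any Kleene Plus at any level, so they are literally identical to their \Conc\Plus\Conc\ counterparts and the bound $\O(5^d s \log s)$ is inherited directly. For $q_g$, the only Kleene Pluses occur at INPUT gates and are each applied to a single symbol, so the substitution costs only a constant factor at every leaf and leaves the AND and OR recurrences unchanged; this again yields $\O(5^d s \log s)$ by the same induction as in \cref{lem:lower:cpc:form:size}.

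The only recurrence that changes non-trivially is the one for $\abs{p_g}$ at an OR gate. In the \Conc\Plus\Conc\ construction, each of the two Kleene Pluses $(u_1 GG u_2 G)^+$ and $(G u_1 GG u_2)^+$ contributed a single copy of $u_1$ and $u_2$. After substitution each contributes six copies, since $\abs{\alpha} + \abs{\alpha\alpha} + \abs{\alpha\alpha\alpha} = 6\abs{\alpha}$, giving
\[
  \abs{p_g} \le 12\abs{u_1} + 12\abs{u_2} + \abs{q_1} + \abs{q_2} + \abs{p_1} + \abs{p_2} + \O(\log s).
\]
Plugging in the already-established bounds on $\abs{u_i}, \abs{q_i}$ absorbs those terms into an $\O(5^{d-1} s \log s)$ additive contribution, yielding $\abs{p_g} \le \abs{p_1} + \abs{p_2} + \O(5^{d-1} s \log s)$.

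To close the argument, I will prove $\abs{p_g} \le C \cdot 8^d s \log s$ by induction on the depth $d$ for a suitable constant $C$. The INPUT and AND cases are immediate. At an OR gate, applying the inductive hypothesis yields a bound of the form $C' \cdot 5^{d-1} s \log s + C \cdot 8^{d-1} s \log s + \O(\log s)$, and since $8 > 5$ the inequality $C' (5/8)^{d-1} + C \le 8C$ holds for all $d \ge 1$ once $C$ is chosen large enough compared to $C'$. The only mild subtlety I expect is keeping the $u,q$-contribution (which grows as $5^d$) cleanly separate from the $p$-contribution in the induction, so that the gap between the two rates produces the claimed factor $b_{\text{\Conc\Or\Conc}} = 8$.
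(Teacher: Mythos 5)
Your proposal is correct, and it actually diverges from the paper's argument in a substantive and instructive way.

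The paper's proof bounds $\abs{p_g}$ directly by $6\abs{u_1}+6\abs{u_2}+\abs{q_1}+\abs{q_2}+\abs{p_1}+\abs{p_2}+\O(\log s)$ and then absorbs everything into $8\abs{p_1}+8\abs{p_2}+\O(\log s)$ using $\abs{u_i},\abs{q_i}\le\abs{p_i}$, from which $\O(8^{d}s\log s)$ follows by the same level-by-level accumulation used for $u_g$. Your count of the $u$-coefficient, however, is the right one: both $(u_1 GG u_2 G)^+$ and $(G u_1 GG u_2)^+$ get replaced by a three-fold disjunction, and $\abs{\alpha}+\abs{\alpha\alpha}+\abs{\alpha\alpha\alpha}=6\abs{\alpha}$ for each of them, so the total contribution is $12\abs{u_1}+12\abs{u_2}$, not $6$. (The paper's $6$ would only be justified if the inner OR-gate Kleene Pluses were replaced by the two-fold disjunction $(\alpha\mid\alpha\alpha)$, which suffices for the correctness argument there but is not what the text of Section~\ref{lower:concOrConc} says.) With the coefficient $12$, the paper's direct absorption $\abs{p_g}\le 14\abs{p_1}+14\abs{p_2}+\O(\log s)$ would yield $\O(14^{d}s\log s)$, which would not give $b_{\text{\Conc\Or\Conc}}=8$.

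Your refined induction, which keeps the $u,q$-contribution (bounded by $\O(5^{d}s\log s)$) separate from the $p$-contribution rather than collapsing everything to $p$-terms, is exactly what is needed to recover the stated $\O(8^{d}s\log s)$ under the correct coefficient. In fact it gives slightly more: the recurrence $\abs{p_g}\le\abs{p_1}+\abs{p_2}+\O(5^{d-1}s\log s)$ sums to $\O(5^{d}s\log s)$ outright, so any base $\ge 5$ works. Two minor slips to fix: in the inductive step you should write $2C\cdot 8^{d-1}$ (two children) rather than $C\cdot 8^{d-1}$, making the required inequality $C'(5/8)^{d-1}+2C\le 8C$; and the coefficient in the absorbed additive term should account for both $24\abs{u_i}$-type and $2\abs{q_i}$-type terms, i.e., something like $26\,C_u\cdot 5^{d-1}s\log s$, but neither of these affects the conclusion.
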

\begin{proof}
  Since the size of $p_g$ increased, we get $\abs{u_g},\abs{t_g},\abs{q_g} \in \O(\abs{p_g})$.
  \(
    \abs{p_g}
     \le 6\abs{u_1} + 6\abs{u_2} + \abs{q_1} + \abs{q_2} +\abs{p_1} + \abs{p_2} + \O(\log s)
    \le 8\abs{p_1} + 8\abs{p_2} + \O(\log s)
  \)
  and with the same argument as in the proof of \cref{lem:lower:cpc:form:size}:
  \(
    \abs{p_g}
      \le 8^{d(F_g)} \O(\log s + s \log s)
      = \O(8^{d(F_g)}s \log s)
  \).
  \qedhere
\end{proof}
The correctness follows from the reduction for \Conc\Plus\Conc\
and concludes the proof of \cref{lem:lower:reduction} for \Conc\Or\Conc.
\QED

\subsection{Proof of \texorpdfstring{\cref{lem:lower:reduction}}{Lemma \ref{lem:lower:reduction}} for \texorpdfstring{\Conc\Star}{o*}}\label{lower:concStar}
To reuse the construction from \Conc\Plus\Conc\ for this pattern type,
we first observe that the pattern $\sigma^+$ can be seen as short-hand for $\sigma \sigma^*$.
Hence, the definition of the INPUT and AND gate can be reused.
We also use this idea for the OR gate and simulating $(u_1 GG u_2 G)^+$ by a pattern of type \Conc\Star.
For this we introduce the starred version $\starred{v}$ of a text $v=v_1\dots v_{\abs v}$, where we put a Kleene Star on every symbol:
$\starred{v} \deff v_1^* v_2^* \dots v_{\abs v}^*$.
By this we can reuse $t_g$, $u_g$, and $q_g$, and define for $p_g$:
\[
  p_g\deff (\starred{u_1} \starred{G}\starred{G} \starred{u_2}) \starred{G}
  (u_1 GG u_2) G (q_1 GG p_2) G (p_1 GG q_2) G (u_1 GG u_2)
  \starred{G} (\starred{u_1} \starred{G}\starred{G} \starred{u_2})
\]
\begin{lemma}[Correctness of the Construction]\label{lem:lower:cs:form:correct}
  For all assignments $a,b$ and gates $g$:
  \begin{itemize}
    \item $F_g(a, b) = \true \iff t_g(a) \in \lang(p_g(b))$
    \item $t_g(a) \in \lang(q_g)$
    \item $u_g \in \lang(q_g) \cap \lang(p_g(b))$
  \end{itemize}
\end{lemma}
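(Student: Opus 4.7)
I plan to prove all three claims simultaneously by structural induction on the gate $g$, following the template of \cref{lem:lower:cpc:form:correct}. The universal statements $t_g(a)\in\lang(q_g)$ and $u_g\in\lang(q_g)$ and the INPUT and AND cases of the biconditional are inherited directly from that proof, since $t_g$, $u_g$, and $q_g$ are defined exactly as before and the INPUT and AND sub-patterns of $p_g$ only involve the identity $\sigma^+=\sigma\sigma^\star$. The remaining new content is (i) verifying $u_g\in\lang(p_g(b))$ at an OR gate and (ii) the biconditional at an OR gate. For (i) I will expand each star of the new starred prefix and suffix exactly once, so that $\starred{u_1}\starred{G}\starred{G}\starred{u_2}\starred{G}$ produces $u_1GGu_2G$ and the starred suffix produces $Gu_1GGu_2$; the four explicit middle factors then absorb the five $u_1GGu_2$ blocks of $u_g$ via the inductive universal claim, just as in the \Conc\Plus\Conc case.

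For the forward direction at an OR gate, assume without loss of generality that $F_{g_1}(a,b)=\true$, so by induction $t_1(a)\in\lang(p_1(b))$. I will choose the starred prefix to contribute $\varepsilon$ (each star taken zero times) and align the explicit middle to blocks $1$ through $4$ of $t_g$: the first explicit $u_1GGu_2$ covers block $1$; $q_1GGp_2$ absorbs block $2$ using $u_1\in\lang(q_1)$ and $u_2\in\lang(p_2(b))$; $p_1GGq_2$ matches $t_1GGt_2$ using the assumption together with $t_2\in\lang(q_2)$; and the last explicit $u_1GGu_2$ covers block $4$. The starred suffix is then fully expanded to match the remaining $Gu_1GGu_2$ of $t_g$. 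The case $F_{g_2}(a,b)=\true$ is symmetric, with the starred prefix absorbing $u_1GGu_2G$ and the starred suffix contributing $\varepsilon$.

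The converse direction is the substantive new case. Given a match of $t_g$ by $p_g(b)$, I will exploit the uniqueness of the binary gate ID $\bin{g}$: the level-$g$ separator $G=2\bin{g}2$ occurs in $t_g$ and in the explicit factors of $p_g$ at known positions, but does not occur in any of $u_1,u_2,q_1,q_2,p_1,p_2$, so matches of $G$-at-level-$g$ in the pattern must align with $G$-at-level-$g$ in the text. Since each $\starred{}$ preserves the order of its characters, the starred prefix $\starred{u_1GGu_2G}$ contains exactly three level-$g$ separator patterns and can match at most three level-$g$ separators of text; a short case analysis on how it meets the subsequent explicit $u_1GGu_2$ then forces its match to equal either $\varepsilon$ or one complete $u_1GGu_2G$ block. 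The same holds symmetrically for the starred suffix. A count of level-$g$ separators ($14$ in $t_g$ and $11$ in the explicit middle) forces the two starred ends to jointly absorb exactly one $u_1GGu_2G$ block, hence exactly one of the two cases above occurs and the explicit middle matches text blocks $1$--$4$ or $2$--$5$. In the first case the central factor $p_1GGq_2$ must cover $t_1GGt_2$, giving $t_1(a)\in\lang(p_1(b))$ and $F_{g_1}(a,b)=\true$ by induction; the second case yields $F_{g_2}(a,b)=\true$ analogously.

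The main obstacle will be the alignment lemma for the starred factors, namely showing rigorously that a match of $\starred{u_1GGu_2G}$ against a prefix of $t_g$ must coincide with $\varepsilon$ or with one complete $u_1GGu_2G$ block. This requires a careful symbol-wise analysis of where a $\starred{u_1}$ can end, using the run structure of $G$ and the uniqueness of $\bin{g}$ to rule out intermediate matches whose boundary falls inside a level-$g$ separator or in the interior of $u_1$ or $u_2$. Once this alignment fact is established, the remainder of the argument is a finite case distinction essentially identical to the OR-gate analysis in \cref{lem:lower:cpc:form:correct}.
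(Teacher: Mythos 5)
The forward direction of the biconditional and the universal claims match the paper's approach in spirit. But your backward direction has a concrete gap, and your argument for $u_g \in \lang(p_g(b))$ at an OR gate has an arithmetic error.

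On the universal claim: if you expand every star in \emph{both} starred ends exactly once, the pattern emits $u_1GGu_2G$ on the left and $Gu_1GGu_2$ on the right, plus the four explicit middle blocks, giving six blocks with five separators --- but $u_g$ has only five blocks with four separators. The correct choice is to expand exactly one of the two starred ends fully and collapse the other to $\varepsilon$, leaving four explicit blocks to cover four of the five blocks of $u_g$ and the expanded starred end to cover the fifth.

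The substantive gap is in the backward direction at an OR gate. You claim that counting level-$g$ separators ($14$ in the text, $11$ in the explicit middle) ``forces the two starred ends to jointly absorb exactly one $u_1GGu_2G$ block.'' That count rules out the case where both starred ends expand (too many separators, as the paper also notes), but it does not rule out the case where both starred ends produce $\varepsilon$. In that case the explicit middle must match the entire text, and the pattern has $11$ literal level-$g$ separators against $14$ in the text --- the shortfall of $3$ is not an immediate contradiction, because it could in principle be absorbed by the sub-patterns $q_1$, $p_2$, $p_1$, $q_2$. Ruling out that the recursively defined patterns $p_1, p_2$ (which themselves contain nested character-level stars $\starred{\cdot}$) can produce $2\bin{g}2$ as a substring is exactly the kind of ``careful symbol-wise analysis'' your plan defers; but unlike the \Conc\Plus\Conc\ case (where the $(\cdot)^+$ blocks are rigid multiples of a fixed string), the character-level stars make this analysis genuinely delicate. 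The paper avoids it entirely: it isolates the ``both-$\varepsilon$'' case as a third case and shows it is impossible by introducing the symbol-change count $A(\cdot)$ (\cref{def:symbolChanges}) and proving $2A(u_h) > A(p_h)$ (\cref{claim:lower:cs:form:changes}). In that case $p_2Gp_1$ would have to match $u_2 G t_1 GG t_2 G u_1$, which has more symbol changes than any word in $\lang(p_2Gp_1)$ --- a purely combinatorial contradiction that is robust against whatever strange strings the nested stars might produce. You should add this lemma (or some equivalent robust invariant) rather than relying on the \Conc\Plus\Conc-style counting, which is unsound here.
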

\begin{proof}
  Again the proof of the last two claims follows directly from the encoding of the gates.
  Since the definition of the INPUT and AND gate is the same as for \Conc\Plus\Conc, we only show the inductive step for the OR gate.
  Recall, that the text is defined as
  \[ t_g\deff (u_1 GG u_2) G (u_1 GG u_2) G (t_1 GG t_2) G (u_1 GG u_2) G (u_1 GG u_2). \]

  \begin{description}
    \item [``$\Rightarrow$'']
    $F_g(a,b) = F_{g_1}(a,b) \lor F_{g_2}(a,b) = \true$.
    Assume w.l.o.g.\ that $F_{g_1}(a,b)=\true$, the other case is symmetric.
    We match the first sequence of starred symbols to the empty string $\epsilon$.
    Then we match $u_1 GG u_2 G$ to each other.
    By the third claim above we can match $u_1 GG u_2$ to $q_1 GG p_2$.
    By the inductive hypothesis and the second claim we match $t_1 GG t_2$ to $p_1 GG q_2$.
    The remaining part of the text is matched in the canonical way to the pattern while the starred sequence matches the original text.

    \item [``$\Leftarrow$'']
    Observe that the $G$s in the pattern have to match $G$s in the text and that the text is matched completely.
    Since the first and last non-starred $GG$ in the pattern have to be matched to a $GG$ in the text,
    one can easily see that both starred sequences either produce the empty string or $u_1 GG u_2 G$ and $G u_1 GG u_2$.
    Thus it remains to check three different cases:
    \begin{itemize}
      \item
      Exactly one sequence produced the empty string.
      Let it w.l.o.g.\ be the first one.
      Then we get that $(u_1 GG u_2) G (t_1 GG t_2)$ has to be matched by $(q_1 GG p_2) G (p_1 GG q_2)$ since $u_1$ and $u_2$ are strings.
      Since the $G$s in the pattern match $G$s in the text, we get $t_1 \in \lang(p_1)$ and thus a satisfying assignment.

      \item
      Both starred sequences produce a non-empty string, i.e.\ their non-starred version.
      The text contains 5 $GG$ but the pattern has to match 6 $GG$.
      A contradiction.

      \item
      Both starred sequences produce the empty string.
      Since $u_1$ and $u_2$ are strings, the remaining text $t'$ has to be matched by the remaining pattern $p'$:
      \begin{align*}
        t' = & (u_1 GG u_2) G (t_1 GG t_2) G (u_1 GG u_2) \\
        p' = & (q_1 GG p_2) G (p_1 GG q_2).
      \end{align*}
      Since the definition of $q_h$ and $u_h$ only differ at the definition of the INPUT gates,
      we we cannot match $q_h$ to something different than $u_h$ here.
      Hence, $u_2 G t_1 GG t_2 G u_1 \in \lang(p_2 G p_1)$.
      Since the number of symbols changes for every word in $\lang(p_2Gp_1)$ is bounded by $A(p_1)+A(p_2)+\ell+2$ with $\ell=A(G)$
      and the text has $2A(u_1)+2A(u_2)+4\ell+7$ symbol changes,
      we get a contradiction by \cref{claim:lower:cs:form:changes}.
      \qedhere
    \end{itemize}
  \end{description}
\end{proof}

\begin{definition}[Symbol Changes]\label{def:symbolChanges}
  We define $A(t)$ to be the number of symbol changes in the text $t$:
  Define $A(\sigma) \deff 0$ for any symbol $\sigma$ and
  $A(t_1\dots t_{n-1}t_n) \deff A(t_1 \dots t_{n-1})+\llbracket t_{n-1} \neq t_n \rrbracket$.
  For patterns $p$ we define $A(p)=\max_{t \in \lang(p)}A(t)$.
\end{definition}
\begin{claim}\label{claim:lower:cs:form:changes}
  $A(u_g)=A(t_g)=A(q_g)$ and $2A(u_g) > A(p_g)$.
\end{claim}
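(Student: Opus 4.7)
I would prove both assertions simultaneously by structural induction on $g$. Before starting the main induction I record an auxiliary invariant: every $u_g$, every $t_g(a)$, and every word in $\lang(q_g)\cup\lang(p_g)$ begins with $0$ and ends with $1$. This holds by inspection at INPUT gates and is preserved by AND and OR, which only extend through the sub-objects indexed $1$ on the left and $2$ on the right. Because $G=2\bin{g}2$ starts and ends with $2$, the invariant forces every boundary between $G$ and any of our texts or pattern-words to be a single symbol change on each side, so all boundary counts become uniform and mechanical.

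Set $\ell\deff A(G)$ and $x_i\deff A(u_i)$. The equality $A(u_g)=A(t_g)=A(q_g)$ is trivial at INPUT gates ($A=1$ throughout). At an AND gate, $u_g$, $t_g$ and every word in $\lang(q_g)$ has the form $\alpha_1\,G\,\alpha_2$ with $\alpha_1$ ending in $1$ and $\alpha_2$ starting in $0$, so $A(\alpha_1\,G\,\alpha_2)=A(\alpha_1)+A(\alpha_2)+\ell+2$; the IH applied to the three choices of $\alpha_i$ gives the common value $x_1+\ell+x_2+2$. At an OR gate the same counting on the five-block recipe $B_1\,G\,B_2\,G\,B_3\,G\,B_4\,G\,B_5$ (with middle block $t_1GGt_2$, $q_1GGq_2$ or $u_1GGu_2$ and the other four always $u_1GGu_2$) produces the common value $5x_1+5x_2+14\ell+18$.

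The strict inequality $2A(u_g)>A(p_g)$ is immediate for INPUT ($2>1$) and for AND: $A(p_g)\le A(p_1)+A(p_2)+\ell+2\le(2x_1-1)+(2x_2-1)+\ell+2<2A(u_g)$, using $A(p_i)<2x_i$ from the IH in its integer-valued form $A(p_i)\le 2x_i-1$. The real work is the OR case, where $p_g=\starred{v}\,\starred{G}\,v_2\,\starred{G}\,\starred{v}$ with $v=u_1GGu_2$ and $v_2=(u_1GGu_2)G(q_1GGp_2)G(p_1GGq_2)G(u_1GGu_2)$. I would first prove the auxiliary fact $A(\starred{w})\le A(w)$ for any string $w$: a word in $\lang(\starred{w})$ is $w_1^{a_1}\cdots w_n^{a_n}$ with $a_i\ge 0$, and its symbol-change count equals $A(w')$ where $w'$ is the subsequence of $w$ obtained by dropping positions with $a_i=0$; a short case analysis shows that deleting one position of $w$ can never increase $A$, so $A(w')\le A(w)$. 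Plugging the IH into the inner sub-patterns of $v_2$ gives $A(q_1GGp_2)\le x_1+2x_2+2\ell+1$ and $A(p_1GGq_2)\le 2x_1+x_2+2\ell+1$, hence $A(v_2)\le 5x_1+5x_2+11\ell+12$, and adding the contributions of the two starred flanks (each bounded by $A(v)=x_1+x_2+2\ell+2$), the two starred $G$s, and the four forced boundary changes yields
\[
  A(p_g)\ \le\ 7x_1+7x_2+17\ell+20\ <\ 10x_1+10x_2+28\ell+36\ =\ 2A(u_g).
\]

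The main obstacle will be the OR-gate bookkeeping for $A(p_g)$: one has to split $p_g$ correctly into starred flanks and the middle block $v_2$, use the start/end invariant to justify each boundary, and carefully exploit the integer slack $A(p_i)\le 2x_i-1$ coming from the induction hypothesis. The only genuinely new ingredient is the lemma $A(\starred{w})\le A(w)$ (starring every position of a string cannot raise its symbol-change count); once that is in hand, everything else is routine block-by-block counting.
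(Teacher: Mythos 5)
Your proof is correct and follows essentially the same strategy as the paper: structural induction on the output gate, with the real content living in the OR-gate case where $A(u_g)=5A(u_1)+5A(u_2)+14A(G)+18$ must be compared against a bound on $A(p_g)$. All your intermediate computations check out, in particular $A(v_2)\le 5x_1+5x_2+11A(G)+12$ and the final $A(p_g)\le 7x_1+7x_2+17A(G)+20 < 2A(u_g)$.

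The differences from the paper are matters of explicitness rather than substance. First, you spell out the start-with-$0$, end-with-$1$ invariant and the auxiliary lemma $A(\starred{w})\le A(w)$; the paper silently relies on both (the latter when treating the starred flanks of $p_g$). Second, you invoke the induction hypothesis in the integer form $A(p_i)\le 2x_i-1$ and propagate it into each inner sub-pattern $q_1GGp_2$, $p_1GGq_2$, whereas the paper first computes the exact identity $A(p_g)=5A(u_1)+5A(u_2)+A(p_1)+A(p_2)+17A(G)+22$ and then applies the IH once in the ``fractional'' form $5A(u_i)>2.5\,A(p_i)$. Both routes produce the same slack, and both paths pass through the same OR-gate bookkeeping. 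You also give the short explicit computation for the AND gate, which the paper dispatches in one sentence. Net effect: your write-up is somewhat longer but closes the small gaps the paper leaves to the reader; nothing is missing and nothing is redundant relative to what a careful referee would want to see.
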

\begin{proof}
  We first observe $A(u_g)=A(t_g)=A(q_g)$ since their definitions only differ for the INPUT gate for which the claim holds.
  We show the main claim by a structural induction on gate $g$.

  For the INPUT gate we have $A(u_g)=A(p_g)=1$ and thus the claim holds.
  For the AND gate the claim follows directly from the induction hypothesis since all texts and patterns start with 0 and end with 1.

  For the OR gate we get $A(u_g)=5A(u_1)+5A(u_2)+14A(G)+18$ and thus:
  \begin{align*}
    2 A(u_g) = & 10A(u_1)+10A(u_2)+28A(G)+36 \\
    = & 5A(u_1)+5A(u_2)+ 5A(u_1)+ 5A(u_2)+28A(G)+36 \\
    \stackrel{\text{IH}}> & 5A(u_1)+5A(u_2)+2.5A(p_1)+ 2.5A(p_2)+28A(G)+36 \\
    > & 5A(u_1)+5A(u_2)+A(p_1)+A(p_2)+17A(G)+22
    = A(p_g)
    \qedhere
  \end{align*}
\end{proof}
With the same arguments as before, we get the following size bounds:
\begin{lemma}\label{lem:lower:cs:form:size}
  $\abs{u_r},\abs{t_r},\abs{q_r} \in \O(5^d s \log s)$ and $\abs{p_r} \in \O(6^d s \log s)$.
\end{lemma}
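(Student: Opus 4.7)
The plan is to carry out the same structural induction as in the proof of \cref{lem:lower:cpc:form:size}. Because $u_g$, $t_g$, and $q_g$ are defined exactly as in the \Conc\Plus\Conc reduction, the recurrence $\abs{u_g} \le 5\abs{u_1} + 5\abs{u_2} + \O(\log s)$ and its analogues for $t_g$ and $q_g$ are unchanged, so $\abs{u_r}, \abs{t_r}, \abs{q_r} \in \O(5^d s\log s)$ follows verbatim. The only new work is bounding $\abs{p_g}$.

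The key observation is that a starred text contributes $\abs{\starred v} = \Theta(\abs v)$ to the pattern size, since each symbol of $v$ becomes a leaf paired with one Kleene-Star inner node. Counting occurrences in
\begin{equation*}
  p_g = \starred{u_1}\starred{G}\starred{G}\starred{u_2}\starred{G}(u_1 GG u_2)G(q_1 GG p_2)G(p_1 GG q_2)G(u_1 GG u_2)\starred{G}\starred{u_1}\starred{G}\starred{G}\starred{u_2},
\end{equation*}
each of $u_1, u_2$ appears twice directly and twice inside a starred block, each of $q_1, q_2, p_1, p_2$ appears exactly once, and the $G$ and $\starred{G}$ separators contribute $\O(\log s)$. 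This yields
\begin{equation*}
  \abs{p_g} \le 6\abs{u_1} + 6\abs{u_2} + \abs{q_1} + \abs{q_2} + \abs{p_1} + \abs{p_2} + \O(\log s),
\end{equation*}
while AND gates give $\abs{p_g} \le \abs{p_1} + \abs{p_2} + \O(\log s)$ and INPUT gates give $\abs{p_g} \in \O(1)$.

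To close the induction I would fix constants $C_u$ and $C_p$ with $C_p$ sufficiently large relative to $C_u$, and prove simultaneously $\abs{u_g}, \abs{q_g}, \abs{t_g} \le C_u \cdot 5^{d(g)} s\log s$ and $\abs{p_g} \le C_p \cdot 6^{d(g)} s\log s$. Substituting the inductive bounds into the OR recurrence, the $u_i, q_i$ contribution is $\O(C_u) \cdot 5^{d-1} s\log s$ and the $p_i$ contribution is $2 C_p \cdot 6^{d-1} s\log s$. Since $(5/6)^{d-1} \le 1$, the first term is absorbed into $C_p \cdot 6^d s\log s$ once $C_p$ is taken large enough, and the factor $2$ against base $6$ leaves ample slack for the second. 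The AND and INPUT cases are strictly easier. The main bookkeeping hurdle is the careful accounting of direct versus starred occurrences in the expanded definition of $p_g$; once the coefficients are pinned down, the induction closes routinely.
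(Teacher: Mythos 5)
Your proof is essentially correct and reconstructs what the paper's terse remark (``with the same arguments as before'') must actually mean. The paper cannot simply be reusing the \Conc\Or\Conc\ argument of Lemma~\ref{lem:lower:coc:form:size}, which folds $\abs{u_i}$ and $\abs{q_i}$ into $\abs{p_i}$ and so would yield a base of $8$ rather than $6$; the tighter base $6$ requires exactly the decoupling you perform, namely keeping the $\O(5^{d}s\log s)$ bound for $u_g, q_g, t_g$ (inherited unchanged from Lemma~\ref{lem:lower:cpc:form:size}) and running the induction only for $\abs{p_g}$. Your count of the occurrences of $u_1,u_2$ in $p_g$ (two plain, two starred, each starred contributing $\Theta(\abs{u_i})$) and the resulting recurrence $\abs{p_g}\le 6\abs{u_1}+6\abs{u_2}+\abs{q_1}+\abs{q_2}+\abs{p_1}+\abs{p_2}+\O(\log s)$ are right, and your closing argument — the $u/q$ terms shrink as $(5/6)^{d}$ while the $p$ terms carry a factor $2$ against base $6$ — does indeed close once $C_p$ is taken sufficiently large relative to $C_u$.

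One small wording issue worth flagging: you write that you would ``prove simultaneously'' $\abs{u_g},\abs{q_g},\abs{t_g}\le C_u\cdot 5^{d(g)}s\log s$ by the same induction, but that particular invariant does \emph{not} close on its own: the OR-gate recurrence $\abs{u_g}\le 5(\abs{u_1}+\abs{u_2})+\O(\log s)$ produces a coefficient of $10$, not $5$, when you substitute $\abs{u_i}\le C_u 5^{d-1}s\log s$. The bound actually comes from the counting argument used for \Conc\Plus\Conc\ (each of the $\O(s)$ gates contributes $\O(\log s)$ symbols, replicated at most $5^{d}$ times by OR ancestors), or equivalently from a refined invariant in which $s$ is replaced by the subformula size $s_g$. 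Since you already invoke Lemma~\ref{lem:lower:cpc:form:size} to get these bounds as black boxes, this is a phrasing slip rather than a gap — the important induction is the one for $\abs{p_g}$, which uses the $u/q$ bounds as given, and that one is sound as you wrote it.
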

For the final construction we define a generalised version of the outer OR
that makes use of a helper gadget $H$ that is specific for every type.
 \begin{theorem}\label{thm:lower:outerOR}
  Given $t_r(\cdot)$, $u_r$, $p_r(\cdot)$, and $q_r$ as above.
  Let $H$ be a helper gadget with the following properties:
  \begin{itemize}
    \item $\lang(H) \subseteq \lang(4^+ (3 \Or 4)^* (0 \Or 1 \Or 2 \Or 4)^* (3 \Or 4)^* 4^+)$.
    \item For $\ell\deff \abs{u_r}+4$: $4^{\ell}, 4^{\ell}3u_r 3 4^{\ell} \in \lang(H)$
    \item $\abs{H} \in \O(\abs{u_r})$
  \end{itemize}
  Then we can construct a text $t$ and a pattern $p$
  such that $t \in \match(p)$ if and only if there are $a \in A, b\in B$ such that $F(a,b)=\true$.
  Furthermore, $\abs{t}=\O(n (\abs{u_r}+\abs{t_r}))$, $\abs{p}=\O(m (\abs{u_r}+\abs{p_r}+\abs{q_r}))$
  and $t$ and $p$ are concatenations of gadgets.
\end{theorem}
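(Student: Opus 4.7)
The plan is to build $t$ and $p$ by mimicking the outer OR construction used for \Conc\Plus\Conc, but replacing the concrete repetition gadgets $3^+(u_r 3)^+ u_r 3^+$ and $(u_r 3)^+$ by instances of the abstract gadget $H$, and replacing the double-$3$ barrier between text groups by a long run $4^\ell$ of the fresh symbol $4$. Concretely, I would set
\[
t \deff \bigodot_{i=1}^{3n}\left(4^\ell\, 3 u_r 3 u_r 3 u_r 3\, t_r(a^{(i)})\, 3 u_r 3 u_r 3 u_r 3 u_r\right),
\]
\[
p \deff 3 u_r 3 u_r 3 u_r 3 u_r \bigodot_{j=1}^{m}\left(H \cdot q_r \cdot 3\, p_r(b^{(j)})\, 3 \cdot H \cdot q_r\right) 3 u_r 3 u_r 3 u_r 3 u_r,
\]
with $a^{(j)} \deff a^{(j \bmod n)}$ for $j \in [n+1, 3n]$. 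The precise number of $u_r$-blocks around $t_r(a^{(i)})$ and the exact positions of the two $H$s in a pattern group may need minor tuning so that the $4$-run and $u_r$-block counts line up, but the overall shape is the one above, and the promised size bounds $\abs{t},\abs{p}$ follow immediately from this shape together with $\abs{H} \in \O(\abs{u_r})$.

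For completeness, assume $F(a^{(k)}, b^{(l)}) = \true$. I would align the $l$th pattern group against a region containing $t_r(a^{(k)})$, exploiting the triple copy of $A$ in $t$ to cover every indexing case as in \cref{lem:lower:cpc:outerOR:completeness}. Each remaining pattern group is matched to one full text group: both of its $H$s fire either the ``short'' option $4^\ell \in \lang(H)$ or the ``extended'' option $4^\ell 3 u_r 3 4^\ell \in \lang(H)$, whichever is needed to absorb a $4$-run and, where required, an extra intervening $u_r$-block; the $q_r$s are transformed into $u_r$s via $u_r \in \lang(q_r)$, and the non-critical $p_r(b^{(j)})$s are matched against $u_r$-blocks via $u_r \in \lang(p_r(b^{(j)}))$ from \cref{lem:lower:cpc:form:correct}. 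For the satisfying group, $p_r(b^{(l)})$ matches $t_r(a^{(k)})$ by the same per-gate correctness lemma. The prefix $3u_r3u_r3u_r3u_r$ and the symmetric suffix of $p$ are matched to the corresponding sub-strings at the boundaries of the used text groups.

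For soundness, the constraint $\lang(H) \subseteq \lang(4^+(3\Or 4)^*(0\Or 1\Or 2\Or 4)^*(3\Or 4)^*4^+)$ does all the work: since $4$ is a fresh symbol that appears only in the runs $4^\ell$ between text groups, every matched image of $H$ in $t$ must begin and end inside a $4$-run, and its body contains at most one contiguous non-$3$ block built from $0,1,2$. Counting the $3n$ $4$-runs of $t$ against the $2m$ $H$-copies in $p$ (two per pattern group) and accounting for the fixed $3u_r3u_r3u_r3u_r$ margins pins the alignment down: all but one pattern group must match exactly one text group with its two $H$s swallowing exactly one $4$-run each, and the one ``shifted'' group straddles two consecutive text groups so that $p_r(b^{(l)})$ faces $t_r(a^{(k)})$. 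Then \cref{lem:lower:cpc:form:correct} yields $F(a^{(k)}, b^{(l)}) = \true$. The main obstacle is exactly this counting/alignment step — one must rule out an $H$ that spans two different $u_r$-blocks, one that consumes an entire text group without crossing a $4$-run boundary, and pathological interactions between $H$ and a neighbouring $q_r$. These cases are closed out by the precise shape of $\lang(H)$ combined with the uniqueness of the binary gate-IDs inside each $u_r$, via the same delicate casework used in \cref{lem:lower:cpc:outerOR:correctness}.
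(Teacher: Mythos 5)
Your proposal is structurally wrong in a way that the hedge about "minor tuning" does not repair. You model $t$ on the \Conc\Plus\Conc\ outer-OR construction, in which the padding $3u_r3u_r3u_r3u_r$ is absorbed by repetition gadgets of the form $(u_r3)^+$ that can consume an \emph{arbitrary} number of $u_r$-blocks. But the abstract gadget $H$ can do no such thing: its two fireable modes are $4^\ell$ and $4^\ell 3 u_r 3 4^\ell$, so $H$ can consume at most \emph{one} $u_r$-block and only when that block is \emph{sandwiched between two contiguous $4$-runs}. In your text a group reads $4^\ell\,3u_r3u_r3u_r3\,t_r(a^{(i)})\,3u_r3u_r3u_r3u_r$, which means consecutive $4^\ell$ runs are separated by eight $u_r$-like blocks. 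The substring $4^\ell 3 u_r 3 4^\ell$ never occurs, so the extended mode of $H$ can never fire, which kills the shift mechanism outright. Moreover, even the "aligned" case fails: between two $4$-runs your text exposes eight blocks while your pattern group only supplies $q_r\,3\,p_r(b^{(j)})\,3\,[\dots]\,q_r$ — two or three blocks — so completeness does not go through either.

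The paper's construction for this theorem takes a genuinely different shape precisely to accommodate $H$'s limited reach. It interleaves $4^\ell$ runs so that the pattern $4^\ell 3 u_r 3 4^\ell$ occurs literally at several positions inside each text group (e.g.\ $333 u_r 3\,4^{\ell} 3 u_r 3\, 4^{\ell}\, 33 t_r(a^{(i)}) 3\,4^{\ell} 3 u_r 3\, 4^{\ell}$), and the pattern group contains explicit $4^\ell 3 u_r 3 4^\ell$ pieces alongside the two $H$s so that $H$'s two modes suffice to produce a bounded one-step shift, with the shift absorbed by the redundant fourth $4^\ell 3 u_r 3 4^\ell$ per group. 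What your outline is missing is this basic design decision: if $H$ is to shift, the text must contain the extended string $4^\ell 3 u_r 3 4^\ell$ at aligned positions, and the $u_r$-block counts on the pattern side must match the text side exactly up to this single-block slack. Without that, no amount of adjusting the number of $u_r$ copies will make the counting argument in your soundness step close.
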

For \Conc\Star\ we define \( H \deff 44^* 3^* \starred{u_r} 3^* 44^* \).
The proof of the theorem is given in \cref{lower:outerOR}.
\QED

\subsection{Proof of \texorpdfstring{\cref{lem:lower:reduction}}{Lemma \ref{lem:lower:reduction}} for \texorpdfstring{\Conc\Plus\Or}{o+|}}\label{lower:concPlusOr}
Again we only change the encoding of the OR gate and reuse the other parts from \Conc\Plus\Conc.
\begin{align*}
  t_g\deff&& 0 G (t_1 GG u_2) \,\,&\!\!G (u_1 GG t_2) G 1 &&\\
  u_g\deff&& 0 G (u_1 GG u_2) \,\,&\!\!G (u_1 GG u_2) G 1 &&\\
  q_g\deff&& 0 G (q_1 GG u_2) \,\,&\!\!G (u_1 GG q_2) G 1 &&\\
  p_g\deff&& (0 \Or 1 \Or 2)^+ G (p_1 G&G p_2) G (0 \Or 1 \Or 2)^+&&
\end{align*}

\begin{lemma}[Correctness of the construction]\label{lem:lower:cpo:form:correct}
  For all assignments $a,b$ and gates $g$:
  \begin{itemize}
    \item $F_g(a, b) = \true \iff t_g(a) \in \lang(p_g(b))$
    \item $t_g(a) \in \lang(q_g)$
    \item $u_g \in \lang(q_g) \cap \lang(p_g(b))$
  \end{itemize}
\end{lemma}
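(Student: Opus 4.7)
The plan is a structural induction on $g$, paralleling the proof of \cref{lem:lower:cpc:form:correct}. Since the INPUT and AND gadgets are unchanged, their cases of all three claims carry over verbatim. The second and third claims remain routine even at the OR gate: $t_g(a) \in \lang(q_g)$ holds because the two definitions match symbol-for-symbol outside the INPUT sub-texts, and $u_g \in \lang(p_g(b))$ holds by sending $p_1 GG p_2$ to $u_1 GG u_2$ via the inductive third claim and letting the two outer $(0 \Or 1 \Or 2)^+$ absorb everything else; this is legal because the entire text is over the alphabet $\{0,1,2\}$.

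For the first claim's forward direction at an OR gate, assume without loss of generality that $F_{g_1}(a,b) = \true$, so that $t_1 \in \lang(p_1)$ by induction. The pattern $p_g$ matches $t_g(a)$ as follows: absorb the leading $0$ with $(0 \Or 1 \Or 2)^+$; align the first $G$ of the middle factor with the first separator of the text; send $p_1 GG p_2$ to $t_1 GG u_2$ using the induction for $p_1$ and the third claim for $u_2$; align the second $G$ of the middle factor with the central separator of the text; and absorb the suffix $u_1 GG t_2 G 1$ with the trailing $(0 \Or 1 \Or 2)^+$. The case $F_{g_2}(a,b) = \true$ is symmetric.

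The main obstacle is the backward direction at an OR gate. Writing $t_g(a) = 0\, G_1\, t_1\, G_2 G_3\, u_2\, G_4\, u_1\, G_5 G_6\, t_2\, G_7\, 1$ to expose the seven level-$g$ separators, one must rule out every shifted alignment of the four $G$s in the middle factor $G p_1 GG p_2 G$. The argument rests on three observations. First, because $G = 2 \bin{g} 2$ and the symbol $2$ appears only inside separators, each $G$ of the pattern must coincide with some $G_i$; this uses the uniqueness of binary gate IDs together with the fact that all IDs are padded to the same length. Second, the consecutive $GG$ of the pattern can only align with $G_2 G_3$ or $G_5 G_6$, since these are the only adjacent pairs of level-$g$ separators in $t_g(a)$. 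Third, a strengthened induction hypothesis---that every word of $\lang(p_g)$ only contains separators for gates in the subtree rooted at $g$---forces $p_1$ and $p_2$ to avoid every level-$g$ separator. Combining these constraints leaves exactly the alignments $(G_1, G_2 G_3, G_4)$, which sends $p_1 \mapsto t_1$ and $p_2 \mapsto u_2$ and hence yields $F_{g_1}(a,b) = \true$ by induction, and the symmetric $(G_4, G_5 G_6, G_7)$ which gives $F_{g_2}(a,b) = \true$. Carrying the strengthened invariant alongside the three listed claims through the induction is, I expect, the trickiest bookkeeping of the proof.
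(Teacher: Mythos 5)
Your proof follows the same skeleton as the paper's: structural induction over the gates, with the INPUT and AND cases inherited verbatim from the \Conc\Plus\Conc\ construction, the forward direction of the OR case given by explicitly exhibiting a match, and the backward direction pivoting on where the pattern's four level-$g$ separators can align inside $t_g$. The decisive step---that the pattern's $GG$ can only align with $G_2 G_3$ or $G_5 G_6$---is exactly the paper's argument.

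The one genuine departure is your introduction of a \emph{strengthened} induction hypothesis (words of $\lang(p_g)$ contain only separators for gates in $g$'s subtree) in order to pin down the alignment of all four pattern $G$s and conclude the alignment is uniquely $(G_1, G_2G_3, G_4)$ or $(G_4, G_5G_6, G_7)$. This is more than is needed. Once $GG$ is forced onto, say, $G_2 G_3$, the preceding $G$ of the pattern must align with some level-$g$ separator strictly to its left, and $G_1$ is the only candidate; hence $p_1$ matches exactly $t_1$, i.e.\ $t_1 \in \lang(p_1)$, independently of what the trailing $G$ and $p_2$ match. By the ordinary inductive hypothesis this already gives $F_{g_1}(a,b)=\true$, which suffices since $g$ is an OR gate. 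The paper therefore ignores $p_2$'s alignment entirely and never needs to carry the stronger invariant through the induction, which keeps the bookkeeping lighter. Your version is not wrong, but the extra invariant buys you a uniqueness claim about the full alignment that the proof does not actually use.

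Two small technical points worth tightening if you flesh this out: your first observation (``each $G$ of the pattern must coincide with some $G_i$'') needs not only uniqueness of gate IDs and equal-length padding but also that $\bin{g}$ contains no symbol $2$, so that $G=2\bin{g}2$ cannot straddle a boundary between adjacent separators or between a separator and a sub-text; and in the forward direction the trailing $(0\Or 1\Or 2)^+$ absorbing $u_1 GG t_2 G 1$ relies on every symbol of $t_g$ lying in $\{0,1,2\}$, which you state but should verify holds through the recursive construction.
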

\begin{proof}
  Again we only show the inductive step for the OR case of the first claim.
  \begin{description}
    \item [``$\Rightarrow$'']
    $F_g(a,b) = F_{g_1}(a,b) \lor F_{g_2}(a,b) = \true$.
    Assume w.l.o.g.\ that $F_{g_1}(a,b)=\true$, the other case is symmetric.
    The first repetition is transformed into the initial 0.
    Then we match $p_1 GG p_2$ to $t_1 GG u_2$ by the third claim and the assumption that $t_1 \in \lang(p_1)$.
    Since the text only consists of symbols from $\{0, 1, 2\}$, the suffix $u_1GGt_2G1$ can be matched by the second repetition.

    \item [``$\Leftarrow$'']
    Since the $GG$ in the pattern has to match one of the two $GG$ in the text,
    there are only two possible ways how the text was matched by the pattern.
    Assume w.l.o.g.\ the $GG$ of the pattern matched the first $GG$ of the text.
    Then the first $G$ of the text and the first $G$ of the pattern match each other.
    Hence, $t_1\in \lang(p_1)$ and
    the induction hypothesis guarantees a satisfying assignment.
    \qedhere
  \end{description}
\end{proof}
\begin{lemma}\label{lem:lower:cpo:form:size}
  $\abs{t_r},\abs{u_r},\abs{q_r} \in \O(2^d s \log s)$ and
  $\abs{p_r} \in \O(s \log s)$.
\end{lemma}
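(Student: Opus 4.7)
The plan is to prove both size bounds by structural induction on the gate $g$, first reducing to the analysis of $\abs{u_g}$ alone and then handling $\abs{p_g}$ separately. I would begin by observing that the definitions of $t_g$, $u_g$, and $q_g$ differ only at INPUT gates (where each has constant size) while sharing identical recursive clauses at AND and OR gates; a short induction therefore gives $\abs{t_g}, \abs{q_g} \in \Theta(\abs{u_g})$, so it suffices to bound $\abs{u_g}$.

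For $u_g$ I would set up the recurrence directly from the definitions: each separator $G$ contributes $\Theta(\log s)$, an AND gate gives $\abs{u_g} = \abs{u_1} + \abs{u_2} + \Theta(\log s)$, and an OR gate (where $u_1$ and $u_2$ each appear twice) gives $\abs{u_g} = 2\abs{u_1} + 2\abs{u_2} + \Theta(\log s)$. Using the worst-case recurrence $\abs{u_g} \le 2\abs{u_1} + 2\abs{u_2} + O(\log s)$, the same amortised argument as in \cref{lem:lower:cpc:form:size} works: along every root-to-leaf path there are at most $d$ OR gates, so every leaf and every internal separator can be charged at most $2^d \cdot O(\log s)$, and summing over the $O(s)$ nodes of the formula yields $\abs{u_r} \in O(2^d s \log s)$.

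The key contrast for $p_g$ is that the OR-gate definition $p_g = (0 \Or 1 \Or 2)^+ G (p_1 GG p_2) G (0 \Or 1 \Or 2)^+$ contains $p_1$ and $p_2$ only once each---the outer Kleene-Plus blocks absorb what would otherwise require a second copy of each subpattern (as they do in $u_g$). Hence both the AND and the OR recurrences for $\abs{p_g}$ collapse to the purely additive $\abs{p_g} \le \abs{p_1} + \abs{p_2} + O(\log s)$, which unfolds to $\abs{p_r} \le \sum_{\text{leaves}} O(1) + \sum_{\text{inner gates}} O(\log s) \in O(s \log s)$.

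There is no real obstacle beyond bookkeeping; the only point worth emphasising is precisely this asymmetry between $u_g$ and $p_g$ at the OR gate, which is what causes $\abs{u_r}, \abs{t_r}, \abs{q_r}$ to pick up the $2^d$ factor while $\abs{p_r}$ does not. Once this is noted, the argument is essentially identical to the size analyses in \cref{lem:lower:cpc:form:size,lem:lower:coc:form:size}.
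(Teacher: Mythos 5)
Your proof is correct and takes essentially the same route as the paper: derive the recurrences $\abs{u_g}\le 2\abs{u_1}+2\abs{u_2}+\O(\log s)$ and $\abs{p_g}\le\abs{p_1}+\abs{p_2}+\O(\log s)$, unroll over the $\O(s)$ gates with depth at most $d$, and obtain $\O(2^ds\log s)$ and $\O(s\log s)$ respectively. One small imprecision worth noting: at the OR gate the recursive clauses for $t_g$ and $q_g$ are not literally identical to that for $u_g$ -- for example $t_g = 0\,G\,(t_1\,GG\,u_2)\,G\,(u_1\,GG\,t_2)\,G\,1$ mixes $u_1,u_2$ with $t_1,t_2$ rather than recursing only on $t$-parts -- but since every version plugs four sub-gadgets of the form $t_h,u_h,q_h$ into the same template and these have pairwise $\Theta$-equivalent sizes by induction, the claimed conclusion $\abs{t_g},\abs{q_g}\in\Theta(\abs{u_g})$ still follows, matching the paper's terse statement $\O(\abs{u_g})=\O(\abs{t_g})=\O(\abs{q_g})$.
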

\begin{proof}
  Again we have $\O(\abs{u_g})=\O(\abs{t_g})=\O(\abs{q_g})$.
  For $u_g$ we get:
  \[
    \abs{u_g} \le 2 \abs{u_1} + 2 \abs{u_2} + \O(\log s) \le 2^{d(F_g)} \O(\log s + s \log s) = \O(2^{d(F_g)}s \log s)
  \]
  with the same argument as for the previous size bounds.
  For $p_g$ we have
  $\abs{p_g} \le \abs{p_1}+ \abs{p_2} + \O(\log s) \le \O(s \log s)$.
\end{proof}
We define \( H \deff 4^+ (3 \mid 4)^+ (0 \mid 1 \mid 2 \mid 4)^+ (3 \mid 4)^+ 4^+ \)
and use \cref{thm:lower:outerOR} to conclude the proof of \cref{lem:lower:reduction} for this pattern type.
\QED

\subsection{Proof of \texorpdfstring{\cref{lem:lower:reduction}}{Lemma \ref{lem:lower:reduction}} for \texorpdfstring{\Conc\Or\Plus}{o|+}}\label{lower:concOrPlus}
To reuse the definitions from the previous sections for the last time
we have to allow unary alternatives.
By this we can see a pattern $\sigma^+$ as a pattern of type \Or\Plus.
This is reasonable since we can replace $\sigma^+$ by $(\sigma \Or \sigma^+)$ which represents exactly the same language as just $\sigma^+$.
One could also use a fresh symbol $\alpha$ which will never appear in the text and replace $\sigma^+$ by $(\alpha \Or \sigma^+)$.

We introduce the barred version of a text to match the resulting pattern to the original text but also to the repetition of a single symbol.
\begin{definition}[Barred Version of a Text]\label{def:barred}
  Let $\tau$ be a symbol and $t=t_1 \cdots t_n$ be a text of length $n$.
  Define the barred version of $t$ as a pattern of type \Conc\Or\ as
  \( \overline{t}^\tau \deff(t_1 \Or \tau) \cdots (t_n \Or \tau) \).
\end{definition}
We change the encoding of the OR gate to the following:
\begin{align*}
  t_g\deff && 0^{\abs{u_1 GG u_2 G}+1} (u_1 GG u_2) G (t_1 G&G t_2) G (u_1 GG u_2) 1^{\abs{G u_1 GG u_2}+1} \\
  u_g\deff && 0^{\abs{u_1 GG u_2 G}+1} (u_1 GG u_2) G (u_1 G&G u_2) G (u_1 GG u_2) 1^{\abs{G u_1 GG u_2}+1} \\
  q_g\deff && 0^{\abs{u_1 GG u_2 G}+1} (u_1 GG u_2) G (q_1 G&G q_2) G (u_1 GG u_2) 1^{\abs{G u_1 GG u_2}+1} \\
  p_g\deff && 0^+ \overline{u_1 GG u_2 G}^0 (q_1 GG p_2) \,\,\,&\!\!\!G (p_1 GG q_2) \overline{G u_1 GG u_2}^1 1^+
\end{align*}

\begin{lemma}[Correctness of the construction]\label{lem:lower:cop:form:correct}
  For all assignments $a,b$ and gates $g$:
  \begin{itemize}
    \item $F_g(a, b) = \true \iff t_g(a) \in \lang(p_g(b))$
    \item $t_g(a) \in \lang(q_g)$
    \item $u_g \in \lang(q_g) \cap \lang(p_g(b))$
  \end{itemize}
\end{lemma}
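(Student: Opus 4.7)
The plan is to mirror the structure of the earlier proofs (especially \cref{lem:lower:cpo:form:correct} for \Conc\Plus\Or): claims 2 and 3 follow by a straightforward structural induction because $t_g$, $u_g$, and $q_g$ are built from their sub-gate versions in essentially the same way as for \Conc\Plus\Conc; in particular, $t_g(a) \in \lang(q_g)$ and $u_g \in \lang(q_g) \cap \lang(p_g(b))$ reduce to the same statements for the sub-gates plus the observation that the $0^{L_0}$ and $1^{L_1}$ paddings and the inner occurrences of $u_1, u_2, q_1, q_2$ are identical across $t_g, u_g, q_g$. Similarly, the INPUT and AND cases of claim~1 are inherited verbatim. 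So the only real work is the OR gate case of claim~1.

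For the forward direction at the OR gate, I would treat $F_{g_1}(a,b)=\true$ and $F_{g_2}(a,b)=\true$ as the two symmetric subcases. Suppose $F_{g_1}(a,b)=\true$. I would match the pattern prefix $0^+ \overline{u_1 GG u_2 G}^0$ against the text's $0^{L_0}$ by letting the barred part produce $0^{|u_1GGu_2G|}$ (choosing the $0$-alternative at every position), and letting $0^+$ consume the remaining single $0$. The text then begins at $u_1$, so I match $q_1 \to u_1$ via claim~3, then $GG \to GG$, then $p_2 \to u_2$ via claim~3, then the non-barred $G$ to the middle $G$, then $p_1 \to t_1$ by the inductive hypothesis (using $F_{g_1}=\true$), then $GG \to GG$, then $q_2 \to t_2$ via claim~2. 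Finally, $\overline{G u_1 GG u_2}^1$ matches its literal $G u_1 GG u_2$ in the text (choosing the original at every position), and $1^+$ consumes $1^{L_1}$. The case $F_{g_2}(a,b)=\true$ is the mirror image: the first barred part matches its literal, the second produces $1^{|Gu_1GGu_2|}$.

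For the reverse direction, the main work is a case analysis that rules out every match not of the above two shapes. I would first use that the $G$-gadget is a distinguishable block starting and ending with a fresh symbol $2$, whose interior $\bin{g}$ is unique, so every occurrence of $G$ in $p_g$ that is \emph{not} inside a barred part must match an occurrence of $G$ at the same level in $t_g$. Counting these, the text has five level-$g$ copies of $G$ (two in each outer $u_1GGu_2$ and one in the middle), while the non-barred part of the pattern contributes five level-$g$ $G$s ($GG \cdot G \cdot GG$). This forces the central $G$ of the pattern to align with the central $G$ of the text, and then the two barred parts must absorb the remaining prefix and suffix of the text respectively. Because the symbols $0,1$ do not occur inside any $u_h, q_h, p_h, t_h$ but only in the paddings, each barred part is forced into one of two extreme modes: either it reproduces its literal string exactly, or it is entirely the replacement symbol ($0$ on the left, $1$ on the right). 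A simple parity/length argument on the number of $G$s available in the text after the padding then shows that at least one barred part must take its literal form; depending on which side is literal, the non-barred middle segment is forced into one of the two configurations used in the forward direction, yielding $t_i \in \lang(p_i)$ for some $i$, and hence $F_{g_i}(a,b)=\true$ by the inductive hypothesis.

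The main obstacle is cleanly formalizing the statement that each barred part takes one of the two extreme modes. In principle an adversary could pick, say, the original symbol on a few positions inside $\overline{u_1GGu_2G}^0$ and the $0$-alternative elsewhere, which could in principle free up some $G$s to be absorbed on one side and not the other. I would rule this out by first observing that the padded $0^+$ and $1^+$ can only produce $0$s and $1$s, so every level-$g$ $G$ appearing literally inside a barred part must match an actual level-$g$ $G$ in the text, forcing the barred parts to be aligned to complete $u_1GGu_2G$ or $Gu_1GGu_2$ blocks in the text. Combined with the count above, only the two symmetric configurations survive, which completes the proof.
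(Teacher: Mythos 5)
Your forward direction and your treatment of claims 2 and 3 match the paper's approach and are fine. The problems are in the reverse direction of claim 1 for the OR gate.

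The central difficulty is a miscount. The middle of $t_g$ is $(u_1 GG u_2)\, G\, (t_1 GG t_2)\, G\, (u_1 GG u_2)$, so it contains three level-$g$ $GG$ blocks and two isolated level-$g$ $G$s, i.e.\ eight level-$g$ copies of $G$, not five as you claim. Your assertion that ``the non-barred part of the pattern contributes five level-$g$ $G$s'' is correct, but since the text has eight, there is no forced alignment of a ``central $G$ of the pattern'' with a ``central $G$ of the text'' (the text has two isolated $G$s, not one). Consequently the deduction that ``the two barred parts must absorb the remaining prefix and suffix'' does not follow from your count. Your next claim, that ``the symbols $0,1$ do not occur inside any $u_h, q_h, p_h, t_h$ but only in the paddings,'' is also false: already for INPUT gates $u_g = 0011$, and the sub-gate encodings are full of $0$s and $1$s. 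So the assertion that each barred part is forced into a literal-or-all-replacement mode is not established, and the ``simple parity/length argument'' you invoke to guarantee that at least one barred part is literal is left entirely unsubstantiated.

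The paper's actual argument is a different and cleaner case split. Since the two non-barred $GG$s of $p_g$ must each match one of the three level-$g$ $GG$s in $t_g$ in order, there are exactly three possibilities. Matching (first $GG$, second $GG$) of the text forces $u_2 G t_1 \in \lang(p_2 G p_1)$, giving $t_1 \in \lang(p_1)$; matching (second, third) is symmetric and gives $t_2 \in \lang(p_2)$; matching (first, third) forces $u_2 G t_1 GG t_2 G u_1 \in \lang(p_2 G p_1)$, which the paper rules out with a \emph{symbol-change} counting argument using $A(\cdot)$ (\cref{def:symbolChanges}), mirroring \cref{claim:lower:cs:form:changes}: the text between those two $GG$s has $2A(u_1)+2A(u_2)+4A(G)+6$ symbol changes, but every word of $\lang(p_2 G p_1)$ has at most $A(p_1)+A(p_2)+A(G)+2$, and $2A(u_h) > A(p_h)$ makes these incompatible. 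Without some concrete quantitative tool of this kind, the ``both barred parts all-replacement'' case (which is exactly the paper's (first, third) case) remains open in your proof.
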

\begin{proof}
  Again we only show the proof for the OR gate in the first claim.
  \begin{description}
    \item [``$\Rightarrow$'']
    $F_g(a,b) = F_{g_1}(a,b) \lor F_{g_2}(a,b) = \true$.
    Assume w.l.o.g.\ that $F_{g_1}(a,b)=\true$, the other case is symmetric.
    We match the first barred text to a repetition of 0s.
    Then $q_1 GG p_2$ matches $u_1 GG u_2$ by the third claim of the lemma.
    $p_1 GG q_2$ matches $t_1 GG t_2$ by the induction hypothesis and the second claim of the lemma.
    The second barred pattern matches its original text while the repetition of 1s is matched by $1^+$.

    \item [``$\Leftarrow$'']
    Since the whole text has to be matched and the $G$s in the pattern have to match $G$s in the text,
    there are three possibilities how the $GG$s of the pattern can be matched to the $GG$s in the text:
    \begin{itemize}
      \item The first $GG$ of the pattern matches the first $GG$ of the text and the second of the text is matched by the second of the pattern.
      This implies $u_2 G t_1 \in \lang(p_2 G p_1)$ and since the $G$ can only match itself,
      $t_1 \in \lang(p_1)$ and a satisfying assignment by the induction hypothesis.

      \item The first $GG$ of the pattern matches the first $GG$ of the text and the second $GG$ of the pattern matches the third $GG$ of the text.
      We get $u_2 G t_1 GG t_2 G u_1 \in \lang(p_2 G p_1)$.
      Using the same argument as for \Conc\Star\
      we get that the number of symbol changes for every word in $\lang(p_2 G p_1)$ is at most $A(p_1)+A(p_2)+A(G)+2$
      while the text has $2A(u_1)+2A(u_2)+4A(G)+6$ symbol changes.
      Analogous to \cref{claim:lower:cs:form:changes} we can show that this case cannot occur since $2 A(u_h) >A(p_h)$.

      \item The first $GG$ of the pattern matches the second $GG$ of the text and the third of the text is matched to the second of the pattern.
      This case is symmetric to the first case and implies $t_2 \in \lang(p_2)$.
      \qedhere
    \end{itemize}
  \end{description}
\end{proof}
\begin{lemma}\label{lem:lower:cop:form:size}
  $\abs{u_r},\abs{t_r},\abs{q_r},\abs{p_r} \in \O(5^d s \log s)$.
\end{lemma}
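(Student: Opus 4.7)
The plan is to follow the template of Lemma~\ref{lem:lower:cpc:form:size}, reusing the unchanged INPUT and AND cases and only analysing the new OR-gate encoding. First I would observe that the definitions of $u_g$, $t_g$, and $q_g$ differ only in which one of three interior $(u_1 GG u_2)$-sized blocks is replaced by a $t$- or $q$-analogue of the same inductive length, so they satisfy the same recurrence. A direct count of the OR-gate encoding gives
\[
  |u_g|,\,|t_g|,\,|q_g| \;\le\; 5|u_1| + 5|u_2| + O(\log s),
\]
where the three interior blocks contribute $3(|u_1|+|u_2|)+O(\log s)$ and each of the two padding runs $0^{|u_1 GG u_2 G|+1}$ and $1^{|G u_1 GG u_2|+1}$ contributes another $|u_1|+|u_2|+O(\log s)$ (the paddings are \emph{single} runs, but their lengths are determined by the inductive sizes). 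Unfolding this recurrence exactly as in Lemma~\ref{lem:lower:cpc:form:size} yields $|u_g|,|t_g|,|q_g| \in O(5^d s \log s)$.

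For $|p_g|$ the key observation is that the barred version $\overline{v}^\tau$ of a string $v$ is a \Conc\Or-pattern that replaces each symbol by a binary alternation, so $|\overline{v}^\tau| = 2|v|$. Hence $\overline{u_1 GG u_2 G}^0$ and $\overline{G u_1 GG u_2}^1$ each contribute $2|u_1|+2|u_2|+O(\log s)$, while $0^+$, $q_1 GG p_2$, the middle $G$, $p_1 GG q_2$, and $1^+$ add $|q_1|+|q_2|+|p_1|+|p_2|+O(\log s)$. Summing up,
\[
  |p_g| \;\le\; 4|u_1|+4|u_2|+|q_1|+|q_2|+|p_1|+|p_2|+O(\log s).
\]
Plugging in the $O(5^{d-1} s\log s)$ bounds on $|u_h|$ and $|q_h|$ already proven, this simplifies to $|p_g|\le C\cdot 5^{d-1} s\log s + |p_1|+|p_2|+O(\log s)$ for some constant $C$.

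I would then close the induction by guessing $|p_g|\le K\cdot 5^{d(F_g)} s\log s$ for a sufficiently large constant $K$. Substituting the inductive hypothesis into the right-hand side gives a bound of the form $(C/5+2K/5)\cdot 5^d s\log s + O(\log s)$, which is $\le K\cdot 5^d s\log s$ whenever $K\ge 5C/3$; the additive $O(\log s)$ is absorbed since $s\log s\ge \log s$. Hence $|p_g|\in O(5^d s\log s)$, completing the proof.

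The only subtlety worth flagging is this final step: unlike $u_g$, the definition of $p_g$ recurses on both $|p_1|$ and $|p_2|$ (via the two symmetric halves $q_1 GG p_2$ and $p_1 GG q_2$), so one cannot simply pull a $5^d$ factor out of a single geometric unfolding. The fact that $2/5<1$ is precisely what lets the induction close, keeping $|p_g|$ within the same $O(5^d s\log s)$ bound as $|u_g|$, $|t_g|$, $|q_g|$.
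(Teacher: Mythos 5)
The paper states this lemma without proof, so you are filling a genuine gap; your proof is correct, and your approach is in fact the one that is \emph{required} to get the stated base $5$. A direct adaptation of the argument the paper gives for \Conc\Or\Conc\ (Lemma~\ref{lem:lower:coc:form:size}), namely observing $\abs{u_h},\abs{q_h}\in\O(\abs{p_h})$ and folding everything into a recurrence in $\abs{p_g}$ alone, would here give $\abs{p_g}\le 8\abs{p_1}+8\abs{p_2}+\O(\log s)$ and hence only $\O(8^d s\log s)$; and the one-line ``$p_g$ is obviously smaller than $u_g$'' step used for \Conc\Plus\Conc\ (Lemma~\ref{lem:lower:cpc:form:size}) fails because the two barred gadgets blow $p_g$ up beyond $u_g$. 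Your device --- substitute the already-proven $\O(5^{d-1}s\log s)$ bounds on $\abs{u_h},\abs{q_h}$, then run a separate induction on $\abs{p_g}$ using that the internal recursion coefficient $2$ is strictly less than the external growth rate $5$ --- is exactly what rescues the $5^d$ bound. You correctly flag this as the non-trivial step.

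Two small numerical slips, neither of which affects the asymptotics. First, $\abs{\overline{v}^\tau}$ is roughly $3\abs{v}$, not $2\abs{v}$: each alternation $(v_i\Or\tau)$ contributes one inner node and two leaves, so the two barred gadgets give $6(\abs{u_1}+\abs{u_2})+\O(\log s)$ rather than $4(\abs{u_1}+\abs{u_2})+\O(\log s)$. Second, the closing inequality $(C/5+2K/5)\le K$ is equivalent to $K\ge C/3$, not $K\ge 5C/3$; your condition is sufficient but not tight. Both corrections only change constants hidden inside the $\O(\cdot)$, so the conclusion $\abs{p_r}\in\O(5^d s\log s)$ stands.
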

By defining \( H \deff 4^+ (3 \mid 4) \overline{u_r}^4 (3 \mid 4) 4^+ \)
for the outer OR we finish the proof of \cref{lem:lower:reduction}.
\QED

\subsection{Proof of \texorpdfstring{\cref{thm:lower:outerOR}}{Theorem~\ref{thm:lower:outerOR}}}\label{lower:outerOR}
Let $A=\{a^{(1)}, \dots, a^{(n)}\}$ be the first set
and $B=\{b^{(1)}, \dots, b^{(m)}\}$ be the second set of half-assignments.
Inspired by the reduction in Section~3.6 in the full version of \cite{BackursI16}
we define the final text and pattern as follows:
\begin{align*}
  t \deff&
  \bigodot_{i=1}^{3n} \left(
  333 u_r 3 4^{\ell} 3 u_r 3 4^{\ell}
  33 t_r(a^{(i)}) 3 4^{\ell} 3 u_r 3 4^{\ell}
  \right) \\
  p \deff & 3
  \bigodot_{j=1}^{m} \left(
  33^+ q_r 3 H 3^+ p_r(b^{(j)}) 3 H
  33^+ q_r 3 4^{\ell} 3 u_r 3 4^{\ell}
  \right)
  33 q_r 3 4^{\ell} 3 u_r 3 4^{\ell} 333
\end{align*}
Where $\ell\deff \abs{u_r}+4$ and $a^{(j)}=a^{(j \mod n)}$ for $j \in [n+1, 3n]$.
Again we call the concatenations in $t$ and $p$ for each $i$ and $j$
the $i$th text group and the $j$th pattern group, respectively.
Recall, that we have the following assumption for $H$:
\begin{itemize}
  \item $\lang(H) \subseteq \lang(4^+ (3 \Or 4)^* (0 \Or 1 \Or 2 \Or 4)^* (3 \Or 4)^* 4^+)$.
  \item $4^{\ell}, 4^{\ell}3u_r 3 4^{\ell} \in \lang(H)$
  \item $\abs{H} \in \O(\abs{u_r})$
\end{itemize}

\begin{lemma}\label{lem:lower:outerOR:completeness}
  If there are $a^{(k)}$ and $b^{(l)}$ such that $F(a^{(k)}, b^{(l)})=\true$, then $t \in \match(p)$.
\end{lemma}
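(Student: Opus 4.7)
The plan is, by the periodicity $a^{(j)} = a^{(j \bmod n)}$, to first shift indices so that, without loss of generality, $l = k$, and then exhibit a substring of $t$ matching $p$ that starts at the first symbol of text group $n+1$ and ends just after the initial $333$ of text group $n+m+2$.

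First, the prefix $3$ of $p$ will consume the very first $3$ of text group $n+1$. Then, for each $j \in [1, k-1]$, pattern group $j$ is to be aligned with text group $n+j$ via a \emph{universal match}: both occurrences of $H$ take the short form $4^\ell$, the two copies of $q_r$ absorb the leading $u_r$ and the middle $t_r(a^{(n+j)})$ blocks respectively (using $u_g, t_g(a) \in \lang(q_g)$ from the formula-encoding correctness lemma), $p_r(b^{(j)})$ absorbs the second $u_r$ (using $u_g \in \lang(p_g(b))$), and all literal $3$s, $4^\ell$-blocks, $33^+$'s and $3^+$'s line up with the identical literal pieces of the text. With this recipe, each such pattern group consumes exactly one text group.

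For the satisfying pattern group $k$, both copies of $H$ are instead taken in the \emph{long} form $4^\ell 3 u_r 3 4^\ell$. This shifts the alignment inside text group $n+k$ so that $p_r(b^{(k)})$ lines up with $t_r(a^{(k)})$; the match $t_r(a^{(k)}) \in \lang(p_r(b^{(k)}))$ is exactly the content of the formula-encoding correctness lemma applied to $F(a^{(k)}, b^{(k)}) = \true$. The trailing $33^+ q_r 3 4^\ell 3 u_r 3 4^\ell$ of pattern group $k$ then spills into text group $n+k+1$ and consumes its initial half $333 u_r 3 4^\ell 3 u_r 3 4^\ell$. From this point onwards the alignment is shifted by half a text group: for $j \in [k+1, m]$ the initial $33^+ q_r$ of pattern group $j$ absorbs the central $33$ and the $t_r$-block of text group $n+j$, the rest of it (with both $H$'s short again) covers the trailing half of that text group, and its final tail consumes the leading half of $n+j+1$. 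The suffix $33 q_r 3 4^\ell 3 u_r 3 4^\ell 333$ of $p$ then matches the trailing half of text group $n+m+1$ concatenated with the $333$ at the start of $n+m+2$.

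The main technical obstacle will be keeping the alignment bookkeeping straight through the half-text-group shift introduced by the long-form $H$ in pattern group $k$. Once that is pinned down, everything reduces to checking that each $33^+$ and $3^+$ in the pattern still finds enough consecutive $3$s in the text (immediate, since every text-group boundary supplies a $333$ block and every text group contains a central $33$ block), and invoking the three claims of the formula-encoding correctness lemma for the pattern-dependent matches. The edge cases $k = 1$ and $k = m$ are handled by the same recipe with the obvious empty phases.
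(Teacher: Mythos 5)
Your proposal is correct and follows essentially the same strategy as the paper's proof: universal matches (both $H$'s taking their short $4^{\ell}$ form) for pattern groups before the satisfying one, both $H$'s taking the long form $4^{\ell}3u_r34^{\ell}$ in pattern group $k$ to shift the alignment so that $p_r(b^{(k)})$ lines up with $t_r(a^{(k)})$ and spill half a group into the next text group, then the half-shifted universal matching for the remaining groups and the suffix. The only cosmetic difference is that you pin the starting position to text group $n+1$ (the paper starts at text group $1$, effectively shifted, and makes the half-group offset explicit by regrouping the residual text $t'$), but the substance of the argument is identical.
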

\begin{proof}
  Assume w.l.o.g.\ $a^{(k)}$ and $b^{(k)}$ satisfy $F$.
  Otherwise we have to shift the indices for the text and the pattern in the proof accordingly.
  We match the $i$th pattern group to the $i$th text group for $i=1,\dots,k-1$ using the assumptions
  but for the first pattern group the initial repetition is only taken once because of the prefix of $p$.
  The match is performed straightforward by matching $4^{\ell}$ to $H$.
  $p_r(b^{(i)})$ matches $u_r$ and the second $q_r$ matches $t_r(a^{(i)})$.

  Then we match the $k$th pattern group to the $k$th text group and a part of the $k+1$th text group as follows,
  which is again possible by the assumptions:
  \begin{center}
    \includegraphics{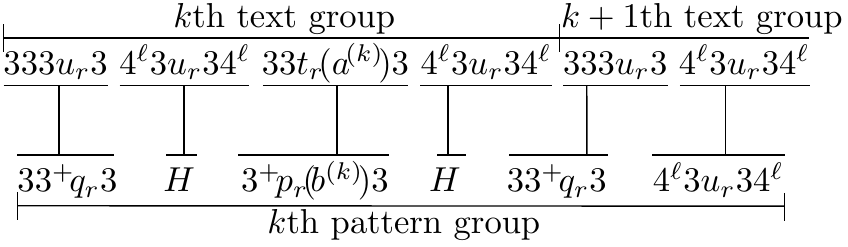}
  \end{center}
  We shift the remaining part $t'$ of the text such that it becomes easier to show which part the remaining pattern matches:
  \begin{align*}
    t' = &
    33 t_r(a^{(k+1)}) 3 4^{\ell} 3 u_r 3 4^{\ell}
    \bigodot_{i=k+2}^{3n} \left(
    333 u_r 3 4^{\ell} 3 u_r 3 4^{\ell}
    33 t_r(a^{(i)}) 3 4^{\ell} 3 u_r 3 4^{\ell}
    \right) \\
    = &
    \bigodot_{i=k+1}^{3n} \left(
    33 t_r(a^{(i)}) 3 4^{\ell} 3 u_r 3 4^{\ell}
    333 u_r 3 4^{\ell} 3 u_r 3 4^{\ell}
    \right)
    33 t_r(a^{(3n)}) 3 4^{\ell} 3 u_r 3 4^{\ell}
  \end{align*}
  After this shift we match the $i$th pattern group to the $i$th text group of $t'$ for $i=k+2, \dots, m$ by
  matching $H$ to $4^{\ell}$ and the other parts in a straightforward way.
  Finally, the suffix of the pattern is matched to the prefix of the $m+1$th text group of $t'$ in the canonical way.
\end{proof}

\begin{lemma}\label{lem:lower:outerOR:correctness}
  If $t \in \match(p)$, then there are $a^{(k)}$ and $b^{(l)}$ such that $F(a^{(k)}, b^{(l)})=\true$.
\end{lemma}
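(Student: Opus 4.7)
The plan is to mirror the soundness proof for \Conc\Plus\Conc\ (\cref{lem:lower:cpc:outerOR:correctness}), adapted to the helper-gadget framework. I would use the rigid run structure of the text --- the only runs of $\ge 3$ consecutive $3$s lie at the start of text groups, and all $4$-runs have fixed length $\ell=\abs{u_r}+4$ --- together with the structural constraint $\lang(H)\subseteq \lang(4^+(3\Or 4)^*(0\Or 1\Or 2\Or 4)^*(3\Or 4)^*4^+)$ to pin down the alignment.

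First I would anchor the endpoints. The suffix $333$ of the pattern must match a $3$-run of length $\ge 3$, which occurs in the text only at the start of a text group; a symmetric argument on the prefix $3\cdot 33^+\cdot q_r$ (where $q_r$ contains no $3$ or $4$) fixes the starting position. By the cyclic extension $a^{(j)}=a^{(j\bmod n)}$ I may assume without loss of generality that the matched substring of $t$ spans text groups $1$ through $K$ for some $K\le m+1$. Next, I would classify the behaviour of each occurrence of $H$: because $\lang(H)$ begins and ends with $4^+$ and $\abs{H}=\O(\abs{u_r})\ll \ell$, each $H$-occurrence either stays inside a single $4^{\ell}$-run (\emph{short mode}) or bridges two consecutive $4^{\ell}$-runs across exactly one intermediate $3u_r3$ gap (\emph{long mode}). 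The flexibility of the $33^+$ and $3^+$ Kleene pluses is similarly bounded by the text's run lengths.

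Finally, I would carry out a counting argument as in the \Conc\Plus\Conc-proof. Let $l$ be the smallest index such that the $l$th pattern group does not align canonically with a single text group --- either because one of its $H$s is in long mode, or because an adjacent $33^+$/$3^+$ is not taken minimally. A short case analysis on the two $H$-modes and the two repetition counts shows that every remaining configuration consistent with the anchoring forces $p_r(b^{(l)})$ to match $t_r(a^{(k)})$ for some $k$. Invoking the appropriate formula-correctness lemma for the pattern type in question (\cref{lem:lower:cpc:form:correct}, \cref{lem:lower:cs:form:correct}, \cref{lem:lower:cpo:form:correct}, or \cref{lem:lower:cop:form:correct}) then yields $F(a^{(k)},b^{(l)})=\true$ and a satisfying pair.

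I expect the main obstacle to be this final case analysis: two $H$s per pattern group together with the two $3$-pluses admit several local matching strategies, and one must systematically rule out every configuration in which all of the $p_r(b^{(j)})$ match the universal text $u_r$ trivially. The key tool for this is the rigidity of the $4^{\ell}$-runs, which bounds how far any $H$ can stretch and therefore limits the accumulated drift of the pattern relative to the text, forcing at least one pattern group to do nontrivial work.
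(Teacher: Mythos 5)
Your plan does follow the paper's own route: anchor the endpoints via the rigid run structure (the only runs of three or more ``3''s are at text-group boundaries, and all $4$-runs have fixed length $\ell$), isolate the first pattern group that fails to align canonically with a single text group, and do a local case analysis on the $H$-modes, feeding the surviving configuration into the formula-correctness lemma. That is exactly the structure of the paper's proof.

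However, your classification of the $H$-modes is too narrow, and the omission lands exactly on the case the paper has to rule out. You say the long mode ``bridges two consecutive $4^\ell$-runs across exactly one intermediate $3u_r3$ gap.'' But the gap between the second and third $4^\ell$-runs inside a text group is $33\,t_r(a^{(k)})\,3$, not $3\,u_r\,3$, and the generic assumption $\lang(H)\subseteq\lang(4^+(3\Or4)^*(0\Or1\Or2\Or4)^*(3\Or4)^*4^+)$ does permit $H$ to match $4^\ell 33\,t_r\,3\,4^\ell$ (indeed it does for the $\Conc\Plus\Or$ instantiation). If your classification silently forbids this, the case ``first $H$ short, second $H$ long'' never arises in your analysis, and you would wrongly deduce the canonical alignment without ever producing a contradiction from it. The paper's actual argument is: in that sub-case the second $H$ consumes $4^\ell 33\,t_r\,3\,4^\ell$, and then the following $33^+$ is left facing the single ``3'' before $u_r$, which it cannot match. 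You need to include that gap type in the long mode to close the proof. Two smaller points: the cyclic extension $a^{(j)}=a^{(j\bmod n)}$ exists to give the completeness direction room to shift by one group; in the soundness direction the starting group is simply determined by where the initial $3\cdot33^+$ of $p$ lands, and no WLOG reduction via cyclicity is needed or used. And $\abs{H}\in\O(\abs{u_r})=\O(\ell)$, so $\abs{H}\ll\ell$ is not available; the bound on how far $H$ can stretch comes from the single $(0\Or1\Or2\Or4)^*$ block in the structural constraint, not from a length comparison.
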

\begin{proof}
  By the design of the pattern and the text,
  there must be a $j\le n$ such that initial ``3'' of the pattern is matched to the first ``3'' of the $j$th text group.
  Furthermore, we know that the suffix of the pattern has to match the suffix of some text group and the following 333.
  Hence, not all pattern groups match exactly one text group but only a prefix or more than one text group.
  We choose the first of these groups (i.e.\ the pair with smallest $k$ and $l$).

  Since all prior groups have been matched precisely,
  $q_r$ has to be transformed into $u_r$ because ``3'' is a fresh symbol.
  Since the 3s have to be aligned, $H$ can only match $4^{\ell}3u_r34^{\ell}$ or just $4^{\ell}$ by assumption.
  In the first case $p_r(b^{(l)})$ matches $t_r(a^{(k)})$ and we get a satisfying assignment.

  Assume for contradictions sake that $H$ matches $4^{\ell}$.
  Then $p_r(b^{(l)})$ matches $u_r$.
  If the second $H$ matches $4^{\ell}$, the text group is matched precisely by the pattern group and we have a contradiction.
  Thus, we can assume $H$ matches $4^{\ell}33t_r(a^{(k)})34^{\ell}$.
  But then the following $33^+$ in the pattern has to match a single ``3''.
  Again a contradiction.
\end{proof}

\begin{lemma}\label{cor:lower:outerOR:or:size}
  $\abs{t} \in \O(n (\abs{u_r} + \abs{t_r}))$ and $\abs{p} \in \O(m (\abs{u_r} + \abs{p_r} + \abs{q_r}))$.
\end{lemma}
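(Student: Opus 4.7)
The plan is to prove the size bounds by a straightforward accounting argument: bound the size of a single text/pattern group and then multiply by the number of groups.

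First, I would set $\ell \deff \abs{u_r}+4$ and observe two immediate consequences of the hypotheses on $H$: the block $4^{\ell}$ has length $\O(\abs{u_r})$, and by assumption $\abs{H}\in\O(\abs{u_r})$. Any factor of the form $33^+$, $3^+$, or isolated copies of the symbol $3$ have constant size as patterns.

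Next, I would bound the size of one text group. Looking at
\[ 333\, u_r\, 3\, 4^{\ell}\, 3\, u_r\, 3\, 4^{\ell}\, 33\, t_r(a^{(i)})\, 3\, 4^{\ell}\, 3\, u_r\, 3\, 4^{\ell}, \]
each such group uses $\O(1)$ separator symbols, three copies of $u_r$, four copies of $4^{\ell}$, and one copy of $t_r(a^{(i)})$, hence has total length $\O(\abs{u_r}+\abs{t_r})$. Summing over the $3n$ text groups gives $\abs{t}\in\O(n(\abs{u_r}+\abs{t_r}))$, as claimed.

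For the pattern I would do the same for one group
\[ 33^+\, q_r\, 3\, H\, 3^+\, p_r(b^{(j)})\, 3\, H\, 33^+\, q_r\, 3\, 4^{\ell}\, 3\, u_r\, 3\, 4^{\ell}, \]
which contains $\O(1)$ small repetition/separator factors, two copies of $q_r$, two copies of $H$ (each of size $\O(\abs{u_r})$), one copy of $p_r(b^{(j)})$, one copy of $u_r$, and two copies of $4^{\ell}$, giving size $\O(\abs{u_r}+\abs{p_r}+\abs{q_r})$. The prefix ``$3$'' and the suffix ``$33\, q_r\, 3\, 4^{\ell}\, 3\, u_r\, 3\, 4^{\ell}\, 333$'' contribute only $\O(\abs{u_r}+\abs{q_r})$, which is absorbed. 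Summing over the $m$ pattern groups and adding the prefix/suffix yields $\abs{p}\in\O(m(\abs{u_r}+\abs{p_r}+\abs{q_r}))$.

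There is no real obstacle beyond careful bookkeeping; the only nontrivial ingredient is that the bound $\abs{H}\in\O(\abs{u_r})$ (from the hypothesis of \cref{thm:lower:outerOR}) together with $\ell=\abs{u_r}+4$ ensures that every ``helper'' block in both $t$ and $p$ is already $\O(\abs{u_r})$, so no term outside $\{\abs{u_r},\abs{t_r},\abs{p_r},\abs{q_r}\}$ ever shows up in the asymptotics.
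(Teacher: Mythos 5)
Your proof is correct and is precisely the direct accounting the paper leaves implicit: the lemma is stated without proof because it follows immediately by bounding one text/pattern group by $\O(\abs{u_r}+\abs{t_r})$ resp.\ $\O(\abs{u_r}+\abs{p_r}+\abs{q_r})$ (using $\ell=\abs{u_r}+4$ and $\abs{H}\in\O(\abs{u_r})$) and multiplying by the number of groups. Your handling of the prefix/suffix of $p$ is the right care to take; there is nothing to add.
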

This finishes the proof of \cref{thm:lower:outerOR}.
\QED

\section{Lower Bounds for Membership}\label{sec:lowerMemb}
Instead of giving all reductions from scratch,
we reduce pattern matching to membership
and make use of the results in \cref{lem:lower:reduction}.
By this we get the same bounds as for pattern matching given in \cref{thm:lower:main}.
For the remaining pattern type \Or\Plus\Or\Conc\ we give a new reduction from scratch
which is necessary due to the missing concatenation as outer operation.
\subsection{Reducing Pattern Matching to Membership}\label{lowerMemb:matchToMemb}
\begin{lemma}[Reducing Pattern Matching to Membership]\label{lem:lowerMemb:reduction}
  Given a text $t$ and a pattern $p$ with type in $\{$\Conc\Star, \Conc\Plus\Conc, \Conc\Or\Conc, \Conc\Plus\Or, \Conc\Or\Plus$\}$
  over a constant sized alphabet.

  We can construct a text $t'$ and a pattern $p'$ of the same type as $p$ in linear time such that
  $ t \in \match(p) \iff t' \in \lang(p') $.
  Further, $\abs{t'} \in \O(\abs{t})$ and $\abs{p'} \in \O(\abs{t}+\abs{p})$, except for \Conc\Plus\Or, there we even have $\abs{p'} \in \O(\abs{p})$.
\end{lemma}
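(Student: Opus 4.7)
The plan is the classical matching-to-membership trick: introduce a fresh symbol $\#\notin\Sigma$, set $t'$ to a short $\#$-padded version of $t$, and take $p'=L\cdot p\cdot R$ with a ``prefix absorber'' $L$ and ``suffix absorber'' $R$ whose outer operation is \Conc, so that $p'$ inherits the type of $p$. The direction $t'\in\lang(p')\Rightarrow t\in\match(p)$ will be uniform for all types: since $\lang(p)\subseteq\Sigma^*$ and $\#$ occurs only in the padding, in any legal decomposition $t'=L'\cdot w\cdot R'$ with $w\in\lang(p)$ the factor $w$ must sit inside the $t$-part of $t'$ and is therefore a substring of $t$.

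The absorbers have to be built from the second-level operation of $p$'s type, so the design differs per type. For \Conc\Plus\Or\ a single sub-pattern $(\#\Or\sigma_1\Or\cdots\Or\sigma_{\abs{\Sigma}})^+$ on each side suffices with $t'=\#t\#$: the boundary $\#$'s force the absorbers to be non-empty and the \Plus\ supplies variable length, giving $\abs{p'}=O(\abs{p})$. For \Conc\Or\Plus\ I use $t'=\#^{\abs{t}+1}t\#^{\abs{t}+1}$ and $L=R=(\#^+\Or\sigma_1^+\Or\cdots\Or\sigma_{\abs{\Sigma}}^+)^{\abs{t}+1}$; any prefix of $\#^{\abs{t}+1}t$ has at most $\abs{t}+1$ maximal runs, and these distribute over the $\abs{t}+1$ \Or\Plus\ sub-patterns (adjacent same-character ones may merge). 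For \Conc\Or\Conc\ I keep that $t'$ and let each of $\abs{t}+1$ \Or\Conc\ sub-patterns be the disjunction of all strings of length $1$ or $2$ over $\Sigma\cup\{\#\}$; the resulting length range $[\abs{t}+1,\,2\abs{t}+2]$ covers every required prefix length $\abs{t}+i$ and every such prefix can be cut into the needed blocks of size $1$ or $2$. For \Conc\Star\ I take $t'=t$ and prepend and append $(\sigma_1^*\sigma_2^*\cdots\sigma_{\abs{\Sigma}}^*)^{\abs{t}}$, whose language contains every $\Sigma$-word of length at most $\abs{t}$. In each case $\abs{p'}=O(\abs{t}+\abs{p})$ and the type is preserved by construction.

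The main obstacle is \Conc\Plus\Conc, since each sub-pattern $(s)^+$ is too rigid on its own to express ``arbitrary prefix of $t$''. My plan here is to co-design $t'$ and $p'$: pad $t'=\#^{\abs{t}+1}\,t\,\#^{\abs{t}+1}$, possibly interleaving $\#$'s between text characters to obtain a cleaner alternating structure, and build the absorber as a concatenation of $(\#)^+$- and single-character $(t_j)^+$-sub-patterns arranged so that the point at which the last $(\#)^+$ stops inside the padding selects the starting index $i$ of the interior $p$-match, and symmetrically for the right absorber. Showing that the absorber's language is exactly the set of legitimate left-parts of $t'$, verifying both directions of the biconditional, and checking the size bound $\abs{p'}=O(\abs{t}+\abs{p})$ is where the technical delicacy will be concentrated, since the absence of an \Or\ at the second level leaves little slack and forces careful coordination between $t'$'s shape and the rigid form of each $(s)^+$ sub-pattern.
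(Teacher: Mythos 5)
Four of your five constructions are correct, and for \Conc\Star, \Conc\Or\Conc, \Conc\Plus\Or, and \Conc\Or\Plus\ you are essentially in the same territory as the paper, with some pleasant simplifications. For \Conc\Plus\Or\ the paper uses $t'=1t1$, $p'=\Sigma^+p\Sigma^+$, which is your construction with an existing symbol in place of a fresh $\#$. For \Conc\Star\ the paper prepends and appends $\starred{t}=t_1^*\cdots t_{|t|}^*$ to $p$ with $t'=t$; your $(\sigma_1^*\cdots\sigma_{|\Sigma|}^*)^{|t|}$ is a slight variant of the same idea and has the same $O(|t|)$ size. For \Conc\Or\Conc\ the paper uses the binary-encoding gadget $\bigodot_{i=0}^{\log L}(a^{2^i}\Or a^{2^{i+1}})$ to absorb a variable number of fresh symbols; your ``all blocks of length $1$ or $2$'' gadget avoids the binary trick, works because $|t|+1\le |t|+i\le 2(|t|+1)$, and still achieves $O(|t|)$ size over a constant alphabet. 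For \Conc\Or\Plus\ the paper uses $1^+\Sigma^{|t|}p\Sigma^{|t|}1^+$ where $\Sigma^{|t|}$ matches exactly $|t|$ single characters, while your $(\#^+\Or\sigma_1^+\Or\cdots)^{|t|+1}$ distributes the prefix over $|t|+1$ runs; both are fine. These are genuine minor variations, all correct.

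The gap is \Conc\Plus\Conc, and it is a real one, not just ``technical delicacy.'' Any absorber $L$ of type \Conc\Plus\ is a fixed concatenation $(s_1)^+(s_2)^+\cdots(s_k)^+$, so every word in $\lang(L)$ has the \emph{same} sequence of maximal-run symbols (obtained from $s_1,\dots,s_k$ by merging adjacent equal entries). But the strings you need $L$ to match, namely the left remainders $\#^{|t|+1}t_1\cdots t_{i-1}$ (or their $\#$-interleaved versions) for all $i$, have \emph{different} run-symbol sequences as $i$ varies. No amount of interleaving or padding with a single fresh symbol $\#$ fixes this: the prefixes still have a growing and text-dependent run skeleton, which a rigid \Conc\Plus\ pattern cannot cover. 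The idea you are missing, and the one the paper uses, is to first \emph{re-encode every symbol} so that all encoded symbols share the same run skeleton: with $\Sigma=\{1,\dots,s\}$ set $f(x)=1\cdots(x-1)xx(x+1)\cdots s$, so that every $f(x)$ has run-symbol sequence $1,2,\dots,s$, and the universal block $U=1^+2^+\cdots s^+$ matches every $f(x)$ as well as the separator $R=12\cdots s$ (while $R\notin\lang(f(\sigma))$ so it is not swallowed by $f(p)$). Since $f$ commutes with concatenation and Kleene plus, one sets $t'=R^{|t|+1}f(t)R^{|t|+1}$ and $p'=R^+U^{|t|}f(p)U^{|t|}R^+$, and now the absorber $R^+U^{|t|}$ does match every required left remainder because all blocks (the $R$'s and all $f(t_\ell)$'s) have the identical run skeleton $1,\dots,s$. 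Without a re-encoding of this flavour, your \Conc\Plus\Conc\ plan cannot be completed.
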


\begin{proof}[Proof for Patterns of Type \Conc\Star]
We define $t' \deff t$ and $p'\deff \starred{t}p\starred{t}$ where $\starred{t}$ is the starred text as defined in \cref{lower:concStar}.
Then the claim follows directly.
\end{proof}

\begin{proof}[Proof for Patterns of Type \Conc\Plus\Conc]
  Let $\Sigma=\{1, \dots, s\}$ be the alphabet.
  We first encode every symbol
  such that we can simulate a universal pattern (i.e.\ matching any symbol) by some gadget $U$ of type \Conc\Plus.
  Let $f:\Sigma \to \Sigma^{s+1}$ be this encoding with
  $f(x) = 1 \cdots (x-1) x x (x+1) \cdots s$.
  Since we can extend $f$ in the natural way to texts by applying it to every symbol,
  we can also modify patterns of type \Conc\Plus\Conc\ by applying $f$ to every symbol \emph{without} changing the type.
  After applying $f$ we still have
  $ t \in \match(p) \iff f(t) \in \match(f(p)) $.

  For the step from pattern matching to membership we set $U \deff 1^+ 2^+ \cdots s^+$
  and $R \deff 1 2 \cdots s$.
  Obviously $R\in \lang(U)$ and $f(\sigma)\in \lang(U)$ for all $\sigma\in\Sigma$.
  But we also get $R\notin \lang(f(\sigma))$ since $R$ does not contain a repetition of $\sigma$.
  Finally, we define $t' \deff R^{\abs{t}+1}f(t)R^{\abs{t}+1}$ and $p' = R^+U^{\abs{t}}f(p)U^{\abs{t}}R^+$.
  We claim $t \in \match(p) \iff t' \in \lang(p')$.
  \begin{description}
    \item [``$\Rightarrow$'']
    If $t \in \match(p)$, then there is a substring $\hat t$ of $t$ matched by $p$.
    By the above observations, $f(p)$ matches $f(\hat t)$ which is a substring of $f(t)$.
    Then we use $U^{\abs{t}}$ to match the not matched suffix and prefix of $f(t)$ and a part of $R^{\abs{t}+1}$.
    The remaining repetitions of $R$ are matched by the $R^+$ in the beginning and the end.

    \item [``$\Leftarrow$'']
    If $t' \in \lang(p')$, then $f(p)$ has to match some substring of $f(t)$
    because $R$ cannot be matched by the above observation.
    \qedhere
  \end{description}
\end{proof}

\begin{proof}[Proof for Patterns of Type \Conc\Or\Conc]
  Let $L\deff 2^{\lceil \log \abs{t} \rceil} \in \O(\abs t)$.
  For a set $S$ of symbols, we also write $S$ for the pattern representing the alternative of all symbols in $S$.
  Let $a$ be a new symbol:
  \begin{align*}
    t' \deff & a^{3L-1} t a^{3L-1} \\
    p' \deff & \bigodot_{i=0}^{\log L} \left( a^{2^i} \mid a^{2^{i+1}} \right) (\Sigma\cup \{a \})^{L} p ~ (\Sigma\cup \{a \})^{L} \bigodot_{i=0}^{\log L} \left( a^{2^i} \mid a^{2^{i+1}} \right)
  \end{align*}
  This increases the size of the pattern by an additive term of:
  \[ \O(\abs{\Sigma} \cdot L) + \O\left(\sum_{i=0}^{\log L}2^i+2^{i+1}\right) = \O(\abs{\Sigma}L + L) = \O(\abs{\Sigma}L) \]
  \begin{description}
    \item [``$\Rightarrow$'']
    If $t \in \match(p)$, then there is a substring $t_i\cdots t_j$ of $t$ that is matched to $p$.
    Thus we can match $(\Sigma\cup\{a\})^{i-1}p(\Sigma\cup\{a\})^{n-j}$ to $t$.
    The first $L-i+1$ and the last $L-n+j$ repetitions of $\Sigma\cup\{a\}$ are matched to $a$s.
    Hence there remain at least $2L-1$ and at most $3L-1 \le 4L-1$ $a$s as prefix and suffix.
    We match them to $\bigodot_{i=0}^{\log L} ( a^{2^i} \mid a^{2^{i+1}})$ as follows:

    When allowing empty strings in our pattern we can rewrite the concatenation as follows:
    \[
      \bigodot_{i=0}^{\log L} \left( a^{2^i} \mid a^{2^{i+1}} \right)
      \equiv
      \bigodot_{i=0}^{\log L} a^{2^i} \left( \epsilon \mid a^{2^i} \right)
      \equiv
      a^{2L-1} \bigodot_{i=0}^{\log L} \left( \epsilon \mid a^{2^i} \right)
    \]
    Thus, we can ignore the first part of the pattern since it always matches the first and last $2L-1$ repetitions of $a$.
    It remains to show that the concatenation can match $a^z$ for all $z \in [0,2L-1]$.
    But this directly follows from the binary encoding of a number $z \in [0,2L-1]$
    since the $i$th bit contributes $2^i$ to the sum.
    Thus, we choose $\epsilon$ in the pattern above if and only if the $i$th bit is zero.

    \item [``$\Leftarrow$'']
    If $t' \in \lang(p')$, we know that $p$ matched some substring of $t$ since $p$ cannot match $a$s.
    \qedhere
  \end{description}
\end{proof}

\begin{proof}[Proof for Patterns of Type \Conc\Plus\Or]
  Define $t' \deff 1 t 1$ and $p' \deff \Sigma^+ p \Sigma^+$.
  The claim follows directly since a Kleene Plus matches at least one symbol.
\end{proof}

\begin{proof}[Proof for Patterns of Type \Conc\Or\Plus]
  We define $t' \deff 1^{\abs{t}+1} t 1^{\abs{t}+1}$ and $p' \deff 1^+ \Sigma^{\abs{t}} p \Sigma^{\abs{t}} 1^+$.
  \begin{description}
    \item[``$\Rightarrow$'']
    If $t \in \match(p)$, then $p$ matches the corresponding part in $t'$.
    The not matched prefix of $t$ is matched by the sequence of alternatives.
    The remaining 1s in $t'$ are matched by $\Sigma^+$.

    \item[``$\Leftarrow$'']
    If $t' \in \lang(p')$,
    then $p$ has to match some part of $t$ because the prefix and suffix of $p'$ match at least $\abs{t}+1$ symbols.
    \qedhere
  \end{description}
\end{proof}

\subsection{Patterns of Type \texorpdfstring{\Or\Plus\Or\Conc}{|+|o}}\label{lowerMemb:orPlusOrConc}
Even though the remaining hard pattern type \Or\Plus\Or\Conc\ does not have a concatenation as outer operation,
we can still show a similar lower bound as for the other types.
\begin{theorem}\label{thm:lowerMemb:orPlusOrConc}
  \Or\Plus\Or\Conc-membership cannot be solved in time $\O({nm}/{\log^{17}n})$
  even for constant sized alphabets, unless FPH is false.
\end{theorem}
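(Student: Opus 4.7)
The plan is a direct reduction from \FormPair\ to \Or\Plus\Or\Conc-membership, mirroring \cref{lem:lower:reduction} but adapted to the fact that the outer operation of the pattern is \Or\ instead of \Conc. Any pattern of type \Or\Plus\Or\Conc\ can be written as $p = \bigvee_{b \in B} q_b^+$ with each $q_b$ a disjunction of strings, so $t \in \lang(p)$ iff there exists $b \in B$ such that $t$ is \emph{tileable} via the dictionary $D_b$ of $q_b$, i.e., decomposable into a concatenation of strings from $D_b$. The reduction target is therefore: build a text $t$ of length $\O(n \poly(s))$ and, for each $b \in B$, a dictionary $D_b$ of total size $\O(\poly(s))$, such that $t$ is $D_b$-tileable iff $\exists a \in A\colon F(a,b) = \true$. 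The outer $\bigvee_b$ is then supplied for free by the pattern type.

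The first step is to construct, for every gate $g$ of the (Bonet--Buss depth-reduced) formula, a text $t_g(a)$, a universal text $u_g$, a dictionary $D_g(b)$, and a universal dictionary $D_g^u$, in the style of \cref{lower:concPlusConc}. Then, by structural induction on $g$, I would prove that $t_g(a)$ is $D_g(b)$-tileable iff $F_g(a,b) = \true$, together with the auxiliary claims that $u_g$ is $D_g(b)$-tileable and $t_g(a)$ is $D_g^u$-tileable for all $a,b$, so that the universals can be used as neutral filler. The \textsc{Input} gadget hardcodes the literal value; the \textsc{And} gadget concatenates both subgadget texts with a gate-unique separator and takes the union of the two subdictionaries plus the separator token. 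The \textsc{Or} gadget is the delicate case and is discussed below.

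The second step is to lift the construction from a single $a$ to $\exists a \in A$, which in prior reductions was done by an outer concatenation that is not available here. Following the idea of \cref{lower:outerOR}, I would set $t \deff \mathrm{PRE} \cdot t_r(a^{(1)}) \cdots t_r(a^{(n)}) \cdot \mathrm{POST}$ for the root $r$, and form $D_b$ by taking $D_r(b)$, adding \emph{bypass} tokens that tile any block $t_r(a^{(i)})$ via the universal $D_r^u$-tiling, and adding boundary tokens for $\mathrm{PRE}$ and $\mathrm{POST}$. The prefix and suffix are chosen so that a length residue (or a unique symbol consumable only by a verify-token) forces every tiling to use at least one block in verify mode, yielding that $t$ is $D_b$-tileable iff some $a^{(i)}$ satisfies $F(\cdot,b)$; the two directions are shown analogously to \cref{lem:lower:outerOR:completeness,lem:lower:outerOR:correctness}. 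Combined with $d = \O(\log s)$ this yields $|t|,|p| \in \O((n+m)\, s^{\O(1)})$, and the exponent $17$ is pinned down by pushing this blow-up through the FPH bound $nms^k/\log^{3k+2}n$ exactly as in \cref{thm:lower:main}.

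The main obstacle is the \textsc{Or}-gate gadget. In all prior reductions the outer operation of the pattern is \Conc, so an \textsc{Or} gate is simulated by placing two universal copies around the ``live'' subpattern and using a \Plus\ in the pattern to stretch a universal filler, with the top-level concatenation gluing everything together. Here there is no outer concatenation: inside a single dictionary branch we only have \Plus, \Or, and \Conc, and the outer \Or\ is already spent on the choice of $b$. The two subcases of an \textsc{Or} gate must therefore be encoded by a small set of dictionary strings that, together with the \Plus\ one level above, admits exactly two valid tilings of $t_g(a)$ --- one committing to the left subgate, one to the right --- without admitting spurious hybrid tilings that mix tokens from both sub-subdictionaries. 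Designing gate-unique separators and length-matched filler so that all such hybrids are ruled out is where I expect the core technical effort to lie; the subsequent length-residue argument that forces at least one verification in the outer step is then a comparatively routine adaptation of \cref{lem:lower:outerOR:correctness}.
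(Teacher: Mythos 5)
Your framework is on the right track and matches the paper's high-level structure: encode each gate by a text and a dictionary, prove by structural induction that the gate text is tileable iff the gate evaluates to \true, concatenate the per-assignment blocks into one long text with bypass tilings for inactive blocks, and let the outermost $\Or$ of the pattern type carry the choice of $b$. But you leave a genuine gap exactly where you say the core effort lies: the \textsc{Or}-gate gadget. Saying that one needs a small set of dictionary strings that admits only two tilings (commit-left or commit-right) with no hybrids is a correct statement of the \emph{goal}, not a construction, and without a candidate construction there is no way to check that the induction goes through or that the dictionary sizes stay polynomial.

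The paper resolves this with a non-obvious ``trace'' mechanism that your sketch does not hint at. Each gate's encoding embeds the \emph{entire path from the root to that gate}: writing $h_0,\dots,h_d$ for the root-to-$g$ path and $h^g_i = 2\bin{h_i}\bin{g}2$, the text for gate $g$ is bracketed by $h^g_0\cdots h^g_d$ and its reverse, and the dictionary has two parts. The ``match'' words $D^M_g$ consume these brackets when $F_g$ is to be evaluated. The ``skip'' words $D^S_g$ are of the form $h^g_i\cdots h^g_d\,h^{g_1}_0\cdots h^{g_1}_{i-1}$ (and relatives), which, starting at trace depth $i$ in $g$'s bracket, hand off to depth $i$ in the child $g_1$'s bracket — so a tiling can ``fast-forward'' through a whole subtree while carrying an invariant depth index. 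The key helper lemma (\cref{lem:lowerMemb:opoc:form:correctHelper}) then shows that if a tiling enters a gate's text at depth $i$ it must leave at the same depth $i$, and it is precisely this invariant — enforced because the gate ID $g$ appears in every $h^g_i$, so words from different gates cannot be confused — that rules out your ``spurious hybrid tilings.'' The \textsc{Or} gadget's $D^M_g$ then contains exactly two triples of words: one that evaluates $g_1$ and skips $g_2$ (via a word $h^g_d h^{g_2}_0\cdots h^{g_2}_d$ that drops into $g_2$'s trace at full depth), and a symmetric one. Without something equivalent to this trace invariant, nothing prevents a tiling from mixing tokens from $D_{g_1}$ and $D_{g_2}$ across the $t_1 h^g_d t_2$ boundary, which is exactly the failure mode you flag.

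A second, smaller gap: for the outer $\exists a$ step you gesture at ``a length residue or a unique symbol consumable only by a verify-token.'' The paper instead builds a small three-state automaton into the padding. Every symbol of the text is interleaved with $456$, and the dictionaries contain short tokens $56x4$, $563$, $456345$, $6x45$ that force the tiling through a prefix state, then exactly one verify block $t(a^{(k)})\Uparrow^{456} \in \lang((D_r(b^{(l)})\Uparrow^{456})^+)$, then a suffix state, with state changes only at block boundaries. Your residue idea might work, but it is not developed enough to be checkable, and the paper's state-machine argument is the one that pins down the $\O(m s d^2\log s)$ pattern size that feeds into the exponent $17$.
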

To proof the theorem it suffices to show that \FormPair can be reduced to membership
with a text of length $\O(n s^2 \log s)$ and a pattern of size $\O(ms^3 \log s)$.
Then the claim directly follows from the definition of FPH as for the other types.

\subparagraph*{Idea of the Reduction.}
  As for the other lower bounds, we first encode the evaluation of the formula on two fixed half-assignments.
  We define for each gate $g$ a text $t_g$
  and two dictionaries $D^M_g$ and $D^S_g$ of words.
  The final dictionary for a gate $g$ is defined as $D_g = \bigcup_{g'\in F_g} D^S_{g'} \cup D^M_{g'}$.
  The final pattern is $D_r^+$ where $r$ is the root of $F$.

  $D^M_g$ corresponds to $p_g$ and allows us to match the whole text $t_g$ if the formula is satisfied.
  The texts of the sub-gates are then matched by the corresponding dictionaries.
  But for the OR gate we have to be able to ignore the evaluation of one sub-formula.
  For this we define the set $D^S_g$ which corresponds to $q_g$ and allows us to match the text independently from the assignments.
  As main idea we include the path from the root of the formula to the current gate in the encoding.
  This trace is appended to the text as a prefix and in reverse as suffix.
  The words in $D^M_g$ for OR gates $g$ allow us to jump to a gate in such a trace of exactly one sub-formula.
  Then we use corresponding words from $D^S$ to propagate this jump to the sub-formulas.
  Because the included trace started at the root, we can proceed to the INPUT gates.
  There we add words to accept all evaluations of the gate.
  For the way back up we add the corresponding words in reverse to the dictionaries.

  We make sure that these words are just used at one specific position by
  embedding the encoding of the corresponding gate in the trace.
  Since the gate number can be made unique these words can only be used at one specific position.
  This procedure allows us to write down the words as a set and not as a concatenation as for the other reductions.

\subparagraph*{Encoding the Formula.}
We identify each gate $g$ with its ID, i.e.\ an integer in $[2s]$.
Let $\bin{g}$ be the binary encoding of the gate ID with $\lfloor\log s\rfloor+2=\Theta(\log s)$ bits padded with zeros if necessary.
Further,
let $h_0, h_1, \dots, h_d$ be the path from the root $r=h_0$ of $F$ to the gate $g=h_d$ of depth $d\ge 0$.
To simplify notation we define $h^g_i = 2 \bin{h_i} \bin{g}2$,
i.e.\ the encoding of the gate on the path and the gate where the path ends. \begin{description}
  \item[INPUT Gates]
  We set $D^S_g \deff \{ h^g_{i} \cdots h^g_d 0 h^g_d \cdots h^g_{i},
  h^g_{i} \cdots h^g_d 1 h^g_d \cdots h^g_{i} \mid i \in [d] \}$.

  For $F_g(a,b)=a_i$, we set
  $ t_g \deff h^g_0 \cdots h^g_d a_i h^g_d \cdots h^g_0$
  and $ D^M_g \deff \{ h^g_0 \cdots h^g_d 1 h^g_d \cdots h^g_0 \} $

  For $F_g(a,b)=b_i$, we set
  $ t_g \deff h^g_0 \cdots h^g_d 1 h^g_d \cdots h^g_0 $
  and $ D^M_g \deff \{h^g_0 \cdots h^g_d b_i h^g_d \cdots h^g_0 \} $

  \item[AND Gate]
  We define the text and the corresponding dictionaries as follows:
  \begin{align*}
    t_g \deff & h^g_0 \cdots h^g_d t_1 t_2 h^g_d \cdots h^g_0 \\
    D^M_g\deff & \{h^g_0 \cdots h^g_d, h^g_d \cdots h^g_0\} \\
    D^S_g\deff &\{h^g_i \cdots h^g_d h^{g_1}_0 \cdots h^{g_1}_{i-1},
    h^{g_1}_{i-1} \cdots h^{g_1}_0 h^{g_2}_0 \cdots h^{g_2}_{i-1},
    h^{g_2}_{i-1} \cdots h^{g_2}_0 h^g_d \cdots h^g_i \mid i \in [d] \}
  \end{align*}

  \item[OR Gate]
  We define the text and the additional dictionaries for $g$ as:
  \begin{align*}
    t_g \deff& h^g_0 \cdots h^g_d t_1 h^g_d t_2 h^g_d \cdots h^g_0 \\
    D^M_g\deff& \{
    h^g_0 \cdots h^g_d,
    h^g_d h^{g_2}_0 \cdots h^{g_2}_d,
    h^{g_2}_d \cdots h^{g_2}_0 h^g_d \cdots h^g_0
    \} \\
    \cup& \{
    h^g_0 \cdots h^g_d h^{g_1}_0 \cdots h^{g_1}_d,
    h^{g_1}_d \cdots h^{g_1}_0 h^g_d,
    h^g_d \cdots h^g_0
    \} \\
    D^S_g\deff& \{
    h^g_i \cdots h^g_d h^{g_1}_0 \cdots h^{g_1}_{i-1},
    h^{g_1}_{i-1} \cdots h^{g_1}_0 h^g_d h^{g_2}_0 \cdots h^{g_2}_{i-1},
    h^{g_2}_{i-1} \cdots h^{g_2}_0 h^g_d \cdots h^g_i  \mid i \in [d] \}
  \end{align*}
\end{description}

\begin{lemma}\label{lem:lowerMemb:opoc:form:correctHelper}
  For all assignments $a,b$ and gates $g$:
  \begin{itemize}
    \item $t_g(a) \in \lang(h^g_0 \cdots h^g_{i-1} (D_g(b))^+ h^g_{i-1} \cdots h^g_0)$ for all $i \in [d]$.
    \item $t_g(a) \notin \lang(h^g_0 \cdots h^g_{i-1} (D_g(b))^+ h^g_{j-1} \cdots h^g_0)$
    for all $i\neq j \in [0,d]$,
    where $h^g_0 h^g_{-1}$ and $h^g_{-1} h^g_0$ denote the empty string.
  \end{itemize}
\end{lemma}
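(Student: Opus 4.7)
The plan is to proceed by simultaneous structural induction on the gate $g$, proving both claims together. The design principle behind the construction is that $D^S_g$ acts as a trace-shifting dictionary: each of its words consumes a contiguous block of the ancestor trace of $g$ and emits the corresponding block of the ancestor trace of a sub-gate, preserving the surrounding $h$-separators. This lets us recursively replace a fragment of $t_g$'s outer trace by a fragment of $t_{g_1}$'s or $t_{g_2}$'s trace without altering the total encoded length.

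For the positive claim, the INPUT base case is immediate: after stripping $h^g_0 \cdots h^g_{i-1}$ from both ends of $t_g$, what remains is exactly one word of $D^S_g$ (namely $h^g_i \cdots h^g_d a_i h^g_d \cdots h^g_i$), which is present regardless of whether $a_i = 0$ or $a_i = 1$. For the AND gate, I would invoke the inductive hypothesis at index $i$ to write $t_j = \alpha_j w_j \alpha_j'$ for $j \in \{1,2\}$, where $\alpha_j = h^{g_j}_0 \cdots h^{g_j}_{i-1}$, $\alpha_j' = h^{g_j}_{i-1} \cdots h^{g_j}_0$, and $w_j \in (D_{g_j})^+$, and then factor the middle of $t_g$ as
\[
  (h^g_i \cdots h^g_d\, \alpha_1)\, w_1\, (\alpha_1' \alpha_2)\, w_2\, (\alpha_2'\, h^g_d \cdots h^g_i),
\]
whose three bracketed factors lie in $D^S_g$ by construction, while $w_1, w_2 \in (D_g)^+$ because $D_{g_1} \cup D_{g_2} \subseteq D_g$. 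The OR case is analogous, except that the middle $D^S_g$ word additionally absorbs the extra $h^g_d$ separator placed between $t_1$ and $t_2$ in $t_g$.

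For the negative claim, the key ingredient is that every block $h^g_k = 2\bin{h_k}\bin{g}2$ carries the unique fingerprint $\bin{g}$, and that the delimiter ``$2$'' together with this fingerprint can appear at only one specific position inside any word of $D_g$. I would formulate an invariant stating that any concatenation of words in $D_g$ leaves the outer trace balanced: informally, the number of $h^g$-blocks stripped off on the left equals the number added on the right, and the two sequences are each other's reverses. Combined with the fact that $t_g$ itself has perfectly symmetric outer $h^g$-traces, this invariant forces $i = j$ in any valid match, which contradicts the assumption $i \neq j$.

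The hard part will be establishing this invariant for the OR gate, whose $D^M_g$ admits two asymmetric ``jump chains'' (one routing through $g_1$, one through $g_2$) that each rewrite only part of the outer trace via a single-sided $h^g_d$ bridge. I plan to handle this by a case split on which jump chain is activated and by propagating the invariant through the resulting partial rewrite, reducing the residual matching question to the inductive hypothesis applied to the relevant sub-text $t_1$ or $t_2$.
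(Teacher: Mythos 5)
Your positive-claim argument is correct and essentially the paper's own approach (the factorisation into $D^S_g$ words bracketing the two sub-matches is exactly what the paper sketches). For the negative claim, the skeleton is right---structural induction, exploiting the $\bin{g}$ fingerprint so that $h^g$-blocks can only be matched by words belonging to gate $g$, and reducing to the IH on the sub-texts---but there is a genuine gap: your ``balanced trace invariant'' is not a separate auxiliary fact you can propagate. Stated precisely it \emph{is} the second claim, and establishing it needs a counting step you never spell out, namely identifying the intermediate depth at the seam between the two sub-texts.

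Concretely, in the AND case the prefix $h^g_i\cdots h^g_d$ of the $(D_g)^+$-portion can only be consumed by the word $h^g_i\cdots h^g_d\,h^{g_1}_0\cdots h^{g_1}_{i-1}$, fixing the entry depth into $t_1$ at $i$; symmetrically the exit depth from $t_2$ is $j$; and the unique bridge joining $t_1$ to $t_2$ must be $h^{g_1}_{k-1}\cdots h^{g_1}_0\,h^{g_2}_0\cdots h^{g_2}_{k-1}$ for \emph{some} $k$. Applying the IH to $g_1$ (entry $i$, exit $k$) and to $g_2$ (entry $k$, exit $j$) forces $k=i$ and $k=j$, contradicting $i\neq j$. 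You call OR ``the hard part'' and leave AND unaddressed for the negative claim, but AND already requires this $k$-argument and is where the inductive step actually happens. For OR the only extra case is when the bridge comes from $D^M_g$ (say $h^{g_1}_d\cdots h^{g_1}_0\,h^g_d$); this forces the exit depth from $t_1$ to be $d+1$, while $i\le d$, so the IH for $g_1$ already gives the contradiction---no case split on ``which jump chain is activated'' is needed beyond distinguishing $D^S_g$-bridges from $D^M_g$-bridges. Without naming and propagating the intermediate depth $k$, your plan does not close.
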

\begin{proof}
  The first claim follows by a structural induction on the output gate using only words from $D^S_{g'}$ for the current gate $g'$.
  Likewise we show the second case by a structural induction on the output gate.
  \begin{description}
    \item[INPUT Gate]
    The statement holds by the definition of the dictionary.

    \item[AND Gate]
    Assume the claim is false for $g$.
    We can only match the ``prefix'' $h^g_i \cdots h^g_d$ with the word $h^g_i \cdots h^g_d h^{g_1}_0 \cdots h^{g_1}_{i-1}$.
    And analogously for the ``suffix''.
    The joining part of $t_1t_2$ has to be matched by some $h^{g_1}_{k-1} \cdots h^{g_1}_0 h^{g_2}_0 \cdots h^{g_2}_{k-1}$ for $k\in [0,\dots,d]$
    (possibly the empty string).
    Hence, $t_1 \in \lang(h^{g_1}_0 \cdots h^{g_1}_{i-1} (D_{g_1}(b))^+ h^{g_1}_{k-1} \cdots h^{g_1}_0)$
    and $t_2\in \lang(h^{g_2}_0 \cdots h^{g_2}_{k-1} (D_{g_2}(b))^+ h^{g_2}_{j-1} \cdots h^{g_2}_0)$.
    But from $i\neq j$ it follows that $k\neq i$ or $k \neq j$ and we have a contradiction to the induction hypothesis for $g_1$ or $g_2$.

    \item[OR Gate]
    The ``prefix'' $h^g_i \cdots h^g_d$ has to be matched by $h^g_i \cdots h^g_d h^{g_1}_0 \cdots h^{g_1}_{i-1}$
    and analogously for the ``suffix''.
    If the joining part of $t_1 h^g_d t_2$ was matched by $h^{g_1}_{k-1} \cdots h^{g_1}_0 h^g_d h^{g_2}_0 \cdots h^{g_2}_{k-1}$ for some $k\in[d]$,
    the same proof as for the AND gate applies.
    Otherwise, either $h^{g_1}_d \cdots h^{g_1}_0 h^g_d$ or $h^g_d h^{g_2}_0 \cdots h^{g_2}_d$ was used.
    Let it w.l.o.g.\ be the first one.
    Since $i\in [0,d]$, we have $i \neq d+1$ and hence a contradiction
    to the inductive hypothesis for $g_1$.
    \qedhere
  \end{description}
\end{proof}

\begin{lemma}[Correctness of the Construction]\label{lem:lowerMemb:opoc:form:correct}
  For all assignments $a,b$ and gates $g$:\\
  $F_g(a, b) = \true \iff t_g(a) \in \lang((D_g(b))^+)$.
\end{lemma}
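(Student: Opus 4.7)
The plan is to prove the equivalence by structural induction on the output gate $g$, leaning heavily on \cref{lem:lowerMemb:opoc:form:correctHelper}. The base case (INPUT gate) is immediate: $D_g(b)$ has exactly one $D^M$ word, and its single $0/1$-symbol in the middle position coincides with the evaluation of the read literal, so $t_g(a)\in\lang((D_g(b))^+)$ iff $F_g(a,b)=\true$. The $D^S_g$ words alone cannot reach the middle position because no word in $D^S_g$ starts at $h^g_0$.

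For the AND gate I would use the uniqueness of the binary gate IDs embedded in the $h^g_i$ blocks to pin down the decomposition. For $(\Rightarrow)$, assuming both $F_{g_i}(a,b)=\true$, I consume the prefix $h^g_0\cdots h^g_d$ and the suffix $h^g_d\cdots h^g_0$ with the two singleton words in $D^M_g$, and then invoke the induction hypothesis on each of $t_1,t_2$ via $(D_{g_i}(b))^+\subseteq(D_g(b))^+$. For $(\Leftarrow)$, the unique block $\bin{g}$ at the root of the gate ID forces the prefix $h^g_0\cdots h^g_d$ and the suffix $h^g_d\cdots h^g_0$ to be matched by exactly those $D^M_g$ words, so $t_1t_2$ alone is matched by $(D_g(b))^+$; applying the second claim of the helper lemma then forbids the split to stray from the boundary between $t_1$ and $t_2$, and the induction hypothesis yields $F_{g_1}$ and $F_{g_2}$.

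For the OR gate, $(\Rightarrow)$ is more delicate because one sub-formula may be false. Assuming w.l.o.g.\ $F_{g_1}(a,b)=\true$, I would cover the prefix with $h^g_0\cdots h^g_d$, match $t_1$ by the induction hypothesis, bridge across $h^g_d$ into $t_2$ by $h^g_d h^{g_2}_0\cdots h^{g_2}_d$, match the remainder of $t_2$ using the first claim of the helper lemma (which supplies the $D^S$ words of $g_2$ and its descendants and needs no hypothesis on $F_{g_2}$), and close with $h^{g_2}_d\cdots h^{g_2}_0 h^g_d\cdots h^g_0$. The symmetric strategy using the complementary triple in $D^M_g$ handles the case $F_{g_2}(a,b)=\true$.

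The main obstacle is the $(\Leftarrow)$ direction of the OR gate, and my plan is to classify decompositions by which $D^M_g$ words produce the prefix and suffix of $t_g$. Uniqueness of the $\bin{g}$ marker restricts the prefix word to $h^g_0\cdots h^g_d$ or $h^g_0\cdots h^g_d h^{g_1}_0\cdots h^{g_1}_d$, and the suffix word symmetrically. If the covering word in the middle is of the $D^S$ shape $h^{g_1}_{k-1}\cdots h^{g_1}_0 h^g_d h^{g_2}_0\cdots h^{g_2}_{k-1}$, the second claim of the helper lemma forces $k=d+1$ and reduces, as in the AND case, to both branches being satisfied. In the remaining decompositions the middle is cut by a word of $D^M_g$ of the form $h^{g_i}_d\cdots h^{g_i}_0 h^g_d$ or $h^g_d h^{g_i}_0\cdots h^{g_i}_d$, which isolates the match of $t_i$ and, combined with claim~2 of the helper lemma to rule out index mismatches on the other side, lets the induction hypothesis deliver $F_{g_i}(a,b)=\true$. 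The technical care lies in verifying that these cases exhaust all syntactically possible parses.
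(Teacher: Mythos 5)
Your proof follows essentially the same route as the paper's: structural induction on the gate, the INPUT case from the one-hot construction, AND by pinning the prefix/suffix words and pushing the split to the $t_1\mid t_2$ boundary, OR $(\Rightarrow)$ via claim~1 of \cref{lem:lowerMemb:opoc:form:correctHelper} on the unsatisfied branch, and OR $(\Leftarrow)$ by case analysis on the middle word around the separating $h^g_d$. One small slip: in the OR $(\Leftarrow)$ case you write that a $D^S_g$-shaped middle word ``forces $k=d+1$ and reduces, as in the AND case, to both branches being satisfied''; in fact $D^S_g$ only contains $k\in[d]$, and claim~2 of the helper lemma simply rules this subcase out altogether (it cannot ``reduce to both branches satisfied'' --- if $k=d+1$ were permitted, claim~1 would make the match succeed unconditionally and the construction would be unsound). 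The subcase is vacuous, so the overall argument still stands, but the justification you give for it is not right.
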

\begin{proof}
  We proof the claim by an induction on the output gate.
  \begin{description}
    \item[INPUT Gate]
    Follows directly from the construction of the text and the dictionary.

    \item[AND Gate ``$\Rightarrow$'']
    We can use $D_1^+$ and $D_2^+$ to match $t_1$ and $t_2$ by the induction hypothesis, respectively.
    The remaining parts are matched by the words in $D^M_g$.

    \item[AND Gate ``$\Leftarrow$'']
    The initial and last $h^g_0$ of the text have to be matched.
    Since the gate $g$ is part of the encoding, we can only use words from $D_g^M$ for this.
    It follows directly that $t_1$ is matched by words from $D_1$ because the initial $h^{g_1}_0$ has to be matched too
    and the words in $D_g^S$ are not eligible for this.
    The same argument shows that $t_2$ is matched by words from $D_2$.
    Hence, the claim follows by the induction hypothesis.

    \item[OR Gate ``$\Rightarrow$'']
    Assume w.l.o.g.\ that $F_{g_1}(a,b)=\true$, the other case is symmetric.
    We match the prefix of $t_g$ in the obvious way by the corresponding word from $D^M_g$.
    By assumption we match $t_1$ with words from $D_1$.
    The prefix $h^{g_2}_0 \dots h^{g_2}_d$ of $t_2$ is matched by the corresponding word in $D^M_g$.
    By the first claim of the previous lemma, we have $t_2 \in \lang(h^{g_2}_0 \dots h^{g_2}_d (D_{g_2})^+ h^{g_2}_d \dots h^{g_2}_0)$
    and the remaining suffix can be matched by the corresponding word from $D^M_g$.

    \item[OR Gate ``$\Leftarrow$'']
    By \cref{lem:lowerMemb:opoc:form:correctHelper} the joining part of $t_1 h^g_d t_2$ has to be matched by either
    $h^g_d h^{g_2}_0 \dots h^{g_2}_d$ or $h^{g_1}_d \dots h^{g_1}_0 h^g_d$.
    Let it w.l.o.g.\ be the first one.
    Then $t_1$ has to be matched by words from $D_{1}$ again by the lemma.
    The inductive hypothesis gives us a satisfying assignment.
    \qedhere
  \end{description}
\end{proof}

\begin{lemma}\label{lem:lower:opoc:form:size}
  We have the following size bounds:
  \begin{itemize}
    \item $\abs{t_r} \in \O(s d \log s) \subseteq \O(s^2 \log s)$
    \item $\abs{D_r} \in \O(s d) \subseteq \O(s^2)$
    \item $\forall x \in D_r: \abs{x} \in \O(d\log s) \subseteq \O(s\log s)$
  \end{itemize}
\end{lemma}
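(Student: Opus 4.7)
All three bounds follow by unwinding the recursive construction and observing that the atomic building block $h^g_i = 2\bin{h_i}\bin{g}2$ has length $\Theta(\log s)$, since every gate ID fits in $\lfloor\log s\rfloor+2$ bits. Consequently a full trace $h^g_0 \cdots h^g_{d_g}$ (or its mirror) consists of at most $d+1$ such blocks and has total length $\O(d\log s)$, uniformly in the depth $d_g$ of the gate $g$.

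For the text bound I would set up, for AND and OR gates, the recurrence
\[
\abs{t_g} \le \abs{t_{g_1}} + \abs{t_{g_2}} + \O(d\log s),
\]
in which the additive term accounts for the prefix trace, the mirrored suffix trace, and (only for OR gates) the single additional middle block $h^g_{d_g}$. For INPUT gates we have $\abs{t_g} \in \O(d\log s)$ directly. Since $F$ has $2s-1$ gates in total, unrolling the recurrence gives $\abs{t_r} \in \O(sd\log s) \subseteq \O(s^2 \log s)$.

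For the number of words in the dictionary, I would observe directly from the construction that, at every gate $g$, the set $D^S_g$ contains at most $3d$ words (three per index $i \in [d]$) and $D^M_g$ contains $\O(1)$ words. Summing over the $2s-1$ gates of $F$ yields $\abs{D_r} \in \O(sd) \subseteq \O(s^2)$. Finally, every word in $D_r$ is written explicitly in the construction as a concatenation of at most $\O(d)$ blocks of the form $h^{g'}_i$, each of length $\O(\log s)$, which gives the per-word length bound $\O(d\log s) \subseteq \O(s\log s)$.

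The only subtle point I would need to stress is that the additive overhead in the text recurrence is uniformly $\O(d\log s)$ at every gate, even for very deep ones: the root-to-gate path of any gate has length at most $d+1$, so every trace appearing anywhere in the construction has length $\O(d\log s)$, independent of where the gate sits in $F$. With this observation the three bounds are immediate; no intricate combinatorics is needed beyond straightforward counting over the $2s-1$ gates.
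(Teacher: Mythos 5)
Your proof is correct and takes essentially the same approach as the paper's (which simply records the observations $\abs{t_g}\le\abs{t_1}+\abs{t_2}+\O(d\log s)$, $\abs{D^M_g}\in\O(1)$, and $\abs{D^S_g}\in\O(d)$ and concludes). You merely spell out the unrolling and the uniformity of the $\O(d\log s)$ overhead in more detail.
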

\begin{proof}
  The lemma follows directly from the definitions and the observations that
  $\abs{t_g} \le \abs{t_1}+\abs{t_2}+ \O(d \log s)$,
  $\abs{D_g^M} \in \O(1)$, and
  $\abs{D_g^S} \in \O(d)$.
  \qedhere
\end{proof}

\subparagraph*{Outer OR.}
Let $A=\{a^{(1)}, \dots, a^{(n)}\}$ be the first set
and $B=\{b^{(1)}, \dots, b^{(m)}\}$ be the second set of half-assignments.
Again we encode $A$ by the text and $B$ by the pattern.
For this we observe that the first step of the reduction produced a pattern of type \Plus\Or\Conc.
Thus, we can use the outer alternative to encode the outer OR to select a specific $b^{(j)}$.
To match the whole text, we blow up the text and the pattern and pad each symbol with three new symbols
such that we can distinguish between the following three matching states:
(1) ignore the padding and match a part of the original text to the original pattern,
i.e.\ we evaluate the formula on two half-assignments.
(2) Match an arbitrary prefix, i.e.\ the symbols before the actual match in state (1).
(3) Match some arbitrary suffix, i.e.\ the symbols after the actual match from state (1).
We allow a change between these states only at the end of a text group
and require that we go through all three states if and only if the text can be matched by the pattern.

\begin{definition}[Blow-Up of a Text]\label{def:blowUp}
  Let $t=t_1 \cdots t_n$ be a text of length $n$ and $u$ be some arbitrary string.
  We define $t \!\Uparrow^u \deff u t_1 u t_2 \cdots u t_n$
  and extend it in the natural way to sets.
\end{definition}
Using this we define the final text and pattern as follows:
\begin{align*}
  t \deff & 563
      \bigodot_{i=1}^{n} \left( t(a^{(i)})3 \!\Uparrow^{456} \right)
      45 \\
  p \deff & p_1^+ \mid p_2^+ \mid \dots \mid p_m^+ \\
  p_j\deff& 5604 \mid 5614 \mid 5624 \mid 5634 \mid 563 \mid
      D_r(b^{(j)}) \!\!\Uparrow^{456} \mid 456345 \mid 6045 \mid 6145 \mid 6245 \mid 6345
\end{align*}

\begin{lemma}\label{lem:lowerMemb:opoc:or:completeness}
  If there are $a^{(k)}$ and $b^{(l)}$ such that $F(a^{(k)}, b^{(l)})=\true$, then $t \in \lang(p)$.
\end{lemma}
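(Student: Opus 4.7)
The plan is to pick the $l$-th outer alternative $p_l^+$ and exhibit an explicit decomposition of $t$ into words chosen from $p_l$. The decomposition realises the three matching states announced before the lemma: a \emph{prefix phase} uses only the four-symbol words $56\sigma 4$ ($\sigma\in\{0,1,2,3\}$) to swallow the first $k-1$ text groups together with their $456$-padding without looking at their content; a \emph{middle phase} evaluates the formula on group~$k$ via the blown-up dictionary; and a \emph{suffix phase} uses the mirrored words $6\sigma 45$ to swallow the remaining groups $k+1,\ldots,n$.

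In the prefix phase the initial $5634$ matches the $563$ at the start of $t$ together with the leading $4$ of group~$1$, and each subsequent $56\sigma 4$ absorbs one interior $456\sigma$-block of a group shifted by two positions: it consumes $\sigma$ together with the $5,6$ on its left and the $4$ of the following block on its right. A $5634$ crosses each internal group boundary inside $1,\ldots,k-1$ by absorbing the terminating $4563$ of one group and the leading $4$ of the next. At the boundary between group $k-1$ and group $k$ we use the shorter alternative $563$ instead, which stops exactly at the leading $4$ of group~$k$ and hands the parsing over to the middle phase.

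For the middle phase, \cref{lem:lowerMemb:opoc:form:correct} yields a decomposition $t(a^{(k)})=x^{(1)}\cdots x^{(r)}$ with every factor $x^{(q)}\in D_r(b^{(l)})$; since blow-up distributes over concatenation, $t(a^{(k)})\!\Uparrow^{456}=x^{(1)}\!\Uparrow^{456}\cdots x^{(r)}\!\Uparrow^{456}$ and every $x^{(q)}\!\Uparrow^{456}$ is one of the alternatives of $D_r(b^{(l)})\!\Uparrow^{456}$, so the interior of group~$k$ is matched word by word. The remaining $4563$ at the end of group~$k$, together with the leading $45$ of group~$k+1$ (or, if $k=n$, the final $45$ of $t$), is then absorbed by the single word $456345$, which shifts the parse from a $4$-aligned boundary to a $6$-aligned one.

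From there the suffix phase mirrors the prefix phase: each interior block of the groups $k+1,\ldots,n$ is absorbed by some $6\sigma 45$, each internal group boundary by $6345$, and a closing $6345$ swallows the trailing $63$ of group~$n$ together with the appended $45$ of $t$. The degenerate cases $k=1$ (no prefix phase: start with $563$ alone and enter the middle phase immediately) and $k=n$ (no suffix phase: the $456345$ used for the middle-to-suffix transition already finishes the match) are handled by the very same alternatives. The only real obstacle is the bookkeeping of these three alignment shifts, which is precisely the design purpose of the special alternatives $563$ and $456345$ in $p_l$; once these are verified, the rest is routine.
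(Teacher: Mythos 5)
Your proof is correct and follows the paper's own route: select $p_l^+$, consume the prefix and groups $1,\dots,k-1$ with $56\sigma 4$ and $563$, evaluate group $k$ via \cref{lem:lowerMemb:opoc:form:correct} and the blown-up dictionary, transition with $456345$, and finish with $6\sigma 45$ and $6345$; you additionally spell out the alignment shifts and the degenerate cases $k=1$, $k=n$, which the paper elides. One small slip: at an internal boundary, $5634$ absorbs only the leftover $563$ of group $i$ (the leading $4$ of its terminating block $4563$ was already taken by the preceding $56\sigma 4$) together with the leading $4$ of group $i+1$; you write ``the terminating $4563$ of one group and the leading $4$ of the next,'' which would be five symbols for a four-symbol word.
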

\begin{proof}
  It suffices to show that we can match $t$ to $p_l^+$.
  The prefix of $t$ and the first $k-1$ text groups are matched by repetitions of $56x4$
  for values $x \in \{0,1,2,3\}$
  while the last three symbols of the $k-1$th group are matched by $563$.
  This is possible by our blow-up with $456$.
  By \cref{lem:lowerMemb:opoc:form:correct} and the definition of the blow-up we get
  $ t(a^{(k)}) \!\!\Uparrow^{456} \in \lang (
  (D_r(b^{(l)}) \!\!\Uparrow^{456} )^+ ) $.
  The following $456345$ is matched by the corresponding pattern
  while the remaining symbols of the text are matched in a straight forward way by repetitions of $6x45$.
\end{proof}

\begin{lemma}\label{lem:lowerMemb:opoc:or:correctness}
  If $t \in \lang(p)$, then there are $a^{(k)}$ and $b^{(l)}$ such that $F(a^{(k)}, b^{(l)})=\true$.
\end{lemma}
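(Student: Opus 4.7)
The approach is to fix an index $l \in [m]$ with $t \in \lang(p_l^+)$ and analyse every valid decomposition of $t$ as a concatenation of words drawn from $p_l$. The strategy is to classify these words by their roles in the text: the prefix words $P = \{5604, 5614, 5624, 5634\}$, the transition word $TP = 563$, the match words $M = D_r(b^{(l)}) \!\!\Uparrow^{456}$, the transition word $TS = 456345$, and the suffix words $S = \{6045, 6145, 6245, 6345\}$. Every word of $p_l$ starts with a symbol in $\{4,5,6\}$, and the text $t$ carries such symbols only at positions inside the $456$-paddings, the initial $56$, and the trailing $45$; this tightly constrains where each word may start.

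Using these constraints, I would derive a transition graph by examining, at each candidate word boundary, the last character of the previous word and the first character of the next. A short case analysis shows that the only admissible transitions are $P \to P$, $P \to TP$, $TP \to M$, $M \to M$, $M \to TS$, $TS \to S$, and $S \to S$. The crucial step is ruling out the transition $TP \to TS$: $TP$ can end only at a $3$-position (either position $3$ or some $p_{i+1}-1$, where $p_i$ denotes the start of the $i$th text group), while $TS$ can start only at a position of the form $p_{j+1}-4$; since each text group has length at least $8$, no $3$-position is immediately followed by a valid start of $TS$. Combined with the requirement that the final word must end at position $\abs{t}$ where $t_{\abs{t}} = 5$, this forces every decomposition to have the global shape $P^{\ast} \cdot TP \cdot M^{+} \cdot TS \cdot S^{\ast}$ with $M^{+}$ nonempty.

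From this shape I read off $k$: the word $TP$ covers either the initial $3$ or the trailing $3$ of some text group $k-1$, so $M^{+}$ begins at $p_k$ for some $k \in [n]$. Since no word in $D_r(b^{(l)}) \!\!\Uparrow^{456}$ contains the symbol $3$, the substring matched by $M^{+}$ cannot span any group boundary; together with the fact that $TS$ must start at $p_{j+1}-4$, this forces $j = k$, so $M^{+}$ matches exactly $t(a^{(k)}) \!\!\Uparrow^{456}$. Because the blow-up $\!\!\Uparrow^{456}$ is an injective monoid homomorphism, any factorisation of $t(a^{(k)}) \!\!\Uparrow^{456}$ into elements of $D_r(b^{(l)}) \!\!\Uparrow^{456}$ pulls back to a factorisation of $t(a^{(k)})$ into elements of $D_r(b^{(l)})$, giving $t(a^{(k)}) \in \lang(D_r(b^{(l)})^{+})$. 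Applying \cref{lem:lowerMemb:opoc:form:correct} at the root gate then yields $F(a^{(k)},b^{(l)})=\true$. The main obstacle is the transition-graph analysis, particularly the position-alignment argument that rules out $TP \to TS$ and the ``no $3$ inside $M^{+}$'' argument that forces the match phase to cover exactly one text group; once both are in place, pulling the match back through the blow-up and invoking correctness of the formula gadget finishes the proof.
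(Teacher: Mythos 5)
Your proof is correct and takes essentially the same approach as the paper: fix the branch $l$ of the outer alternative, show that every valid decomposition of $t$ into words from $p_l$ must have the global shape $P^{*}\cdot 563\cdot M^{+}\cdot 456345\cdot S^{*}$, argue that the match phase $M^{+}$ covers exactly one text group's blow-up, pull the match back through the injective blow-up homomorphism, and invoke \cref{lem:lowerMemb:opoc:form:correct}. You spell out several details that the paper's terse sketch elides (in particular the position-alignment argument ruling out the transition $563\to 456345$, which the paper uses silently to guarantee that the match phase is nonempty and aligned to a group boundary, and the explicit observation that $\Uparrow^{456}$ is an injective monoid homomorphism), but the underlying argument is the same.
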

\begin{proof}
  By the structure of the pattern we can already fix $l$.
  As there is no way to match the text just with words $56x4$ or $6x45$,
  the word $563$ must have been used at the end of some group to switch to the first state.
  Hence, let the $k$th text group be the first group not matched by words of the form $56x4$.
  Observe that we cannot directly switch to an application of $6x45$ and thus
  get
  $ t(a^{(k)})\!\!\Uparrow^{456} \in \lang (
  (D_r(b^{(l)})\!\!\Uparrow^{456} )^+ ) $.
  Since the blow-up $456$ always matches each other, we can ignore it and get $t(a^{(k)}) \in \lang(D_r(b^{(l)})^+)$
  proving the claim by \cref{lem:lowerMemb:opoc:form:correct}.
\end{proof}

\begin{corollary}\label{cor:lowerMemb:opoc:or:size}
  The final text has length $\O(n s d \log s)\subseteq \O(n s^2 \log s)$ and the pattern has size $\O(m s d^2 \log s)\subseteq \O(m s^3 \log s)$.
\end{corollary}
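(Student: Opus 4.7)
The plan is to unroll the definitions of $t$ and $p$ and invoke the per-gate size bounds from \cref{lem:lower:opoc:form:size}, treating the blow-up $\!\Uparrow^{456}$ as a constant-factor transformation. Since $d \le s$ for any formula of size $s$, the two inclusions stated in the corollary will be immediate once the main bounds $\O(nsd\log s)$ and $\O(msd^2\log s)$ are in hand, so the work is entirely in establishing these two quantities.

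For the text, I would observe that $t$ is the concatenation of the constant-size prefix $563$, the $n$ groups $t(a^{(i)})3\!\Uparrow^{456}$, and the constant-size suffix $45$. By \cref{lem:lower:opoc:form:size}, $\abs{t_r}\in\O(sd\log s)$, and since $\!\Uparrow^{456}$ merely prepends three fixed symbols before every original symbol it multiplies the length by only a factor of $4$. Each group therefore contributes $\O(sd\log s)$ symbols, so summing over the $n$ groups gives $\abs{t}\in\O(nsd\log s)$.

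For the pattern, the outer shape $p = p_1^+ \mid \dots \mid p_m^+$ reduces the task to bounding one $p_j$ and multiplying by $m$. Each $p_j$ is an alternative of ten constant-length gadgets together with $D_r(b^{(j)})\!\Uparrow^{456}$, which is the alternative over all words of the dictionary after a symbol-wise blow-up. Here \cref{lem:lower:opoc:form:size} supplies $\abs{D_r}\in\O(sd)$ together with an $\O(d\log s)$ per-word length bound, and the blow-up preserves each word's length up to a constant factor. Hence the dictionary alternative has total size $\O(sd)\cdot\O(d\log s)=\O(sd^2\log s)$, which dominates the ten constant contributions. Summing over the $m$ outer alternatives and the Kleene-Plus markers yields $\abs{p}\in\O(msd^2\log s)$.

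The computation is almost entirely routine bookkeeping; the one place where one could trip up is the interaction of $\!\Uparrow^{456}$ with the alternation in $D_r(b^{(j)})\!\Uparrow^{456}$. It is essential that the blow-up is applied \emph{inside} each dictionary word rather than to the entire alternation at once, so that it enters the final count only as a constant factor on each individual word and not as a factor of $\abs{D_r}$. Once this is verified the bound $\O(msd^2\log s)\subseteq\O(ms^3\log s)$ falls out from $d\le s$, matching the form required to apply \cref{hypo:FPH} and conclude \cref{thm:lowerMemb:orPlusOrConc}.
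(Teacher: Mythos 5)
Your proposal is correct and takes the same route the paper implicitly uses: unroll the final text and pattern, plug in the per-gate bounds from \cref{lem:lower:opoc:form:size} ($\abs{t_r}\in\O(sd\log s)$, $\abs{D_r}\in\O(sd)$, per-word length $\O(d\log s)$), and observe that the blow-up $\!\Uparrow^{456}$ is a constant factor because it prepends a fixed three-symbol string before each original symbol. The paper states the corollary without a separate proof, so the intended argument is exactly this bookkeeping; your flagged subtlety about the blow-up acting on each dictionary word individually (rather than contributing a factor of $\abs{D_r}$) is a reasonable sanity check that matches the paper's ``extend it in the natural way to sets'' convention, and the inclusions follow from $d\le s$.
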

This finishes the proof of \cref{thm:lowerMemb:orPlusOrConc}.
\QED

\appendix

\newpage

\section{FSH implies FPH}\label{appendix:FSHtoFPH}
We use the following relation between \FormSat and \FormPair to show that FSH implies FPH:
\begin{lemma}[Weak version of Lemma~B.2 in the full version of \cite{AbboudB18}]\label{lem:formulaSatToPair}
  An instance of \FormSat on a De~Morgan formula of size $s$ over $n$ variables
  can be reduced to an instance of \FormPair with a monotone De~Morgan formula of size $k = \O(s)$ and two sets of size $\O(2^{n/2})$ in linear time.
\end{lemma}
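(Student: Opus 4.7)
The plan is the classical meet-in-the-middle split-and-list reduction. I partition the $n$ input variables of the De~Morgan formula $F$ into two halves $X_1,X_2$ of size $n/2$ each, and enumerate in $A_0$ and $B_0$ all $2^{n/2}$ Boolean assignments to $X_1$ and $X_2$, respectively; these will become the two sides of the \FormPair instance after a simple encoding.

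First I turn $F$ into a monotone De~Morgan formula $F'$ of size $\O(s)$ in which each variable appears at most once: replace every leaf of $F$ by a distinct fresh variable, dropping any negation at that leaf and keeping the internal $\wedge/\vee$ structure unchanged. The result $F'$ has $s$ leaves, is monotone since only $\wedge$- and $\vee$-gates remain, and every fresh variable appears exactly once. I then partition these $s$ fresh variables into an \emph{$A$-side} and a \emph{$B$-side}, according to whether the corresponding leaf of $F$ referenced a variable of $X_1$ or $X_2$, and pad each side with unused dummy variables so that both contain exactly $s/2$ variables.

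Next I encode the half-assignments. For $a\in A_0$, let $\alpha(a)\in\SetB^{s/2}$ be the vector that, for each leaf of $F$ reading $x_i$ (resp.\ $\neg x_i$) with $i\in X_1$, sets the corresponding fresh variable to $a_i$ (resp.\ $1-a_i$), and sets dummy variables arbitrarily. Define $\beta(b)$ symmetrically for $b\in B_0$, and set $A=\{\alpha(a)\mid a\in A_0\}$ and $B=\{\beta(b)\mid b\in B_0\}$. By construction the leaves of $F'$ under $(\alpha(a),\beta(b))$ take exactly the values that the corresponding literals of $F$ take under $(a,b)$, so $F'(\alpha(a),\beta(b))=F(a,b)$; hence a satisfying pair in $A\times B$ exists for $F'$ iff $F$ is satisfiable.

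The formula $F'$ has size $\O(s)$ and the sets $A,B$ each contain $2^{n/2}$ vectors of length $s/2$, matching the claim; the reduction writes each leaf of $F$ and each half-assignment once, so it runs in time linear in the output size. The only step that requires any care is padding the two sides to exactly $s/2$ each, since in an adversarial $F$ almost all leaves may read variables from the same half of the original inputs; this is easily resolved by inserting dummy variables that do not occur in $F'$, which preserves monotonicity, keeps $|F'|\in \O(s)$, and is the closest thing to a real obstacle in the construction.
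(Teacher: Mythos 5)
Your proposal is correct and follows essentially the same approach as the paper's proof sketch: replace each leaf of $F$ with a distinct fresh variable and drop negations to obtain a monotone read-once formula $F'$, split the original $n$ inputs into two halves, and for each half-assignment set the corresponding fresh variables to the evaluated literal values of the leaves they replace, yielding $2^{n/2}$ vectors per side. The paper leaves the balancing of the two sides to $s/2$ variables each implicit; you make that padding explicit, which is a small but welcome refinement rather than a deviation.
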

\begin{proof}[Proof Idea]
  Let $F$ be the formula for \FormSat on $n$ variables and size $s$.
  We define $F'$ to be the same formula as $F$ but each leaf is labeled with a different variable and we remove the negations from the leaves.

  For all half-assignments $x$ to the first half of variables of $F$ we construct a new half-assignment $a_x$ for $F'$ as follows:
  Let $l$ be a leaf in $F$ with a variable from the first half of inputs
  and let $l'$ be the corresponding variable/leaf in $F'$.
  We set $a_x[l']=\true$ if and only if $l$ evaluates to $\true$ under $x$.
  We construct the set $B$ analogous for the second half of inputs of $F$.
  Since $F$ has $n$ inputs this results in $2^{n/2}$ assignments for $A$ and $B$.
\end{proof}

\begin{lemma}\label{lem:FSHimpliesFPH}
  FSH implies FPH.
\end{lemma}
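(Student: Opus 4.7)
My plan is to prove the contrapositive: assume FPH fails and construct an algorithm that refutes FSH. Negating \cref{hypo:FPH} supplies a fixed constant $k \geq 1$ together with an algorithm $\mathcal{A}$ that solves FormPair, for monotone De~Morgan formulas of size $s$ and assignment sets of sizes $N$ and $M$, in time $O(NM\, s^k / \log^{3k+2} N)$. The idea is to feed $\mathcal{A}$ the FormPair instance produced from a FormSat instance by \cref{lem:formulaSatToPair}, and then to choose the free parameter $\delta > 0$ in FSH small enough, depending on the now-fixed $k$, so that the resulting running time beats $O(2^n/n^\epsilon)$ for some $\epsilon > 0$.

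Concretely, given a FormSat instance --- a De~Morgan formula $F$ on $n$ variables of size $s \le n^{3+\delta}$ --- I apply \cref{lem:formulaSatToPair} in linear time to obtain an equivalent FormPair instance with a monotone formula $F'$ of size $s' = O(s)$ and half-assignment sets $A, B$ of size $N = M = O(2^{n/2})$. Running $\mathcal{A}$ on this instance and using $\log N = \Theta(n)$, the total running time (which absorbs the linear-time reduction) becomes
\[
  O\!\left(\frac{2^n (s')^k}{n^{3k+2}}\right) \;=\; O\!\left(\frac{2^n\, n^{(3+\delta)k}}{n^{3k+2}}\right) \;=\; O\!\left(\frac{2^n}{n^{2-\delta k}}\right).
\]

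To finish, choose any $\delta \in (0, 2/k)$ and let $\epsilon \deff 2-\delta k > 0$. The construction then decides FormSat on formulas of size $n^{3+\delta}$, which is in particular of size $n^{3+\Omega(1)}$, in time $O(2^n/n^\epsilon)$, contradicting FSH. Once the reduction of \cref{lem:formulaSatToPair} is available the computation is just bookkeeping; the only subtle point is the order of quantifiers, namely that the constant $k$ furnished by the failure of FPH is fixed before the free parameter $\delta$ in FSH is chosen small enough to keep the exponent $2 - \delta k$ strictly positive.
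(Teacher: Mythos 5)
Your strategy matches the paper's — contrapositive, apply \cref{lem:formulaSatToPair}, and then do the bookkeeping with a free parameter $\delta$ (the paper fixes $\delta = 1/(4k)$; your quantifier-order remark is a nice clarification of why that choice is legitimate). However, there is one genuine gap: the \emph{Word-RAM model mismatch}. Both hypotheses are stated for a Word-RAM whose word size is $\Theta(\log(\text{input size}))$. The FormPair algorithm you invoke runs on inputs of size $\Theta(2^{n/2})$, so it is entitled to constant-time operations on words of $\Theta(n)$ bits. The FormSat instance, by contrast, has size $\poly(n)$, so the refuting algorithm you are trying to construct only has constant-time operations on $\Theta(\log n)$-bit words. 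You therefore cannot run $\mathcal{A}$ ``for free''; each of its word operations must be simulated, incurring (as the paper notes, following \cite{AbboudB18}) a multiplicative slowdown of $(\log N)^{1+o(1)} = n^{1+o(1)}$. This changes your final bound from $O(2^n/n^{2-\delta k})$ to $O(2^n/n^{1-\delta k - o(1)})$, so the conclusion holds only for $\delta < 1/k$ rather than $\delta < 2/k$. The argument is easily patched by tightening the range of $\delta$, but as written the time analysis is incorrect and the range $\delta \in (1/k, 2/k)$ does not yield a contradiction.
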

\begin{proof}
  Assume FSH holds and FPH is false for some fixed $k\ge 1$.
  Let $F$ be a formula for \FormSat on $N$ inputs and size $s=N^{3+1/(4k)} \in N^{3+\Omega(1)}$.
  By \cref{lem:formulaSatToPair} we transform $F$ into a monotone De~Morgan formula $F'$ of size $s'=\O(s)$
  and two sets with $n,m\in \O(2^{N/2})$ assignments.
  We run the algorithm for \FormPair on this instance to contradict FSH:
  \begin{align*}
    \O\left(\frac{n \cdot m \cdot s'^k}{\log^{3k+2}n}\log^{1+o(1)}2^N\right)
    &\subseteq \O\left(\frac{2^{N/2} 2^{N/2} s^k N^{1.25}}{\log^{3k+2}2^{N/2}}\right)
    = \O\left(2^N \frac{N^{3k+0.25+1.25}}{N^{3k+2} (1/2)^{3k+2}}\right) \\
    &= \O\left(2^N \frac{N^{3k+1.5}}{N^{3k+2}}\right)
    = \O\left(\frac{2^N}{N^{0.5}}\right)
  \end{align*}
  See the following paragraph for the additional factor of $N^{1+o(1)}$.
\end{proof}

As Abboud and Bringmann \cite{AbboudB18} we use the \emph{Word-RAM} model as our computational model.
The word size of the machine will be fixed to $\Theta(\log N)$ many bits for input size $N$.
Likewise we assume several operations that can be performed in time $\O(1)$ (e.g.\ AND, OR, NOT, addition, multiplication, \dots).

While this is sufficient for our reductions, we also need that the operations are robust to a change of the word size to state FPH.
As in \cite{AbboudB18} we require that
we can simulate the operations on words of size $\Theta(\log N)$ on a machine with word size $\Theta(\log \log N)$ in time $(\log N)^{1+o(1)}$.

In the above proof the input size increased from $N$ to $n=2^N$.
Hence, we have to simulate the algorithm for \FormPair with word size $\log n=N$ on a machine with word size $\log N$ to get an algorithm for \FormSat.
Thus, the running time slows down by a factor of $(\log n)^{1+o(1)} = N^{1+o(1)}$.

\newpage

\section{Correctness of the Graph Construction for \texorpdfstring{\Plus\Or\Conc\Plus}{+|o+}-Membership}\label{appendix:correctness}
We show the correctness of the graph construction given in the proof of \cref{thm:upper:main}~\cref{thm:upper:main:concPlus}.
\begin{claim}
  If $t \in \lang(p)$,
  then there is a path from $v_0^0$ to $v_n^0$.
\end{claim}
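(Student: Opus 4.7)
The plan is to convert a decomposition witnessing $t \in \lang(p)$ into an explicit walk in $G$. Write $t = w_1 \cdots w_r$ with each $w_j \in \lang(p_{\ell_j})$ and let $0 = i_0 < i_1 < \cdots < i_r = n$ be the block boundaries, so that $w_j = t_{i_{j-1}+1}\cdots t_{i_j}$; the walk will be stitched together one block at a time, visiting $v_{i_0}^0, v_{i_1}^0, \ldots, v_{i_r}^0$ in order.

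For each block $j$ I use a single $M'$-edge, namely the one given by the minimal tuple $(f_j, a_j, b_j, e_j) \in M'$ anchored at the text runs touched by $w_j$. Because every $p_{\ell_j}$ is of type \Conc\Plus, each of its runs is a Kleene plus and in particular $f_j = e_j = 1$. Writing the run-length encoding of $p_{\ell_j}$ as $(\sigma_1, c_1)\cdots(\sigma_{k_j}, c_{k_j})$, and denoting by $[s,e]$ the $\sigma_1$-run of $t$ containing $i_{j-1}+1$ and by $[s', e']$ the $\sigma_{k_j}$-run containing $i_j$, the minimality of the tuple forces $a_j = e - c_1 + 1$ and $b_j = s' + c_{k_j} - 1$ (with the obvious adjustment when $k_j = 1$, where I pick instead the minimal tuple starting exactly at $i_{j-1}+1$). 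Using that the actual match of $p_{\ell_j}$ on $w_j$ absorbs at least $c_1$ first-run characters from position $i_{j-1}+1$ and at least $c_{k_j}$ last-run characters up to $i_j$, one obtains the key inequalities $i_{j-1} \le a_j - 1$ and $b_j \le i_j$; moreover both $v_{i_{j-1}}^1$ and $v_{a_j-1}^1$ lie on the state-$1$ within-run chain of $[s,e]$, while both $v_{b_j}^2$ and $v_{i_j}^2$ lie on the state-$2$ within-run chain of $[s', e']$.

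The $j$th fragment of the walk is then
\[
v_{i_{j-1}}^0 \to v_{i_{j-1}}^1 \to \cdots \to v_{a_j-1}^1 \to v_{b_j}^2 \to \cdots \to v_{i_j}^2 \to v_{i_j}^0,
\]
consisting of the generic $v^0 \to v^1$ edge, a (possibly empty) chain of state-$1$ within-run edges, the $M'$-edge for $(1, a_j, b_j, 1)$, a (possibly empty) chain of state-$2$ within-run edges, and the generic $v^2 \to v^0$ edge. Concatenating over $j = 1, \ldots, r$ produces the required walk from $v_{i_0}^0 = v_0^0$ to $v_{i_r}^0 = v_n^0$.

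The main point to check is that the two within-run stretches in each fragment really sit inside the graph, i.e.\ that all intermediate positions lie in the appropriate state-$1$ or state-$2$ chain. Both reduce to straightforward index inequalities coming out of the run-endpoint formulas for $a_j, b_j$ together with $i_{j-1} \le a_j - 1$ and $b_j \le i_j$; the degenerate cases $a_j - 1 = i_{j-1}$ or $b_j = i_j$ just collapse the corresponding chain to no edges.
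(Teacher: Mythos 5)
Your overall plan mirrors the paper's: decompose $t = w_1 \cdots w_r$ according to the witness that $t \in \lang(p)$, select a minimal $M'$-tuple for each block, and stitch together per-block walks through $v_{i_0}^0, v_{i_1}^0, \ldots, v_{i_r}^0$. But there is a concrete error in the handling of the flags $f_j, e_j$.

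You claim that ``because every $p_{\ell_j}$ is of type \Conc\Plus, each of its runs is a Kleene plus and in particular $f_j = e_j = 1$.'' This is false. A homogeneous pattern of type \Conc\Plus\ has a \Conc\ at the root and \Plus\ at every \emph{inner} node at depth one, but the \Conc\ node may also have \emph{leaf} children that are plain symbols. For example $a\,b^+\,c$ is a valid \Conc\Plus\ pattern, and its run-length encoding is $(a,{=}1)(b,{\ge}1)(c,{=}1)$: the first and last runs contain no Kleene Plus. For such a pattern the paper's $M'$ stores a tuple with $f = 0$ (resp.\ $e = 0$), and the corresponding edge added to the graph is $(v_{i-1}^0, v_j^{2e})$ rather than $(v_{i-1}^1, v_j^{2e})$. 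Your proposed walk fragment
\[
v_{i_{j-1}}^0 \to v_{i_{j-1}}^1 \to \cdots \to v_{a_j-1}^1 \to v_{b_j}^2 \to \cdots \to v_{i_j}^2 \to v_{i_j}^0
\]
uses the $M'$-edge as leaving $v_{a_j-1}^1$ and entering $v_{b_j}^2$, so it only exists in the graph when $f_j = e_j = 1$. When $f_j = 0$ this edge leaves $v_{a_j-1}^0$, and when $e_j = 0$ it enters $v_{b_j}^0$; the fragment you wrote down is then not a walk in $G$.

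To repair this you need to split by flag, exactly as the paper does: if $f_j = 0$, argue that the minimal tuple is forced to start at $a_j = i_{j-1}+1$ so that $v_{a_j-1}^0 = v_{i_{j-1}}^0$ and you enter the $M'$-edge directly from state $0$, without the $v^0 \to v^1$ edge or the state-$1$ chain; if $f_j = 1$, use the state-$1$ chain as you wrote (and this is where you justify that the intermediate positions lie on the run). Symmetrically for $e_j$. The rest of your argument — the index inequalities $i_{j-1} \le a_j - 1$ and $b_j \le i_j$, the fact that the stretch $[i_{j-1}+1, a_j]$ sits inside a single run of $t$, and the concatenation of fragments — is sound and matches the paper's reasoning.
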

\begin{claimproof}
  Assume $p = (p_1 \Or \dots \Or p_k)^+$.
  Since $t \in \lang(p)$, we can decompose $t$ into
  $t = \tau_1 \cdots \tau_\ell$ such that for all $l \in [\ell]$ $\tau_l \in \lang(p_{k_l})$ for some $k_l \in [k]$.
  Define $\lambda_l = \abs{\tau_1 \cdots \tau_l}$ as the length of the first $l$ parts of $t$ for all $l\in [\ell]$.
  We claim that if $\tau_1 \cdots \tau_l \in \lang(p)$, then there is a path from $v_0^0$ to $v_{\lambda_l}^0$.

  For $l=0$, the claim is vacuously true as $\varepsilon \notin \lang(p)$.
  Now assume the claim holds for arbitrary but fixed $l$.
  We define $i=\lambda_l+1$ and $j = \lambda_{l+1}$ to simplify notation and get $\tau_{l+1} = t_i \cdots t_j$.
  From $\tau_{l+1} \in \lang(p_{k_{l+1}})$ and \cref{lem:upper:concPlus:main}
  we know $(f, i', j', e) \in M'$ for some $i \le i' \le j' \le j$.
  Further, $f, e$ are set to 1 if and only if the first and last run of $p_{k_{l+1}}$ contains a Kleene Plus, respectively.
  Hence, $v_{j'}^{2e}$ is reachable from $v_{i'-1}^f$.
  Now it suffices to show that
  (1) $v_{i'-1}^f$ is reachable from $v_{i-1}^0$
  and (2) $v_{j}^0$ is reachable from $v_{j'}^{2e}$.
  Then the claim follows inductively as $v_{i-1}^0$ is reachable from $v_0^0$.

  We first show (1).
  If $f=0$, we must have $i=i'$ and the claim holds.
  Thus assume $f=1$.
  We know $\tau_{l+1}=t_{i} \cdots t_j \in \lang(p_{k'})$ and $t_{i'} \cdots t_{j'} \in \lang(p_{k'})$ for some $k' \in [k]$.
  As the first run of $p_{k'}$ contains a Kleene Plus, the symbols, $t_i$, $t_{i+1}, \dots, t_{i'}$ are all equal.
  That is, they form a run from $i$ to $i'$.
  By the construction of the graph, there are edges $(v_{i-1}^1, v_{i}^1),  \dots, (v_{i'-2}^1, v_{i'-1}^1)$.
  But there is also the additional edge $(v_{i-1}^0, v_{i-1}^1)$ proving (1).

  By a symmetric argument one can show claim (2).
\end{claimproof}
\begin{claim}
  If there is a path from $v_0^0$ to $v_n^0$,
  then $t \in \lang(p)$.
\end{claim}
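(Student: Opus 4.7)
\begin{claimproof}
The plan is to reverse-engineer a valid decomposition of $t$ from any given path $P$ from $v_0^0$ to $v_n^0$. First I would identify the \emph{state-$0$ waypoints} of $P$: let $0 = \alpha_0 < \alpha_1 < \dots < \alpha_L = n$ be the increasing sequence of indices for which $P$ visits $v_{\alpha_l}^0$. The strategy is to show that between consecutive waypoints the path uses exactly one $M'$-edge and that the corresponding segment $\tau_l \deff t_{\alpha_{l-1}+1} \cdots t_{\alpha_l}$ lies in $\lang(p_{k_l})$ for some $k_l \in [k]$. Concatenating the $\tau_l$'s then immediately yields $t = \tau_1 \cdots \tau_L \in \lang((p_1 \mid \dots \mid p_k)^+)$.

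The single-edge claim follows from the shape of the graph: from a state-$0$ node the only outgoing options are the transition $v_i^0 \to v_i^1$ or an $M'$-edge with $f = 0$; from a state-$1$ node only state-$1$ run edges or $M'$-edges with $f = 1$ are available; and an $M'$-edge lands either directly in a state-$0$ node (if $e = 0$) or in a state-$2$ node (if $e = 1$), from where only state-$2$ run edges or the transition $v_i^2 \to v_i^0$ leave. Hence every sub-path from one state-$0$ waypoint to the next consists of an optional prefix of state-$1$ run edges, exactly one $M'$-edge, and an optional suffix of state-$2$ run edges.

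The main technical step is to verify that the unique $M'$-edge $(v_{i-1}^f, v_j^{2e})$ in this sub-path, which witnesses $t_i \cdots t_j \in \lang(p_{k_l})$ for some index $k_l$, actually matches the whole segment $t_{\alpha_{l-1}+1} \cdots t_{\alpha_l}$. If $f = 0$ then $\alpha_{l-1} = i - 1$ by construction; if $f = 1$ then the state-$1$ edges traversed force $t_{\alpha_{l-1}+1}, \dots, t_i$ to lie in a common run of $t$ (hence to be equal), and since the first run of $p_{k_l}$ contains a Kleene Plus by definition of $f = 1$, this common prefix is absorbed into that plus, so $p_{k_l}$ still matches $t_{\alpha_{l-1}+1} \cdots t_j$. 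A symmetric argument on any state-$2$ suffix (using the last run's Kleene Plus, indicated by $e$) extends the match forward to $t_{\alpha_l}$.

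I expect the main obstacle to be the careful bookkeeping around the state-$1$ and state-$2$ run edges: one must read off directly from their definition (edges are added only for $i \le k < j$ within a run) that the traversed positions are all equal to the first (respectively last) symbol of the relevant run of $p_{k_l}$, and that the Kleene Plus there can legitimately absorb the extension. Everything else is then a straightforward concatenation of the per-segment matches.
\end{claimproof}
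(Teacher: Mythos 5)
Your proof takes essentially the same approach as the paper's: both decompose the path by the state-$0$ nodes it visits, argue that each segment between consecutive such nodes contains exactly one $M'$-edge, and then use the Kleene Plus flags $f,e$ together with the run-edge structure to extend the witness $t_{i'}\cdots t_{j'}\in\lang(p_{k'})$ to the full segment. (The paper organizes the bookkeeping slightly differently — it first lists the $M'$-edges and then locates the state-$0$ nodes $v_{i_l-1}^0,v_{j_l}^0$ between them — but this is the same decomposition read in the other direction.) Your key observations, that the graph's edge types force the state-$1$/state-$2$ prefix and suffix to stay inside a single run of $t$ and that this run's symbol agrees with the symbol of the pattern's outer Kleene Plus, are exactly the ones the paper uses.
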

\begin{claimproof}
  First observe that it is not possible to reach $v_n^0$ from $v_0^0$ without using edges introduced by tuples in $M'$.
  Now fix some path $P$ from $v_0^0$ to $v_n^0$ and let $P_1, \dots, P_\ell$ be the edges on the path that are introduced by tuples in $M'$.
  Let $P_l=(v_{i'_l-1}^{f_l}, v_{j'_l}^{2e_l})$, i.e.\ $(f_l, i'_l, j'_l, e_l) \in M'$.

  Assume $j'_0=0$ and $i'_{\ell+1}=n+1$ in the following to simplify notation.
  For each tuple there are two indices $i_l$ and $j_l$ such that $j'_{l-1} \le i_l-1 \le i'_l-1$ and $j'_l \le j_l \le i'_{l+1}-1$
  and the path $P$ goes through $v_{i_l-1}^0$ and $v_{j_l}^0$.
  These nodes exist, as every path from $v_{j'_{l-1}}^{2e_{l-1}}$ to $v_{i'_l-1}^{f_l}$ has to go through some node $v_r^0$.
  We have $j_l+1 = i_{l+1}$ for all $l\in [0,\ell]$ with $j_0 = 0$ and $i_{\ell+1}=n+1$
  and hence, $t = t_{i_1} \cdots t_{j_1} t_{i_2} \cdots t_{j_2} \cdots t_{i_\ell} \cdots t_{j_\ell}$.
  Thus, it suffices to show that for every $l \in [\ell]$ there is a $k' \in [k]$ such that
  $t_{i_l} \cdots t_{j_l} \in \lang(p_{k'})$.

  We fix $l$ in the following and omit it as index to simplify notation.
  By the construction of the graph we have $(f, i', j', e) \in M'$ and hence by \cref{lem:upper:concPlus:main}
  $t_{i'} \cdots t_{j'} \in \lang(p_{k'})$ for some $k' \in [k]$.
  We extend this result and claim $t_{i} \cdots t_{j'} \in \lang(p_{k'})$.
  Recall, that there is a path from $v_{i-1}^0$ to $v_{i'-1}^{f}$ in $P$.
  If $f=0$, then $i'=i$ and the claim follows.
  Otherwise, we know that the first run of $p_{k'}$ contains a Kleene Plus for some symbol $\alpha$.
  As no edge resulting from a tuple in $M'$ can be chosen, the edge $(v_{i-1}^0, v_{i-1}^1)$ is contained in the path $P$.
  By the construction of the graph,
  the sequence $t_{i} \cdots t_{i'}$ is contained in some run $\beta^c$.
  But $\alpha = \beta$ and we get $t_{i} \cdots t_{i'-1} t_{i'} \cdots t_{j'} \in \lang(p_{k'})$.

  We can apply the symmetric argument to show that $t_i \cdots t_{j'} t_{j'+1} \cdots t_{j} \in \lang(p_{k'})$
  proving the claim.
\end{claimproof}

\newpage

\section{Graphical Representation of the Results for Membership}\label{appendix:figures}

\begin{figure}[!h]
  \begin{tikzpicture}[ipe import]
    \node[ipe node, anchor=north west]
       at (224, 624) {
         \begin{minipage}{112bp}\kern0pt
           $\Theta(n+m)$ \\
           immediate
         \end{minipage}
       };
    \node[ipe node, anchor=north west]
       at (224, 760) {
         \begin{minipage}{112bp}\kern0pt
           $\Theta(n+m)$ \\
           \cite{BackursI16}
         \end{minipage}
       };
    \node[ipe node, anchor=north west]
       at (224, 700) {
         \begin{minipage}{104bp}\kern0pt
           \color{snd} $\Theta\left(\frac{nm}{\poly \log n}\right)$ (\Conc\Star) \\
           Sec.~\ref{lowerMemb:matchToMemb}, Lem.~\ref{lem:prelim:hardness}
         \end{minipage}
       };
    \draw
      (220, 752)
       -- (208, 752)
       -- (208, 616)
       -- (220, 616);
    \node[ipe node]
       at (212, 692) {$\Star$};
    \node[ipe node]
       at (212, 756) {$\Plus$};
    \node[ipe node]
       at (212, 620) {$\Or$};
    \draw
      (220, 688)
       -- (196, 688);
    \node[ipe node, anchor=north west]
       at (148, 696) {
         \begin{minipage}{96bp}\kern0pt
           $\Theta(n+m)$ \\
           immediate
         \end{minipage}
       };
    \draw
      (144, 688)
       -- (132, 688)
       -- (132, 420)
       -- (144, 420);
    \draw
      (144, 544)
       -- (120, 544);
    \node[ipe node]
       at (124, 548) {$\Or$};
    \node[ipe node]
       at (136, 548) {$\Star$};
    \node[ipe node]
       at (136, 692) {$\Conc$};
    \node[ipe node, anchor=north west]
       at (148, 552) {
         \begin{minipage}{96bp}\kern0pt
           Simplifies \\
           Lem.~\ref{lem:prelim:simplification}
         \end{minipage}
       };
    \node[ipe node, anchor=north west]
       at (300, 628) {
         \begin{minipage}{108bp}\kern0pt
           \color{snd} $\Theta\left(\frac{nm}{\poly \log n}\right)$ (\Conc\Or\Conc) \\
           Sec.~\ref{lowerMemb:matchToMemb}, Lem.~\ref{lem:prelim:hardness}
         \end{minipage}
       };
    \node[ipe node, anchor=north west]
       at (300, 592) {
         \begin{minipage}{108bp}\kern0pt
           \color{snd} $\Theta\left(\frac{nm}{\poly \log n}\right)$ (\Conc\Star) \\
           Sec.~\ref{lowerMemb:matchToMemb}, Lem.~\ref{lem:prelim:hardness}
         \end{minipage}
       };
    \node[ipe node, anchor=north west]
       at (300, 664) {
         \begin{minipage}{108bp}\kern0pt
           \color{snd} $\Theta\left(\frac{nm}{\poly \log n}\right)$ (\Conc\Or\Plus) \\
           Sec.~\ref{lowerMemb:matchToMemb}, Lem.~\ref{lem:prelim:hardness}
         \end{minipage}
       };
    \draw
      (296, 652)
       -- (284, 652)
       -- (284, 580)
       -- (296, 580);
    \node[ipe node]
       at (288, 620) {$\Conc$};
    \node[ipe node]
       at (288, 584) {$\Star$};
    \node[ipe node]
       at (288, 656) {$\Plus$};
    \draw
      (292, 616)
       -- (272, 616);
    \node[ipe node, anchor=north west]
       at (300, 800) {
         \begin{minipage}{108bp}\kern0pt
           \color{snd} $\Theta\left(\frac{nm}{\poly \log n}\right)$ (\Conc\Plus\Conc) \\
           Sec.~\ref{lowerMemb:matchToMemb}, Lem.~\ref{lem:prelim:hardness}
         \end{minipage}
       };
    \node[ipe node, anchor=north west]
       at (300, 764) {
         \begin{minipage}{108bp}\kern0pt
           \color{snd} $\Theta\left(\frac{nm}{\poly \log n}\right)$ (\Conc\Star) \\
           Sec.~\ref{lowerMemb:matchToMemb}, Lem.~\ref{lem:prelim:hardness}
         \end{minipage}
       };
    \node[ipe node, anchor=north west]
       at (300, 728) {
         \begin{minipage}{108bp}\kern0pt
           \color{snd} $\Theta\left(\frac{nm}{\poly \log n}\right)$ (\Conc\Plus\Or) \\
           Sec.~\ref{lowerMemb:matchToMemb}, Lem.~\ref{lem:prelim:hardness}
         \end{minipage}
       };
    \draw
      (296, 788)
       -- (284, 788)
       -- (284, 716)
       -- (296, 716);
    \node[ipe node]
       at (288, 792) {$\Conc$};
    \node[ipe node]
       at (288, 756) {$\Star$};
    \node[ipe node]
       at (288, 720) {$\Or$};
    \draw
      (296, 752)
       -- (268, 752);
    \node[ipe node, anchor=north west]
       at (224, 376) {
         \begin{minipage}{112bp}\kern0pt
           $\Theta(n+m)$ \\
           immediate
         \end{minipage}
       };
    \node[ipe node, anchor=north west]
       at (224, 500) {
         \begin{minipage}{112bp}\kern0pt
           $\Theta(n+m)$ \\
           \cite{BackursI16}
         \end{minipage}
       };
    \node[ipe node, anchor=north west]
       at (224, 428) {
         \begin{minipage}{104bp}\kern0pt
           Simplifies \\
           Lem.~\ref{lem:prelim:simplification}
         \end{minipage}
       };
    \draw
      (220, 492)
       -- (208, 492)
       -- (208, 368)
       -- (220, 368);
    \node[ipe node]
       at (212, 424) {$\Star$};
    \node[ipe node]
       at (212, 496) {$\Conc$};
    \node[ipe node]
       at (212, 372) {$\Or$};
    \draw
      (220, 420)
       -- (196, 420);
    \node[ipe node, anchor=north west]
       at (148, 428) {
         \begin{minipage}{96bp}\kern0pt
           $\Theta(n+m)$ \\
           immediate
         \end{minipage}
       };
    \node[ipe node]
       at (136, 424) {$\Plus$};
    \node[ipe node, anchor=north west]
       at (300, 380) {
         \begin{minipage}{108bp}\kern0pt
           \color{own} $\Theta\left(\frac{nm}{\poly \log n}\right)$ \\
           Thm.~\ref{thm:lowerMemb:orPlusOrConc}
         \end{minipage}
       };
    \node[ipe node, anchor=north west]
       at (300, 344) {
         \begin{minipage}{108bp}\kern0pt
           Simplifies \\
           Lem.~\ref{lem:prelim:simplification}
         \end{minipage}
       };
    \node[ipe node, anchor=north west]
       at (300, 408) {
         \begin{minipage}{108bp}\kern0pt
           Simplifies \\
           Lem.~\ref{lem:prelim:simplification}
         \end{minipage}
       };
    \draw
      (296, 400)
       -- (284, 400)
       -- (284, 336)
       -- (296, 336);
    \node[ipe node]
       at (288, 372) {$\Conc$};
    \node[ipe node]
       at (288, 340) {$\Star$};
    \node[ipe node]
       at (288, 404) {$\Plus$};
    \draw
      (296, 368)
       -- (272, 368);
    \node[ipe node, anchor=north west]
       at (300, 556) {
         \begin{minipage}{108bp}\kern0pt
           $\O(n\log n + m)$ \\
           \cite{BringmannGL17}
         \end{minipage}
       };
    \node[ipe node, anchor=north west]
       at (300, 504) {
         \begin{minipage}{108bp}\kern0pt
           \color{snd} $\Theta\left(\frac{nm}{\poly \log n}\right)$ (\Conc\Star) \\
           Sec.~\ref{lowerMemb:matchToMemb}, Lem.~\ref{lem:prelim:hardness}
         \end{minipage}
       };
    \node[ipe node, anchor=north west]
       at (300, 452) {
         \begin{minipage}{108bp}\kern0pt
           Expected: \\
           $(n+m)^{1+o(1)}$ \\
           \cite{BringmannGL17}
         \end{minipage}
       };
    \draw
      (296, 544)
       -- (284, 544)
       -- (284, 436)
       -- (296, 436);
    \node[ipe node]
       at (288, 548) {$\Plus$};
    \node[ipe node]
       at (288, 496) {$\Star$};
    \node[ipe node]
       at (288, 440) {$\Or$};
    \draw
      (296, 492)
       -- (268, 492);
    \node[ipe node, anchor=north west]
       at (412, 592) {
         \begin{minipage}{100bp}\kern0pt
           \color{snd}$\Theta\left(\frac{nm}{\poly \log n}\right)$ (\Conc\Plus\Conc) \\
           Sec.~\ref{lowerMemb:matchToMemb}, Lem.~\ref{lem:prelim:hardness}
         \end{minipage}
       };
    \node[ipe node, anchor=north west]
       at (412, 556) {
         \begin{minipage}{100bp}\kern0pt
           \color{snd} $\Theta\left(\frac{nm}{\poly \log n}\right)$ (\Conc\Star) \\
           Sec.~\ref{lowerMemb:matchToMemb}, Lem.~\ref{lem:prelim:hardness}
         \end{minipage}
       };
    \node[ipe node, anchor=north west]
       at (412, 520) {
         \begin{minipage}{100bp}\kern0pt
           \color{snd} $\Theta\left(\frac{nm}{\poly \log n}\right)$ (\Conc\Plus\Or) \\
           Sec.~\ref{lowerMemb:matchToMemb}, Lem.~\ref{lem:prelim:hardness}
         \end{minipage}
       };
    \draw
      (408, 580)
       -- (396, 580)
       -- (396, 508)
       -- (408, 508);
    \node[ipe node]
       at (400, 584) {$\Conc$};
    \node[ipe node]
       at (400, 548) {$\Star$};
    \node[ipe node]
       at (400, 512) {$\Or$};
    \node[ipe node, anchor=north west]
       at (412, 448) {
         \begin{minipage}{100bp}\kern0pt
           \color{snd} $\Theta\left(\frac{nm}{\poly \log n}\right)$ (\Conc\Or\Conc) \\
           Sec.~\ref{lowerMemb:matchToMemb}, Lem.~\ref{lem:prelim:hardness}
         \end{minipage}
       };
    \node[ipe node, anchor=north west]
       at (412, 412) {
         \begin{minipage}{100bp}\kern0pt
           \color{snd} $\Theta\left(\frac{nm}{\poly \log n}\right)$ (\Conc\Star) \\
           Sec.~\ref{lowerMemb:matchToMemb}, Lem.~\ref{lem:prelim:hardness}
         \end{minipage}
       };
    \node[ipe node, anchor=north west]
       at (412, 484) {
         \begin{minipage}{100bp}\kern0pt
           \color{snd} $\Theta\left(\frac{nm}{\poly \log n}\right)$ (\Conc\Or\Plus) \\
           Sec.~\ref{lowerMemb:matchToMemb}, Lem.~\ref{lem:prelim:hardness}
         \end{minipage}
       };
    \draw
      (408, 472)
       -- (396, 472)
       -- (396, 400)
       -- (408, 400);
    \node[ipe node]
       at (400, 440) {$\Conc$};
    \node[ipe node]
       at (400, 404) {$\Star$};
    \node[ipe node]
       at (400, 476) {$\Plus$};
    \draw
      (408, 436)
       -- (380, 436)
       -- (372, 436)
       -- (360, 436);
    \draw
      (408, 544)
       -- (368, 544);
  \end{tikzpicture}
  \caption{The classification of the patterns starting with \Or\ for membership.
  The red bounds are shown in this paper while the blue ones follow as corollaries.}
  \label{fig:membershipOr}
\end{figure}
\begin{figure}[!ht]
  \begin{tikzpicture}[ipe import]
    \node[ipe node, anchor=north west]
       at (288, 452) {
         \begin{minipage}{104bp}\kern0pt
           \color{own} $\Theta\left(\frac{nm}{\poly \log n}\right)$ \\
           Sec.~\ref{lowerMemb:matchToMemb}
         \end{minipage}
       };
    \node[ipe node, anchor=north west]
       at (288, 416) {
         \begin{minipage}{104bp}\kern0pt
           \color{snd} $\Theta\left(\frac{nm}{\poly \log n}\right)$ (\Conc\Star) \\
           Sec.~\ref{lowerMemb:matchToMemb}, Lem.~\ref{lem:prelim:hardness}
         \end{minipage}
       };
    \node[ipe node, anchor=north west]
       at (288, 380) {
         \begin{minipage}{104bp}\kern0pt
           \color{own} $\Theta\left(\frac{nm}{\poly \log n}\right)$ \\
           Sec.~\ref{lowerMemb:matchToMemb}
         \end{minipage}
       };
    \draw
      (284, 440)
       -- (272, 440)
       -- (272, 368)
       -- (284, 368);
    \node[ipe node]
       at (276, 444) {\Conc};
    \node[ipe node]
       at (276, 408) {\Star};
    \node[ipe node]
       at (276, 372) {\Or};
    \draw
      (284, 404)
       -- (240, 404);
    \node[ipe node, anchor=north west]
       at (192, 412) {
         \begin{minipage}{80bp}\kern0pt
           $\Theta(n +m)$ \\
           immediate
         \end{minipage}
       };
    \node[ipe node, anchor=north west]
       at (288, 308) {
         \begin{minipage}{104bp}\kern0pt
           \color{own} $\Theta\left(\frac{nm}{\poly \log n}\right)$ \\
           Sec.~\ref{lowerMemb:matchToMemb}
         \end{minipage}
       };
    \node[ipe node, anchor=north west]
       at (288, 272) {
         \begin{minipage}{104bp}\kern0pt
           \color{snd} $\Theta\left(\frac{nm}{\poly \log n}\right)$ (\Conc\Star) \\
           Sec.~\ref{lowerMemb:matchToMemb}, Lem.~\ref{lem:prelim:hardness}
         \end{minipage}
       };
    \node[ipe node, anchor=north west]
       at (288, 344) {
         \begin{minipage}{104bp}\kern0pt
           \color{own} $\Theta\left(\frac{nm}{\poly \log n}\right)$ \\
           Sec.~\ref{lowerMemb:matchToMemb}
         \end{minipage}
       };
    \draw
      (284, 332)
       -- (272, 332)
       -- (272, 260)
       -- (284, 260);
    \node[ipe node]
       at (276, 300) {\Conc};
    \node[ipe node]
       at (276, 264) {\Star};
    \node[ipe node]
       at (276, 336) {\Plus};
    \draw
      (284, 296)
       -- (240, 296);
    \node[ipe node, anchor=north west]
       at (192, 304) {
         \begin{minipage}{80bp}\kern0pt
           $\Theta(n + m)$ \\
           immediate
         \end{minipage}
       };
    \node[ipe node, anchor=north west]
       at (192, 360) {
         \begin{minipage}{80bp}\kern0pt
           \color{own} $\Theta\left(\frac{nm}{\poly \log n}\right)$ \\
           Sec.~\ref{lowerMemb:matchToMemb}
         \end{minipage}
       };
    \draw
      (188, 404)
       -- (176, 404)
       -- (176, 300)
       -- (188, 300);
    \node[ipe node, anchor=north west]
       at (116, 360) {
         \begin{minipage}{80bp}\kern0pt
           $\Theta(n+m)$ \\
           immediate
         \end{minipage}
       };
    \draw
      (188, 352)
       -- (164, 352);
    \node[ipe node]
       at (96, 356) {\Conc};
    \node[ipe node]
       at (180, 356) {\Star};
    \node[ipe node]
       at (180, 412) {\Plus};
    \node[ipe node]
       at (180, 304) {\Or};
    \draw
      (112, 352)
       -- (88, 352);
    \node[ipe node, anchor=north west]
       at (116, 448) {
         \begin{minipage}{80bp}\kern0pt
           Simplifies \\
           Lem.~\ref{lem:prelim:simplification}
         \end{minipage}
       };
    \draw
      (112, 440)
       -- (88, 440);
    \node[ipe node]
       at (96, 444) {\Star};
    \draw
      (112, 532)
       -- (100, 532)
       -- (100, 724)
       -- (112, 724);
    \node[ipe node]
       at (88, 624) {$\Plus$};
    \node[ipe node]
       at (104, 624) {$\Star$};
    \node[ipe node, anchor=north west]
       at (116, 628) {
         \begin{minipage}{96bp}\kern0pt
           Simplifies \\
           Lem.~\ref{lem:prelim:simplification}
         \end{minipage}
       };
    \node[ipe node, anchor=north west]
       at (192, 580) {
         \begin{minipage}{112bp}\kern0pt
           Simplifies \\
           Lem.~\ref{lem:prelim:simplification}
         \end{minipage}
       };
    \node[ipe node, anchor=north west]
       at (192, 548) {
         \begin{minipage}{112bp}\kern0pt
           Word Break\\
           $\Theta(nm^{1/3}+m)$ \\
           \cite{BackursI16}
         \end{minipage}
       };
    \node[ipe node, anchor=north west]
       at (192, 500) {
         \begin{minipage}{104bp}\kern0pt
           Simplifies \\
           Lem.~\ref{lem:prelim:simplification}
         \end{minipage}
       };
    \draw
      (188, 572)
       -- (176, 572)
       -- (176, 492)
       -- (188, 492);
    \node[ipe node]
       at (180, 496) {$\Star$};
    \node[ipe node]
       at (180, 536) {$\Conc$};
    \node[ipe node]
       at (180, 576) {$\Plus$};
    \draw
      (188, 532)
       -- (160, 532);
    \node[ipe node, anchor=north west]
       at (116, 540) {
         \begin{minipage}{96bp}\kern0pt
           $\Theta(n+m)$ \\
           immediate
         \end{minipage}
       };
    \node[ipe node]
       at (104, 536) {$\Or$};
    \node[ipe node, anchor=north west]
       at (288, 544) {
         \begin{minipage}{108bp}\kern0pt
           \color{snd} $\Theta\left(\frac{nm}{\poly \log n}\right)$ (\Conc\Star) \\
           Sec.~\ref{lowerMemb:matchToMemb}, Lem.~\ref{lem:prelim:hardness}
         \end{minipage}
       };
    \node[ipe node, anchor=north west]
       at (288, 596) {
         \begin{minipage}{108bp}\kern0pt
           \color{own} $\frac{nm}{2^{\Omega(\sqlog{\min(n,m)})}}$ \\
           Thm.~\ref{thm:upper:main}
         \end{minipage}
       };
    \node[ipe node, anchor=north west]
       at (288, 488) {
         \begin{minipage}{108bp}\kern0pt
           \color{own} $\frac{nm}{2^{\Omega(\sqlog{\min(n,m)})}}$ \\
           Thm.~\ref{thm:upper:main}
         \end{minipage}
       };
    \draw
      (284, 584)
       -- (272, 584)
       -- (272, 476)
       -- (284, 476);
    \node[ipe node]
       at (276, 588) {$\Plus$};
    \node[ipe node]
       at (276, 536) {$\Star$};
    \node[ipe node]
       at (276, 480) {$\Or$};
    \draw
      (284, 532)
       -- (260, 532);
    \node[ipe node, anchor=north west]
       at (192, 664) {
         \begin{minipage}{112bp}\kern0pt
           $\Theta(n+m)$ \\
           immediate
         \end{minipage}
       };
    \node[ipe node, anchor=north west]
       at (192, 784) {
         \begin{minipage}{112bp}\kern0pt
           $\Theta(n+m)$ \\
           \cite{BackursI16}
         \end{minipage}
       };
    \node[ipe node, anchor=north west]
       at (192, 736) {
         \begin{minipage}{104bp}\kern0pt
           \color{snd} $\Theta\left(\frac{nm}{\poly \log n}\right)$ (\Conc\Star) \\
           Sec.~\ref{lowerMemb:matchToMemb}, Lem.~\ref{lem:prelim:hardness}
         \end{minipage}
       };
    \draw
      (188, 776)
       -- (176, 776)
       -- (176, 656)
       -- (188, 656);
    \node[ipe node]
       at (180, 728) {$\Star$};
    \node[ipe node]
       at (180, 780) {$\Plus$};
    \node[ipe node]
       at (180, 660) {$\Or$};
    \draw
      (188, 724)
       -- (164, 724);
    \node[ipe node, anchor=north west]
       at (116, 732) {
         \begin{minipage}{96bp}\kern0pt
           $\Theta(n+m)$ \\
           immediate
         \end{minipage}
       };
    \node[ipe node]
       at (104, 728) {$\Conc$};
    \node[ipe node, anchor=north west]
       at (288, 668) {
         \begin{minipage}{108bp}\kern0pt
           \color{snd} $\Theta\left(\frac{nm}{\poly \log n}\right)$ (\Conc\Or\Conc) \\
           Sec.~\ref{lowerMemb:matchToMemb}, Lem.~\ref{lem:prelim:hardness}
         \end{minipage}
       };
    \node[ipe node, anchor=north west]
       at (288, 632) {
         \begin{minipage}{108bp}\kern0pt
           \color{snd} $\Theta\left(\frac{nm}{\poly \log n}\right)$ (\Conc\Star) \\
           Sec.~\ref{lowerMemb:matchToMemb}, Lem.~\ref{lem:prelim:hardness}
         \end{minipage}
       };
    \node[ipe node, anchor=north west]
       at (288, 704) {
         \begin{minipage}{108bp}\kern0pt
           \color{snd} $\Theta\left(\frac{nm}{\poly \log n}\right)$ (\Conc\Or\Plus) \\
           Sec.~\ref{lowerMemb:matchToMemb}, Lem.~\ref{lem:prelim:hardness}
         \end{minipage}
       };
    \draw
      (284, 692)
       -- (272, 692)
       -- (272, 620)
       -- (284, 620);
    \node[ipe node]
       at (276, 660) {$\Conc$};
    \node[ipe node]
       at (276, 624) {$\Star$};
    \node[ipe node]
       at (276, 696) {$\Plus$};
    \draw
      (284, 656)
       -- (240, 656);
    \node[ipe node, anchor=north west]
       at (288, 824) {
         \begin{minipage}{108bp}\kern0pt
           \color{snd} $\Theta\left(\frac{nm}{\poly \log n}\right)$ (\Conc\Plus\Conc) \\
           Sec.~\ref{lowerMemb:matchToMemb}, Lem.~\ref{lem:prelim:hardness}
         \end{minipage}
       };
    \node[ipe node, anchor=north west]
       at (288, 788) {
         \begin{minipage}{108bp}\kern0pt
           \color{snd} $\Theta\left(\frac{nm}{\poly \log n}\right)$ (\Conc\Star) \\
           Sec.~\ref{lowerMemb:matchToMemb}, Lem.~\ref{lem:prelim:hardness}
         \end{minipage}
       };
    \node[ipe node, anchor=north west]
       at (288, 752) {
         \begin{minipage}{108bp}\kern0pt
           \color{snd} $\Theta\left(\frac{nm}{\poly \log n}\right)$ (\Conc\Plus\Or) \\
           Sec.~\ref{lowerMemb:matchToMemb}, Lem.~\ref{lem:prelim:hardness}
         \end{minipage}
       };
    \draw
      (284, 812)
       -- (272, 812)
       -- (272, 740)
       -- (284, 740);
    \node[ipe node]
       at (276, 816) {$\Conc$};
    \node[ipe node]
       at (276, 780) {$\Star$};
    \node[ipe node]
       at (276, 744) {$\Or$};
    \draw
      (284, 776)
       -- (236, 776);
    \node[ipe node, anchor=north west]
       at (396, 632) {
         \begin{minipage}{108bp}\kern0pt
           \color{snd}$\Theta\left(\frac{nm}{\poly \log n}\right)$ (\Conc\Plus\Conc) \\
           Sec.~\ref{lowerMemb:matchToMemb}, Lem.~\ref{lem:prelim:hardness}
         \end{minipage}
       };
    \node[ipe node, anchor=north west]
       at (396, 596) {
         \begin{minipage}{108bp}\kern0pt
           \color{snd} $\Theta\left(\frac{nm}{\poly \log n}\right)$ (\Conc\Star) \\
           Sec.~\ref{lowerMemb:matchToMemb}, Lem.~\ref{lem:prelim:hardness}
         \end{minipage}
       };
    \node[ipe node, anchor=north west]
       at (396, 560) {
         \begin{minipage}{108bp}\kern0pt
           \color{snd} $\Theta\left(\frac{nm}{\poly \log n}\right)$ (\Conc\Plus\Or) \\
           Sec.~\ref{lowerMemb:matchToMemb}, Lem.~\ref{lem:prelim:hardness}
         \end{minipage}
       };
    \draw
      (392, 620)
       -- (380, 620)
       -- (380, 548)
       -- (392, 548);
    \node[ipe node]
       at (384, 624) {$\Conc$};
    \node[ipe node]
       at (384, 588) {$\Star$};
    \node[ipe node]
       at (384, 552) {$\Or$};
    \node[ipe node, anchor=north west]
       at (396, 488) {
         \begin{minipage}{108bp}\kern0pt
           \color{snd} $\Theta\left(\frac{nm}{\poly \log n}\right)$ (\Conc\Or\Conc) \\
           Sec.~\ref{lowerMemb:matchToMemb}, Lem.~\ref{lem:prelim:hardness}
         \end{minipage}
       };
    \node[ipe node, anchor=north west]
       at (396, 452) {
         \begin{minipage}{108bp}\kern0pt
           \color{snd} $\Theta\left(\frac{nm}{\poly \log n}\right)$ (\Conc\Star) \\
           Sec.~\ref{lowerMemb:matchToMemb}, Lem.~\ref{lem:prelim:hardness}
         \end{minipage}
       };
    \node[ipe node, anchor=north west]
       at (396, 524) {
         \begin{minipage}{108bp}\kern0pt
           \color{snd} $\Theta\left(\frac{nm}{\poly \log n}\right)$ (\Conc\Or\Plus) \\
           Sec.~\ref{lowerMemb:matchToMemb}, Lem.~\ref{lem:prelim:hardness}
         \end{minipage}
       };
    \draw
      (392, 512)
       -- (380, 512)
       -- (380, 440)
       -- (392, 440);
    \node[ipe node]
       at (384, 480) {$\Conc$};
    \node[ipe node]
       at (384, 444) {$\Star$};
    \node[ipe node]
       at (384, 516) {$\Plus$};
    \draw
      (392, 476)
       -- (356, 476);
    \draw
      (392, 584)
       -- (356, 584);
    \draw
      (112, 620)
       -- (88, 620);
  \end{tikzpicture}
  \caption{The classification of the patterns starting with \Plus, \Star, or \Conc\ for membership.
  The red bounds are shown in this paper while the blue ones follow as corollaries.}
  \label{fig:membershipPlus}
  \label{fig:membershipStar}
  \label{fig:membershipConc}
\end{figure}

\end{document}